\newtheorem{prop}{Proposition}
\newtheorem{defin}{Definition}
\begin{document}

\title{Distributed sensor failure detection in sensor networks}

\author{Tamara~To\v{s}i\'{c}, Nikolaos~Thomos\thanks{This work has been partly supported by the FNS project number PZ00P2-121906.} and Pascal~Frossard\\
{E}cole Polytechnique F\'{e}d\'{e}rale de Lausanne (EPFL)\\
Signal Processing Laboratory (LTS4), Lausanne, 1015-Switzerland\\
E-mail: \{tamara.tosic,nikolaos.thomos,pascal.frossard\}@epfl.ch.
} 


\date{\today}

\maketitle

\thispagestyle{empty}

\begin{abstract}
We investigate the problem of distributed sensors' failure detection in networks with a small number of defective sensors, whose measurements differ significantly from neighboring sensor measurements. Defective sensors are represented by non-zero values in binary sparse signals. We build on the sparse nature of the binary sensor failure signals and propose a new distributed detection algorithm based on Group Testing (GT). The distributed GT algorithm estimates the set of defective sensors from a small number of linearly independent binary messages exchanged by the sensors. The distributed GT algorithm uses a low complexity distance decoder that is robust to noisy messages. We first consider networks with only one defective sensor and determine the minimal number of linearly independent messages needed for detection of the defective sensor with high probability. We then extend our study to the detection of multiple defective sensors by modifying appropriately the message exchange protocol and the decoding procedure. We show through experimentation that, for small and medium sized networks, the number of messages required for successful detection is actually smaller than the minimal number computed in the analysis. Simulations demonstrate that the proposed method outperforms methods based on random walk measurements collection in terms of detection performance and convergence rate. Finally, the proposed method is resilient to network dynamics due to the effective gossip-based message dissemination protocol. 
\end{abstract}

\section{Introduction}\label{sec:intro}
Over the past years we have witnessed the emergence of simple and low cost sensors. This has led to wide deployment of sensor networks for monitoring 
signals in numerous applications, for example in medical applications or natural hazard detection. However, sensor networks have often a dynamic architecture with loose coordination due to the cost of communications. This raises new demands for collaborative data processing algorithms that are effective under network topology and communication constraints. In general, a sensor network is represented as a connected graph $\mathcal{G} = (\mathcal{V},\mathcal{E})$, where vertices $\mathcal{V}={\{s_i\}}_{i=1}^{S}$ stand for the $S$ sensors and edges $\mathcal{E}$ determine sensors' connectivity. For instance, if two sensors $s_i$ and $s_j$ lie within each other's communication range, the edge $e_{i,j} \in \mathcal{E}$ has a nonzero value.    
 Fig. \ref{fig:illus} illustrates a setup where sensors capture a smooth physical phenomenon (e.g., spatial temperature evolution) and generate messages that are eventually gathered for analysis. 
\begin{figure}[h!]
\begin{center}
\includegraphics[width=7cm]{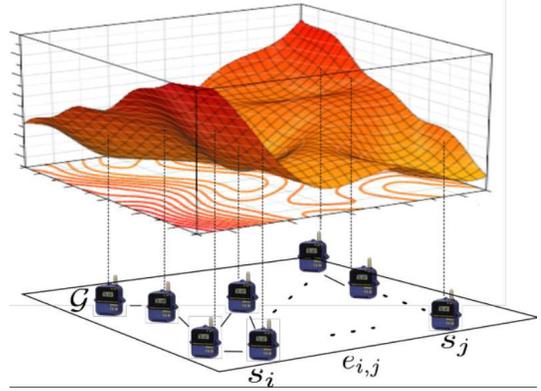}
\end{center}
\caption{\label{fig:illus}Ad-hoc sensor network measuring a smooth physical phenomenon.}
\end{figure}
When a sensor is defective, its measurements are inaccurate and can contaminate the signal analysis. It thus becomes important to detect the defective sensors in the network, so that their erroneous values do not impact the accuracy of the underlying data processing applications.

The detection literature can be mostly classified into centralized and distributed methods. Most of the works on detection methods for binary sparse signals mainly deal with centralized systems. The pioneering work in \cite{Dorfman:43} targets medical applications. It proposes a simple idea of pooling blood samples and observing the viral presence in a set, instead of performing tests on every single blood sample separately. Typically, the main target is to minimize the number of \textit{tests} required to identify all the infected samples, while keeping the detection procedure as simple as possible. This paradigm is known as Group Testing (GT) and has been proposed half a century ago. GT has been studied more recently in the context of sensor networks for detection of malicious events \cite{Young:11}. The detection approaches differ in scenarios with errors, inhibitors or combinations of them and the detection algorithms are rather naive \cite{Chen:08}. Defective sensors are detected  by iterative elimination of identified non-defective sensors from the test outcomes. The detection time is typically of order $\mathcal{O}(SB)$, where $B$ is the number of tests and $S$ is the total number of sensors. Particular test design methods improve the effective time for detection in centralized systems. For example, a useful test matrix property called $K$-disjunctness property (i.e., the Boolean sum of every $K$ columns does not result in any other column), speeds up the decoding process. This property is used in code designs, for e.g., for superimposed codes \cite{Dai:2009}, \cite{DeBonis:03}. Further, a random efficient detection is proposed in \cite{Indyk:10} with a decoding time of $\mathcal{O}(\pi(B)\cdot B \log^2 B + \mathcal{O} (B^2))$, where $B=\mathcal{O} (K^2 \log S)$ is the number of tests and $\pi$ denotes a polynomial. In our knowledge, this represents the state-of-the-art decoding performance.

In sensor networks, test design is contingent to the communication limitations. Works that consider constraints imposed by the sensor network topology in GT framework are not numerous. The authors in \cite{Cheraghchi:10} propose to form tests by a random walk process on well-connected graphs. The minimal number of tests required for detection in this case depends on the random walk mixing time. A bipartite graph structure is considered in \cite{Mezard:07} with a two-stage hybrid detection method. Here, a subset of defective items in the first stage is determined by pre-designed tests, while the remaining items are tested individually in the next step. Data retrieval for topology-adaptive GT is studied in \cite{Hong:04} where a binary tree splitting algorithm is proposed. The above methods use centralized decision algorithms which are not appropriate for large-scale sensor networks or networks with a dynamic topology because of the high communication costs. In those scenarios one rather needs to use distributed detection methods. To the best of our knowledge, however, no analysis on distributed detection methods which consider sparse and binary test signals are available. The distributed methods are rather employed for non-sparse signal detection with explicit network and message constraints. Such methods generally employ statistical decoders \cite{Varshney:97}.
For example, a Bayesian approach in \cite{Tsitsiklis:93} proposes to compute a detection score for a priori defined sets of hypothesis, which depends on the received messages. The hypothesis with the highest score drives the decision. The binary event detection problem for hierarchically clustered networks is proposed in \cite{Tian:07} where the cluster decisions are fused to make a final decision. Surveys on similar methods can be found in \cite{Viswanathan:97,Blum:97}. 

In this paper, we propose a novel distributed sensors' failure detection method that employs a simple distance decoder for sparse and binary signals. We assume that at most $K$ sensors are defective out of $S$ sensors in the network, where $K\ll S$. Therefore, the defective sensor identification problem boils down to a sparse binary signal recovery, where nonzero signal values correspond to defective sensors. Our approach is based on GT methods that are commonly applied for centralized systems. The core idea is to perform low-cost experiments in the network, called \textit{tests}, in order to detect the defective sensors. The tests are performed on pools of sensors by a set of sensors called \textit{master} sensors. The master sensors request sensor measurements from their neighbors. Each sensor responds to this request with probability $q$. Due to the smoothness of the measured function, non erroneous neighbor sensors typically have similar measurements. Each master sensor compares the sensor measurements based on a similarity measure (e.g., thresholding) to detect the presence of defective sensors in its vicinity. The result of this test takes a binary value, which might be possibly altered by noise. The tests and their outputs together form the network \textit{messages} that are communicated to neighborhood of the master nodes. The messages in the sensors are then disseminated in the network with a gossip algorithm (rumor mongering) \cite{Dimakis:10} that follows a pull protocol \cite{Demers:87}, \cite{Karp00}, \cite{Deb:2006}. Each time a new message reaches the sensor, its value is linearly combined with the message available at the current sensor in order to increase the diversity of information in the network. The message design and dissemination phases are repeated for several rounds. Due to the probabilistic test design and message dissemination we employ a simple distance decoder (e.g., Hamming decoder) that is able to detect defective sensors, as long as the number of messages is sufficient. We analyze the detection failure bounds and analytically derive the conditions needed for successful failure detection in the case of a single defective sensor. Then, we provide the error bounds for detection of multiple defective sensors. We show that the number of linearly independent messages required for detection is smaller in practice than the theoretical bounds obtained in our worst case analysis. We finally provide simulation results in regular and irregular networks. The experiments outline the advantages of the proposed detection method compared to other binary signal detection algorithms based on the random walk measurements gathering. Our algorithm outperforms random walk detection methods both in terms of the detection accuracy and convergence rate. 

This paper is organized as follows. Section \ref{sec:centr_det} reviews the centralized Group Testing framework. Section \ref{sec:decoder} proposes a novel distributed detection method. It describes the message formation and dissemination processes in sensor networks and discusses the detection problem for single and multiple defective sensors. Section \ref{sec:exper} presents the simulation results.

\section{Centralized detection with probabilistic Group Testing}\label{sec:centr_det}
We first review the centralized detection of sensor failures with methods based on GT. This framework is the ground for the novel distributed GT algorithm discussed in the next section. Detection is the identification of a subset of defective sensors whose measurements deviate significantly from those of the sensors in their vicinity. Based on the test construction, the methods for detection are categorized into \textit{deterministic} and \textit{probabilistic} algorithms. General centralized deterministic GT methods assign each sensor  to the set of tests prior to performing them, where the tests are designed to  assure detection. This approach however is not feasible for networks with large number of sensors. To alleviate this, probabilistic GT has been proposed in \cite{Cheraghchi:11}. We focus on test design methods that do not use the knowledge of realized test outcomes for novel test designs, since they are more appropriate in realistic settings.
 
Hereafter, we adopt the following notation: matrices and vectors are represented with boldface capital letters (\textbf{M}, \textbf{m}) and their elements are given with lowercase letters ($M_{i,j}, m_i$). Calligraphic letters are used to denote sets ($\mathcal{G}$), while $|\cdot|$ represents the number of elements in a set. The $i$-th column and the $i$-th row of $\mathbf{M}$ are represented with $\mathbf{M}_{:,i}$ and $\mathbf{M}_{i,:}$, respectively.
 
GT aims at detecting defective items in the set based on the outcome of binary tests. Nonzero entries of a $S$-dimensional binary vector $\mathbf{f}\in \mathbb{F}_2^S$ indicate the defective sensors. $\mathbb{F}_2$ is a finite field of size two and $\mathbf{f}$ is a $K$-sparse signal, where $K \ll S$. The tests preformed on sensor measurements are represented with a $B\times S$ dimensional matrix $\mathbf{W}$. The nonzero entries of $\mathbf{W}_{i,:}\in\mathbb{F}_2^S$ refer to the sensors that participate in the $i$-th test. The boolean matrix multiplication operator is denoted with $\otimes$. Then, the binary tests results are denoted with the test outcome vector $\mathbf{g}\in \mathbb{F}_2^B$:
\begin{equation}\label{eq:gt}
\bf{g} = \mathbf{W} \otimes \bf{f}.
\end{equation} 

The design of the matrix $\mathbf{W}$ is crucial for reducing the number of required tests for the detection of defective sensors. This design resembles the design of generator matrices of LDPC codes \cite{Gallager:62}. In the Tanner graph representation of LDPC codes, the LDPC encoded symbols are partitioned in check and variable nodes, where the check nodes are used to detect errors introduced during transmission of LDPC encoded symbols. Motivated by this similarity, the test matrix $\mathbf{W}$ is constructed as \cite{Cheraghchi:11}:
\begin{equation}\label{eq:weight_prob}
W_{i,j}=\left\{ \begin{array}{l l}
 1, & \mbox{with probability $q$,}\\
 0, & \mbox{otherwise.}\\ 
 \end{array} \right.
\end{equation}
The sensor participation probability is denoted with $q$. Such a design for the test matrix assures that with high probability, any test matrix column is not a subset of any union of up to $K$ columns (\textit{disjunctness} property). In other words, a matrix $\mathbf{W}$ is called $K$-disjunct if no column $\mathbf{W}_{:,i}$ of $\mathbf{W}$ lies in the sub-space formed by any set of $K$ columns $\mathbf{W}_{:,j}$ with $j \neq i$. This property enables fast decoding with a distance decoder (i.e., Hamming distance). 
The distance decoder exploits the knowledge of the test outcome vector $\mathbf{g}$ and the test matrix or the seed of the pseudorandom generator that has been used for generating the random test matrix. Next, we discuss in more details the disjunctnesss property and the detection probability in centralized GT, since they represent the starting point of the decentralized detection method proposed in the next section.

We first formally define the \textit{disjunctness} property \cite{Cheraghchi:11} of test matrices that results in low-cost detection. This property assures that the union of any set of at most $K$ different columns of $\mathbf{W}$ differs in at least $\epsilon$ positions from any other column of $\mathbf{W}$. 
\begin{defin}
\label{def:disjunct}
\textit{Disjunctness property}: A boolean matrix $\mathbf{W}$ with $S$ columns $\mathbf{W}_{:,1},\mathbf{W}_{:,2},\dots,\mathbf{W}_{:,S}$ is called $(K,\epsilon)$-disjunct if, for every subset $T$ of its columns, with $|T|\le K$:
\begin{equation}
\mid supp(\mathbf{W}_{:,i}) \backslash (\hspace{-2mm}\bigcup_{ j\in T\backslash\{i\} } \hspace{-2mm} supp(\mathbf{W}_{:,j}) ) \mid > \epsilon, \quad \forall i\in \{1,\dots,S\}
\end{equation}
where $supp(\mathbf{W}_{:,i})$ denotes the nonzero elements (support) of the column $\mathbf{W}_{:,i}$ and $\backslash$ is the set difference operator. 
\end{defin}

Disjunctness is an important property since it permits to analyze the detection probability. The connection between the structure of disjunct matrices and detection of defective items is given by the following proposition \cite{Cheraghchi:11}.
\begin{prop}
\label{def:noise_problem}
If the test matrix $\mathbf{W}$ fulfills a $(K,\epsilon)$-disjunct property, the detection problem is resolved in the $K$-sparse vector $\mathbf{f}$ with error parameter $\epsilon$. 
\end{prop}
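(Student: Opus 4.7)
The plan is to exhibit a Hamming--distance decoder whose decision is provably consistent with the defective set, using the margin guaranteed by the $(K,\epsilon)$-disjunctness of $\mathbf{W}$.

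First, I would analyze the noiseless case. Fix the defective set $\mathcal{D} = supp(\mathbf{f})$ with $|\mathcal{D}| \le K$; by the boolean product in (\ref{eq:gt}), $supp(\mathbf{g}) = \bigcup_{j\in \mathcal{D}} supp(\mathbf{W}_{:,j})$. For each candidate sensor $i$, define the decoder statistic
\begin{equation*}
d_i \;=\; | \, supp(\mathbf{W}_{:,i}) \setminus supp(\mathbf{g}) \, |,
\end{equation*}
i.e., the number of tests in which $i$ participates but whose outcome is $0$. If $i \in \mathcal{D}$, then $supp(\mathbf{W}_{:,i}) \subseteq supp(\mathbf{g})$ and hence $d_i = 0$. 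If $i \notin \mathcal{D}$, I would invoke Definition~\ref{def:disjunct} with $T = \mathcal{D}$: since $i \notin T$ we have $T \setminus \{i\} = T$ and $|T| \le K$, so the disjunctness bound yields $d_i > \epsilon$. Declaring $i$ defective precisely when $d_i \le \epsilon/2$ therefore recovers $\mathcal{D}$ exactly.

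Second, I would extend the argument to noisy test outcomes. Suppose that up to $\tau$ coordinates of $\mathbf{g}$ are flipped by noise. For a defective $i$, the statistic $d_i$ can grow by at most $\tau$, since every bit of $supp(\mathbf{W}_{:,i})$ that was a $1$ in $\mathbf{g}$ and is now a $0$ adds at most one to the count. For a non-defective $i$, $d_i$ can decrease by at most $\tau$, since every $0$-to-$1$ flip of $\mathbf{g}$ removes at most one previously counted position. Consequently, defective sensors continue to satisfy $d_i \le \tau$ while non-defective sensors satisfy $d_i \ge \epsilon + 1 - \tau$, and the threshold rule remains correct whenever $2\tau \le \epsilon$. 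This is the sense in which $\epsilon$ quantifies the tolerable error budget of the decoder.

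The main obstacle, I expect, is interpreting the clause ``the detection problem is resolved with error parameter $\epsilon$'': the natural reading is that $\epsilon$ controls noise robustness, and the work lies in translating the structural margin inside Definition~\ref{def:disjunct} into an operational Hamming threshold usable by the decoder. A smaller but essential subtlety is to apply the definition with $T = \mathcal{D}$ rather than with $T = \mathcal{D} \cup \{i\}$ when $i$ is non-defective, so that the full budget $|\mathcal{D}| \le K$ is exploited; this single observation is what ties the abstract disjunctness property to the concrete guarantee $d_i > \epsilon$ needed for the threshold rule above.
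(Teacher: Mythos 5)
The paper does not actually prove this proposition: it imports it verbatim from \cite{Cheraghchi:11} and only describes the resulting decoder in (\ref{eq:distdec}). Your argument is therefore a self-contained reconstruction of the standard disjunctness-implies-decodability proof, and it is essentially correct: the identity $supp(\mathbf{g})=\bigcup_{j\in\mathcal{D}}supp(\mathbf{W}_{:,j})$ from (\ref{eq:gt}), the observation that $d_i=0$ for $i\in\mathcal{D}$, and the invocation of Definition~\ref{def:disjunct} with $T=\mathcal{D}$ (so that $T\setminus\{i\}=T$ for non-defective $i$, which is indeed the step that makes the full budget $|T|\le K$ usable) give the separation $d_i>\epsilon$ for $i\notin\mathcal{D}$. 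The one place where you diverge from the paper's intended reading is the noise model and hence the threshold. You assume symmetric flips of the outcome vector $\mathbf{g}$, which forces the two-sided margin argument and the conservative threshold $d_i\le\epsilon/2$ valid when $2\tau\le\epsilon$. The paper's noise (Fig.~\ref{fig:noise_bit_alternation}) only turns nonzero entries of the test matrix into zeros; under that one-sided model $supp(\mathbf{g})$ can only shrink, so the non-defective statistic never decreases, the entire budget $\epsilon$ can be spent on the defective side, and the decoder of (\ref{eq:distdec}) correctly uses the threshold $\epsilon$ rather than $\epsilon/2$ (consistent with the later choice $\epsilon=(1+\delta)(1-p)qB$ in (\ref{eq:epsilon}), which is a high-probability bound on the flips per column). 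So your proof establishes the proposition under a more general (two-sided) corruption model at the cost of a factor of two in the tolerated noise, whereas the paper's version exploits the one-sidedness to keep the threshold at $\epsilon$; both are valid readings of the loosely stated ``error parameter $\epsilon$.''
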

The disjunct matrix parameter $\epsilon$ represents the distance decoder threshold for detection. The decoder accumulates the number of entries in a column of the $(K,\epsilon)$-disjunct test matrix that are different from the outcome vector $\mathbf{g}$. The columns of $\mathbf{W}$ that achieve the lowest Hamming distance correspond to defective sensors. For any column $\mathbf{W}_{:,i}$ of the test matrix $\mathbf{W}$ that is $(K,\epsilon)$-disjunct, the decoder verifies if:
\begin{equation}\label{eq:distdec}
\mid supp(\mathbf{W}_{:,i}) \backslash supp(\mathbf{g})\mid \le \epsilon,
\end{equation}
where $\mathbf{g}=\mathbf{W}\otimes\mathbf{f}$ is the vector of test outcomes. In other words, the decoder counts the number of positions in the column $\mathbf{W}_{:,i}$ for which the union of distinct columns differs from the set $T$ in order to detect defective items. The columns of the vector $\mathbf{f}$ are inferred as nonzero iff the inequality (\ref{eq:distdec}) holds.

Finally, the detection performance can also be analyzed in noisy settings, when the test matrix satisfies disjunctness property. The noisy settings results from the alternation of the nonzero entries in $\mathbf{W}$ with probability $1-p$, as represented in Fig. (\ref{fig:noise_bit_alternation}). 
\begin{figure}[h!]
\begin{center}
\includegraphics[width=0.35\columnwidth]{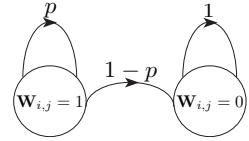}
\end{center}
\caption{\label{fig:noise_bit_alternation}Representation of noise influence of binary symbols in the test message. Non-zero values in the test matrix are flipped with probability $1-p$.}
\end{figure}

The following proposition provides the required number of measurements in centralized detection for successful decoding with the distance decoder in noisy settings \cite{Cheraghchi:11}.
\begin{prop}\label{th:nummeasur} 
Let the test matrix $\mathbf{W}$ be $(K,\epsilon)$-disjunct. The distance decoder successfully detects the correct support with overwhelming probability for a $K$-sparse vector $\mathbf{f}$ in a noisy environment when the number of tests is equal to 
$B=\mathcal{O}(K \log(S)/p^3).$
\end{prop}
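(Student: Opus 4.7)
The plan is to combine concentration of the distance-decoder statistic around its expectation with a union bound over the $S$ sensor indices, in the spirit of the original probabilistic group-testing argument. The goal is to show that in expectation the Hamming score of a defective column is strictly smaller than that of a non-defective one, and that both scores concentrate sharply enough for a single threshold to separate them over all $S$ candidates simultaneously.

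First, fix the defective set $T=supp(\mathbf{f})$ with $|T|=K$, choose the per-entry probability $q=\Theta(1/K)$, and let $\tilde{\mathbf{W}}$ denote the noise-corrupted matrix so that $\mathbf{g}=\tilde{\mathbf{W}}\otimes\mathbf{f}$. For a single row $j$, the two per-row probabilities at the heart of the argument are $\Pr[W_{j,i}=1,\,g_j=0]=q(1-pq)^K$ when $i\notin T$, and $\Pr[W_{j,i}=1,\,g_j=0]=q(1-p)(1-pq)^{K-1}$ when $i\in T$; the extra $(1-p)$ in the defective case forces the noise to also erase the bit $\tilde W_{j,i}$. Their difference is $\Theta(qp)=\Theta(p/K)$, which is the expected per-row gap between the non-defective and defective Hamming scores.

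A Chernoff bound on the sum over the $B$ rows then guarantees that, with probability at least $1-\exp(-\Omega(Bqp^2))$, the empirical score lies within a quarter of the gap from its mean. A union bound over the $S$ column indices turns this into the requirement $Bqp^2=\Omega(\log S)$, i.e.\ $B=\Omega(K\log S/p^2)$. An additional factor of $1/p$ arises from calibrating the disjunctness margin $\epsilon$ in Definition \ref{def:disjunct} so that the decoder threshold in (\ref{eq:distdec}) sits strictly between the two means; verifying $(K,\epsilon)$-disjunctness with this shrinking $\epsilon$ through a second Chernoff estimate costs the extra factor of $p$ in the exponent, yielding the stated $B=\mathcal{O}(K\log S/p^3)$.

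The hard part is the last bookkeeping step: the disjunctness threshold $\epsilon$, the decoder threshold in (\ref{eq:distdec}), and the concentration radius must all be aligned so that the non-defective score uniformly exceeds $\epsilon$ while the defective score uniformly falls below it. Once the right variant of Chernoff's inequality (using the small variance $\mu=\Theta(Bqp)$ rather than a loose Hoeffding bound) is invoked and these constants are matched, the argument reduces to the standard probabilistic-construction proof that supports Proposition \ref{def:noise_problem}.
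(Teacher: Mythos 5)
Your argument is essentially correct, but it takes a different route from the one the paper relies on, so a comparison is in order. The paper itself does not prove Proposition \ref{th:nummeasur}; it imports it from \cite{Cheraghchi:11}, and the underlying argument only resurfaces later, decomposed into three pieces: (i) the random matrix of Eq.~(\ref{eq:weight_prob}) is $(K,\epsilon)$-disjunct with high probability, via a Chernoff bound on the number of ``good'' rows in which column $i$ is the sole participant among the defective set (Proposition \ref{prop:one}); (ii) the number of noise-induced flips per column stays below the threshold $\epsilon$ of Eq.~(\ref{eq:epsilon}) (the Chernoff/Markov computation inside Proposition \ref{prop:two}); and (iii) disjunctness plus few flips deterministically implies success of the distance decoder (Proposition \ref{def:noise_problem}), with the union bound over columns supplying the $K\log S$ factor (Proposition \ref{prop:four}). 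You instead compare the decoder statistic $\mid supp(\mathbf{W}_{:,i})\backslash supp(\mathbf{g})\mid$ directly for defective versus non-defective columns: your per-row probabilities $q(1-p)(1-pq)^{K-1}$ and $q(1-pq)^{K}$ are both correct, their gap is indeed $\Theta(qp)$, and concentration plus a union bound over the $S$ candidates then separates the two populations by a single threshold. This is equivalent in substance (your favourable event for $i\notin T$ is slightly more generous than the paper's, since it also credits rows where a defective participant was erased by noise), and it is arguably cleaner for a fixed support $T$. The one shaky step is your closing paragraph: with your own choice $q=\Theta(1/K)$ the computation already yields $B=\mathcal{O}(K\log S/p^{2})$, which is \emph{stronger} than the claim, and the extra $1/p$ you then manufacture from ``a second Chernoff estimate'' is not the true origin of the $p^{3}$ in \cite{Cheraghchi:11} --- there it comes from taking $q=\Theta(p/K)$ (the choice $(\delta,\alpha)=(p/2,p/8)$ in Proposition \ref{prop:one}) so that the decoder threshold $(1+\delta)(1-p)qB$ sits below the expected number of good rows $q(1-q)^{K-1}B$. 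Since overestimating the required number of tests cannot hurt, this looseness does not invalidate your proof, but the bookkeeping you describe as ``the hard part'' is in fact dispensable.
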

The insights provided by the above results are used in the analysis of the novel distributed GT algorithm proposed in the next section.

\section{Distributed detection method}\label{sec:decoder}
\subsection{Sensor network message design and dissemination}\label{ssec:message_dissemin}
In this section, we propose a novel distributed failure detection algorithm and analyze its performance. The algorithm is based on a novel test design and message dissemination strategy in a distributed GT framework. The sensors iteratively create and disseminate messages in two-phases, denoted by $t_{I}$ and $t_{II}$. During the first phase $t_I$, the sensors obtain \textit{messages} that estimate the presence of defective sensors in their neighborhood. In the second phase $t_{II}$, the sensors linearly combine messages and exchange them employing a gossip mechanism. One round of our iterative distributed detection algorithm consists of these two phases. They are illustrated in Fig. \ref{fig:phases} and described below in more details.
\begin{figure*}[t!]
\begin{center}
\begin{tabular}{ccl}
~\includegraphics[width=0.4\columnwidth]{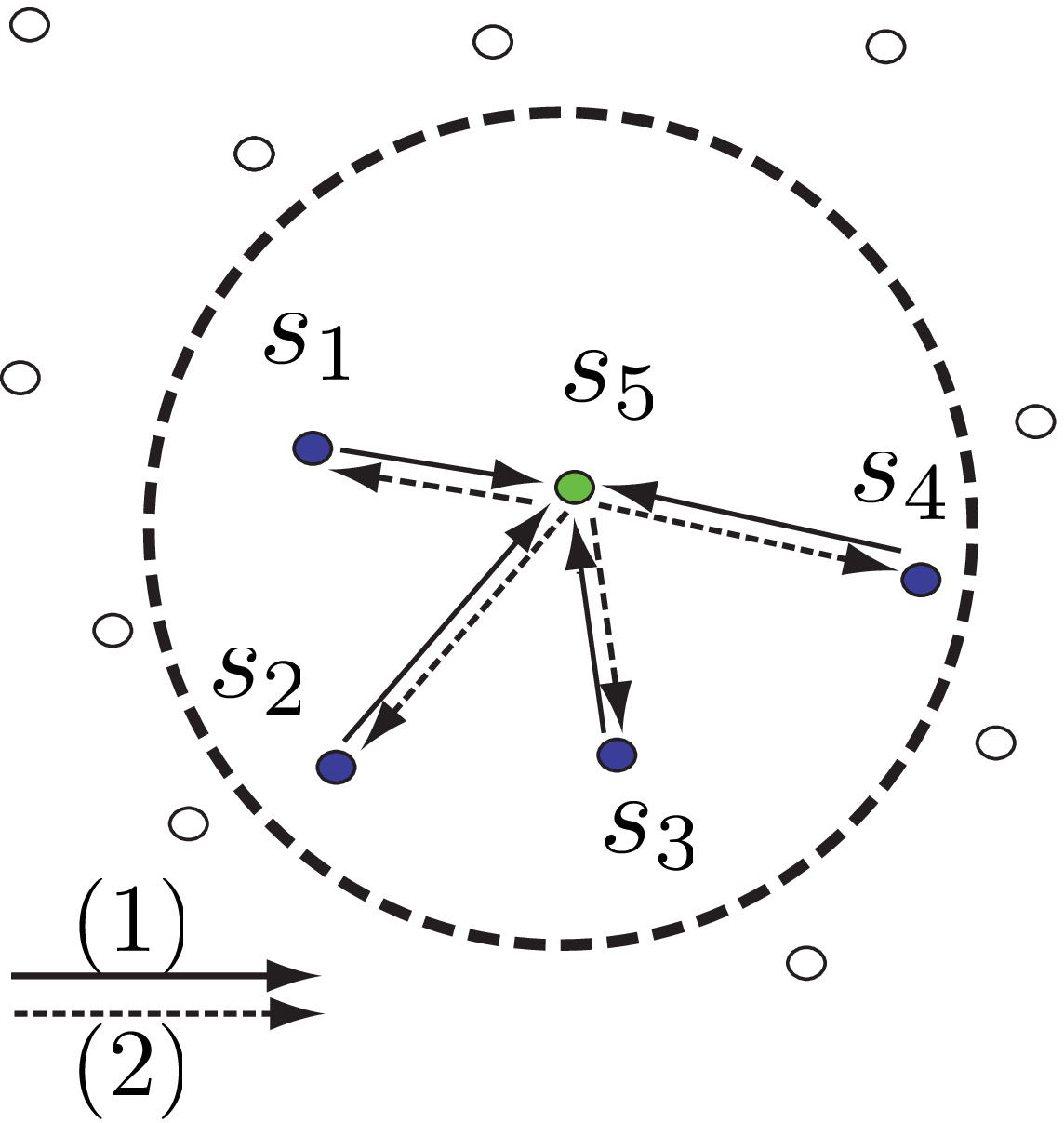}~&
~\includegraphics[width=0.4\columnwidth]{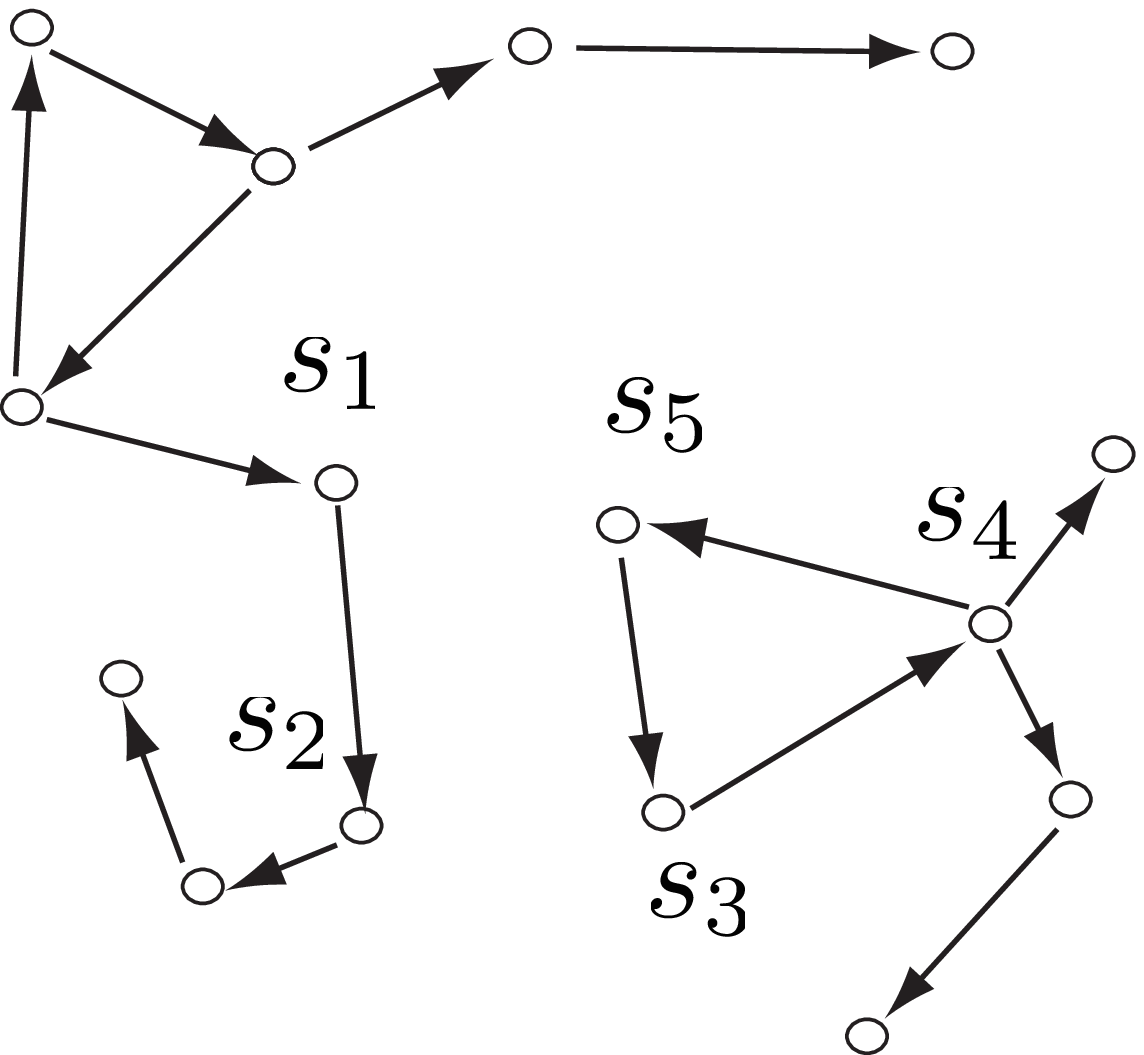}~&
~\includegraphics[width=0.5\columnwidth]{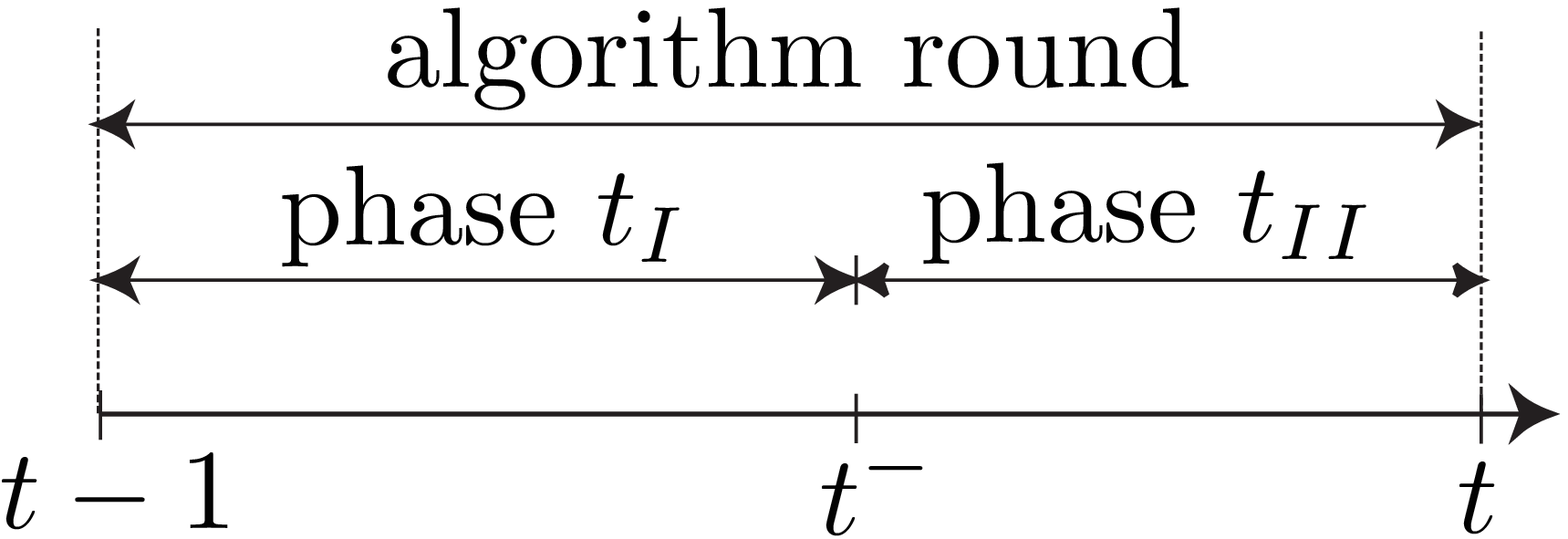}~\\
~(a) Phase $t_I$: Message design.~&~(b) Phase $t_{II}$: Message dissemination.~&~(c) Communication phases.~\\
\end{tabular}
\end{center}
\caption{
Illustration of the message design and dissemination through the sensor network. 
(a) Message formation based on local sensor measurements: Full and dashed arrows correspond to the steps of the message design, respectively. In the first step, the master sensor collects the sensor measurements from its neighbor sensors $\{{s_1},\dots,{s_4}\}$ and forms the message $(g_l(t^-),\mathbf{W}_{l,:}(t^-))$. In the second step, the message is propagated from the master sensor to its neighbor sensors.
(b) Message dissemination based on a gossip algorithm with pull protocol, where the sensors request the messages from their neighbors chosen uniformly at random. (c) Rounds of communication in our iterative detection algorithm consist of the message design ($t_I$) and the message dissemination ($t_{II}$) phases.
}
\label{fig:phases}
\end{figure*}

The first phase $t_I$ in round $t$ represents the message construction process illustrated in Fig. \ref{fig:phases}(a). $L$ master sensors cluster the network into disjoint subsets $\mathcal{V}_l\subset \mathcal{V}$, $l=1,\dots, L$. Clustering is used to bound the search space of decoder, as explained in the following subsections. 
Measurements of neighbor sensors do not vary significantly when the sensors are not defective when the signal under observation is smooth over the sensor field. The master sensors locally gather the readings or measurements from sensors that participate in their test.   Each sensor randomly participates in the test with probability $q$, as given in Eq. (\ref{eq:weight_prob}). The master sensor estimates the presence of defective sensors within its neighborhood and then attributes a binary value $f(\mathbf{s}_i)\in \mathbf{f}$ to each sensor in the neighborhood. The value $f(s_i) = 1$ denotes that the sensor $s_i$ is defective. Noise alternates non-zero bits with the probability $1-p$, as shown in Fig. (\ref{fig:noise_bit_alternation}). The test outcome at master node $l$ is finally computed as:
\begin{equation} \label{eq:onetest}
g_{l} =\mathbf{W}_{l,:}\otimes\mathbf{f}=\left\{
\begin{array}{l l}
 1, & \mbox{sensor(s) $\in\mathcal{K}$ ,}\\
 0, & \mbox{otherwise,}\\
 \end{array} \right.
\end{equation}
where the binary matrix operator $\otimes$ is composed by $\odot$ and $\oplus$ and stand respectively for the bitwise OR and the bitwise addition operators, where $\mathcal{K}$ is the set of defective sensors. The message formed by a master sensor $l$ during the phase $t_I$ consists of the outcome $g_{l}$ and the test participation identifier $\mathbf{W}_{l,:}$. The message $(g_{l}(t^-),\mathbf{W}_{l,:}(t^-))$ is sent to the neighbor sensors, which concludes the phase $t_{I}$. 

During the phase $t_{II}$, the messages created in the phase $t_I$ are disseminated within the network. The phase $t_{II}$ is illustrated in Fig. \ref{fig:phases}(b). Every sensor $i\in\{1,\dots,S\}$ requests the message formed at the previous round from its neighbor $j$, chosen uniformly at random, following a gossip mechanism with pull protocol. Next, each sensor $j$ responds to the message request that it has received from sensor $i$ by sending its message from the previous round. This process is performed only once per round. The sensor $i$ further combines these messages as follows:
\begin{eqnarray}\label{eq:diss}
g_i{(t)} \leftarrow g_{{i}}{(t^-)}\oplus g_{j}{(t-1)},\nonumber\\
\mathbf{W}_{i,:}(t) \leftarrow \mathbf{W}_{i,:}{(t^-)} \oplus \mathbf{W}_{j,:}{(t-1)},
\end{eqnarray}
where $g_j{(t-1)}$ denotes the sensor outcome value of the neighbor $j$ at the previous round $(t-1)$. The vector $\mathbf{W}_{i,:}(t)$ represents the test indicator vector at the sensor $i$ in round $t$. Since the messages are created probabilistically, the message combination in the different rounds assures that an innovative message reaches sensors at every round with high probability. A toy example of the dissemination phases is illustrated in Fig. \ref{fig:mess_form}. In this example the sensor $s_2$ at round $t$ pulls the message from the sensor $s_1$ and constructs a new message according to Eq. (\ref{eq:diss}). 
\begin{figure}[htb]
\begin{center}
\includegraphics[width=0.8\columnwidth]{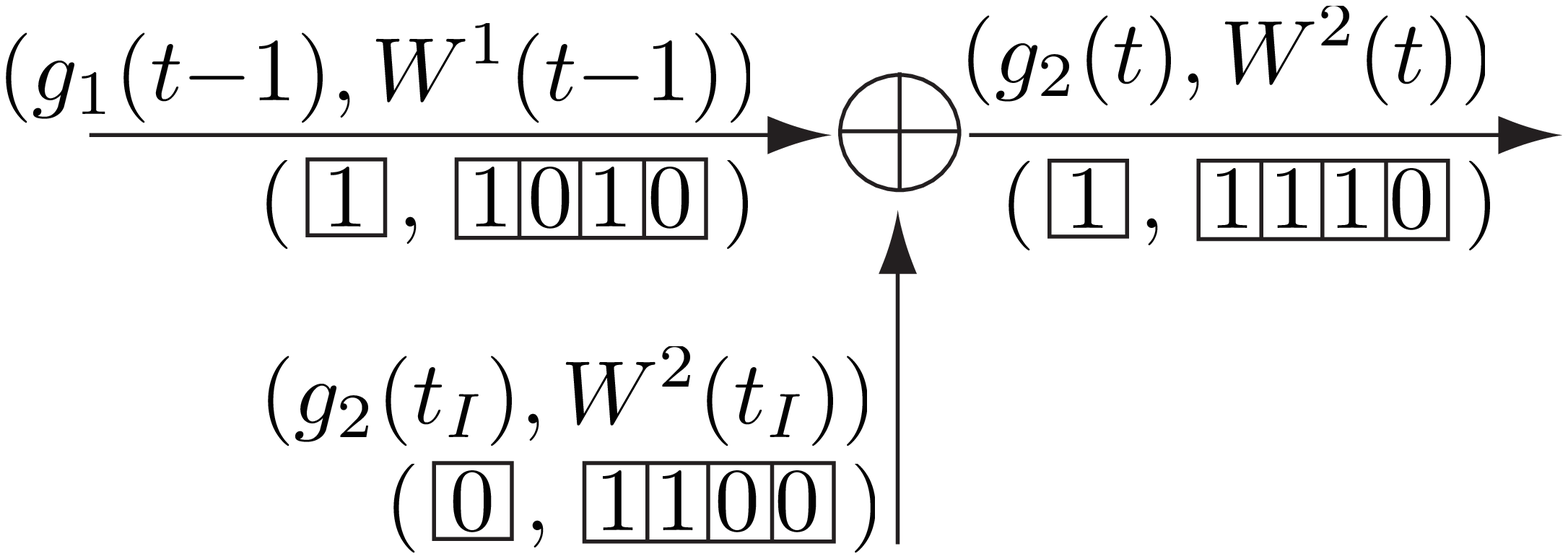}
\end{center}
\caption{
The message formation at sensor $s_2$ in round $t$. We assume that sensor $s_2$ pulls sensor $s_1$ to send its previous round values (round $t-1$). We assume that the sensor $s_3$ is defective $\mathbf{f}=[0 0 1 0 \dots]$. 
 The outcome value and the test identifier vector are formed by bitwise XOR.
 }
\label{fig:mess_form}
\end{figure}

In a matrix form, the process of message formation and transmission in $B$ rounds of our algorithm at any sensor in the network is represented as:
\begin{equation}\label{eq:prob}
\bf{g} =\bf{W} \otimes \bf{f},
\end{equation}
where the sensor identifier matrix $\mathbf{W}=[\mathbf{W}_{1,:}{(t)};\dots;\mathbf{W}_{B,:}{(t)}]$ is of size $B \times S$. 
The latter equation resembles to the outcome computation in the centralized GT case. However, in the distributed GT the tests represent linear combinations of test vectors that build disjunct matrix with high probability, as given in Eq. (\ref{eq:weight_prob}). To make a clear distinction between test matrices in proposed and centralized setup, we assume that an oracle has a direct access to the master nodes. Let $\mathbf{C}_{i,:}$ denote the concatenation vector of test realizations at master nodes collected by an oracle in the phase $t_I$ of the round $t=i$. The matrix $\mathbf{C}=[\mathbf{C}_{1,:}; \mathbf{C}_{2,:}\dots\mathbf{C}_{B,:}]$ then represents the test matrix over $B$ collection rounds. Observe that the matrix $\mathbf{C}$ is by construction disjunct, while $\mathbf{W}$ is built on the boolean addition of rows of $\mathbf{C}$ as in Eq. (\ref{eq:diss}). The values in $\mathbf{W}$ thus depend on the random message propagation path, which is obviously not the case in the centralized GT algorithm. Note that, for an arbitrary network, the number of network rounds required for collecting a particular number of linearly independent tests varies and depends on the network topology, the number of master nodes $L$ and the test participation probability $q$. 

Once every sensor has gathered enough test messages, it independently solves the failure detection problem finding the binary vector $\mathbf{f}$ that satisfies the tests in Eq. (\ref{eq:prob}). This solution $\mathbf{f}$ indicates the defective sensors. This process is analyzed in more details below.

\subsection{Detection of one defective sensor in the network}\label{sec:caseI_sol}
We first analyze the case of a single defective sensor (case $K=1$) in the network and study the detection probability of our distributed algorithm. To recall, the distance decoder used for detection computes the Hamming distance between two vectors $\mathbf{a}$ and $\mathbf{b}$. The element-wise distance is given by:
\begin{equation}
dist(a_i,b_i)=\left\{ 
\begin{array}{l l}
 1, & \mbox{if $a_i \ne b_i$,}\\
 0, & \mbox{otherwise.}\\ 
 \end{array} \right.
\end{equation}
To avoid the false alarms, the decoder threshold $\epsilon$ is set to the value that is higher than the expected number of noise-induced bit flips per columns in the disjunct matrix $\mathbf{C}$  \cite{Cheraghchi:11}:
\begin{equation}\label{eq:epsilon}
\epsilon=(1+\delta)(1-p)qB.
\end{equation}
where $\delta>0$ is a small constant and $B$ is the number of rows in $\mathbf{C}$. Columns of $\mathbf{C}$ have in average $qB$ non-zero elements. Every non-zero matrix element is flipped with probability $(1-p)$ and the expected number of flips per column is:
\begin{equation}\label{eq:mu}
\mu=(1-p)qB. 
\end{equation}
Recall that the matrix $\mathbf{C}$ is by construction a disjunct matrix. Proposition \ref{def:noise_problem} states that the detection problem is resolved for tests that form a disjunct test matrix. However, the messages available at sensors in the network form a test matrix that is obtained by linear combinations of disjunct matrix rows and not disjunct matrix rows itself. Nevertheless, we show below that the distance decoder detects defective sensor with high probability under certain conditions. 

The formal propositions for detection with high probability are given below. First we show that the proposed algorithm in the network with a single master node designs a $(K,\epsilon)$-disjunct matrix $\mathbf{C}$ during the phase $t_I$. Next we show that in a single cluster network linear combinations of rows in $\mathbf{C}$ preserve distances between the test outcome and the column of the defective sensor in the test matrix. We then build on these two propositions to analyze the number of messages needed for the  distributed detection of a single defective sensor, which is given in Proposition \ref{th:mytheoremII}. 

We first show that for a network with a single master node ($L=1$) and probabilistic message design in the phase $t_I$, a $(K,\epsilon)$-disjunct matrix $\mathbf{C}$ is built with high probability. This case boils down to the centralized collection of data described in \cite{Cheraghchi:11} and the defective sensor can be detected by a distance decoder as shown in Proposition \ref{def:noise_problem}. 
\begin{prop} For a single-cluster network, the message design over the phase $t_I$ of our proposed method builds a $(K,\epsilon)$-disjunct matrix $\mathbf{C}$ with high probability for an arbitrary $K$ and $\epsilon$ defined as in Eq. (\ref{eq:epsilon}).
\label{prop:one}
\end{prop}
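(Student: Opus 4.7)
The plan is to reduce the single-cluster message-design phase to the classical probabilistic construction in Eq.~(\ref{eq:weight_prob}) and then apply a concentration-plus-union-bound argument. First I would observe that when $L=1$, in each round the unique master node independently polls every sensor, and each sensor joins the test with probability $q$. Hence the $r$-th row of $\mathbf{C}$ is an i.i.d.\ Bernoulli$(q)^S$ vector, and rows across rounds are mutually independent. Thus $\mathbf{C}$ has exactly the distribution of the probabilistic test matrix in Eq.~(\ref{eq:weight_prob}), and the problem reduces to showing that such a random matrix is $(K,\epsilon)$-disjunct with high probability for $\epsilon$ as in Eq.~(\ref{eq:epsilon}).

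Next I would fix an index $i\in\{1,\dots,S\}$ and a subset $T\subseteq\{1,\dots,S\}\setminus\{i\}$ with $|T|\le K$, and estimate
\[
N_{T,i} \;=\; \bigl|\, supp(\mathbf{C}_{:,i})\setminus \textstyle\bigcup_{j\in T} supp(\mathbf{C}_{:,j})\,\bigr|.
\]
Row $r$ contributes to $N_{T,i}$ iff $C_{r,i}=1$ and $C_{r,j}=0$ for every $j\in T$. Since the entries within a row are mutually independent, this event has probability $q(1-q)^{|T|}\ge q(1-q)^K$, and the contributions across rows are independent. Therefore $N_{T,i}$ is a sum of $B$ independent Bernoulli variables with mean $\mu_{T,i}\ge Bq(1-q)^K$. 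A standard Chernoff bound then gives $\Pr[N_{T,i}\le \epsilon]\le \exp(-c\, B)$ for some $c>0$, provided the parameter regime makes $\epsilon$ strictly below $\mu_{T,i}$ by a constant factor.

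Finally I would take a union bound over the at most $S\cdot\binom{S-1}{K}\le S^{K+1}$ choices of $(i,T)$. The probability that the disjunctness property fails is then at most $S^{K+1}\exp(-cB)$, which is $o(1)$ as soon as $B=\Omega(K\log S)$—precisely the scaling already postulated in Proposition~\ref{th:nummeasur}. This establishes that $\mathbf{C}$ is $(K,\epsilon)$-disjunct with overwhelming probability.

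The main obstacle, and the one that needs to be handled carefully, is verifying that the threshold $\epsilon=(1+\delta)(1-p)qB$ prescribed in Eq.~(\ref{eq:epsilon}) is indeed strictly dominated by $\mu_{T,i}\ge Bq(1-q)^K$, so that the Chernoff lower-tail bound bites. This is a constraint of the form $(1+\delta)(1-p)<(1-q)^K$, linking the noise level $1-p$, the participation probability $q$, and the sparsity $K$; I would argue it is implicit in the operating regime of Proposition~\ref{th:nummeasur} (small $q$, $p$ not too small), and note that outside this regime the distance decoder's threshold must be rescaled accordingly. Everything else in the argument is routine concentration and union-bounding already used for the centralized construction in~\cite{Cheraghchi:11}.
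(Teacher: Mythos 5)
Your proposal follows essentially the same route as the paper: both arguments count the rows in which the distinguished column has a $1$ while the other (at most $K$) columns of the subset have $0$, note that this count is binomial with per-row success probability $q(1-q)^{K}$ (the paper writes $q(1-q)^{K-1}$), and apply a Chernoff lower-tail bound under the implicit condition that $\epsilon$ lies below the mean. Your version is in fact slightly more complete than the paper's, since you carry out the union bound over all $S\cdot\binom{S-1}{K}$ choices of $(i,T)$ inside the proposition (the paper defers this to the proof of Proposition~\ref{prop:four}) and you make explicit the parameter constraint $(1+\delta)(1-p)<(1-q)^{K}$ that the paper only states as ``$\epsilon<\mu_2$''.
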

\begin{proof}
We show that the probability that the number of rows with a good disjunctness property $G$ of $\mathbf{C}$ is smaller than $\epsilon$ and we follow the development proposed in \cite{Cheraghchi:11}. The sensor participation probability $q$ in a test defined as in Eq. (\ref{eq:weight_prob}). A row of the matrix $\mathbf{C}_{i,:}$ is considered to have a good disjunctness property if a single symbol ``$1$'' occurs, while the rest $K-1$ values are equal to zero. The probability of such an event is equal to $\mu_1=q(1-q)^{K-1}$. The random variable that marks the total number of rows with such a property is denoted with $G$. The distribution of $G$ is binomial with a mean value $\mu_2=\mu_1B$. We show that the probability of having less than $\epsilon$ rows with good disjunctness property is small under the assumption that $\epsilon< \mu_2$. We limit this probability by a Chernoff bound as:
\begin{equation}
P(G < \epsilon) \le e^{-\frac{1}{2}\frac{(\mu_2-\epsilon)^2}{\mu_2}}= e^{-q B \frac{[(1-q)^{K-1}-(1-p)(1+\delta)]^2}{2(1-q)^{K-1}}}.
\end{equation}
Knowing that $2< e < 3$ and that constant $\alpha\ge 0$, we get $2^{-\alpha} \ge e^{-\alpha} \ge 3^{-\alpha}$. Since $3^{-\alpha} \le (1+\frac{-\alpha}{K})^K \le 2^{-\alpha}$ holds, $\gamma=\frac{[(1-q)^{K-1}-(1-p)(1+\delta)]^2}{2(1-q)^{K-1}}$ is bounded. For the parameter choice in \cite{Cheraghchi:11} $(\delta, \alpha)=(\frac{p}{2},\frac{p}{8})$, the value $\gamma=\mathcal{O}(p^3)$. Therefore this probability can be designed to be arbitrary small: 
\begin{equation}\label{eq:cond2}
P(G < \epsilon) \le e^{-B \gamma/K}=e^{-\mathcal{O}( B p^3/K)}.
\end{equation}
\end{proof}
Then we show that linear combinations of rows of $(K,\epsilon)$-disjunct matrices $\mathbf{C}$ in a network with a single master node preserve the Hamming distance only between the column of matrix $\mathbf{W}_{:,k}$ that corresponds to the defective sensor $s_k$ and the outcome vector $\mathbf{g}$.
\begin{prop}
Let $\mathbf{C}$ be the $(K,\epsilon)$-disjunct matrix created over consecutive $B$ rounds in a single-cluster network during the phase $t_I$. Linear combinations of messages generated during the phase $t_{II}$, performed as in Eq. (\ref{eq:diss}), preserve the Hamming distance between the column of obtained matrix $\mathbf{W}_{:,k}$ that corresponds to the defective sensor $s_k$ and the outcome vector $\mathbf{g}$. 
\label{prop:two}
\end{prop}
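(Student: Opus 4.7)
The plan is to exploit the $\mathbb{F}_2$-linearity of the dissemination rule in Eq. (\ref{eq:diss}) to show that, for a single defective sensor at index $k$, the column $\mathbf{W}_{:,k}$ and the outcome vector $\mathbf{g}$ inherit directly from $\mathbf{C}_{:,k}$ and $\mathbf{g}_0=\mathbf{C}\otimes\mathbf{f}$, so that their Hamming distance after $B$ rounds of gossip equals the corresponding centralized Hamming distance.

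First, I would introduce, for any fixed sensor at the end of round $B$, a binary mixing matrix $\mathbf{A}\in\mathbb{F}_2^{B\times B}$ whose $i$-th row records which of the original master-node rows of $\mathbf{C}$ have been XOR-combined to produce that sensor's $i$-th stored message. A direct induction over the rounds using Eq. (\ref{eq:diss}) shows simultaneously $\mathbf{W}=\mathbf{A}\otimes\mathbf{C}$ and $\mathbf{g}=\mathbf{A}\otimes\mathbf{g}_0$, since the recursion applies the \emph{same} XOR to the test-indicator vector and to the outcome bit at every step.

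Second, I would specialise to the single-defect case, where $\mathbf{f}=\mathbf{e}_k$ is the $k$-th standard basis vector, so that $\mathbf{g}_0=\mathbf{C}\otimes\mathbf{e}_k=\mathbf{C}_{:,k}$. Substituting into the identity above gives
\begin{equation*}
\mathbf{g}\;=\;\mathbf{A}\otimes\mathbf{C}_{:,k}\;=\;(\mathbf{A}\otimes\mathbf{C})_{:,k}\;=\;\mathbf{W}_{:,k}.
\end{equation*}
Hence in the noiseless regime the Hamming distance between the defective column and the outcome vector is exactly zero, which coincides with what the centralized distance decoder of Proposition \ref{def:noise_problem} would observe on $(\mathbf{C},\mathbf{g}_0)$.

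Third, I would lift the argument to the noisy regime by applying the same linearity to the noise pattern. Let $\boldsymbol{\eta}\in\mathbb{F}_2^B$ denote the vector of noise-induced flips on the nonzero entries of $\mathbf{C}_{:,k}$, so that the centralized Hamming distance would be $\|\boldsymbol{\eta}\|_1$ with expectation $\mu=(1-p)qB$ as in Eq. (\ref{eq:mu}). Because $\mathbf{A}$ is produced by the gossip routing and is independent of $\boldsymbol{\eta}$, one obtains $\mathbf{W}_{:,k}\oplus\mathbf{g}=\mathbf{A}\otimes\boldsymbol{\eta}$. I would then argue that the expected weight of $\mathbf{A}\otimes\boldsymbol{\eta}$ is dominated by $\mu$ and, invoking a Chernoff-type tail bound analogous to the one used in Proposition \ref{prop:one}, conclude that this distance remains below the decoder threshold $\epsilon$ of Eq. (\ref{eq:epsilon}) with overwhelming probability.

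The main obstacle lies in this last step: an XOR of independent Bernoulli$(1-p)$ bits is not itself Bernoulli$(1-p)$, so ``preservation'' has to be interpreted at the level of the threshold $\epsilon$ rather than coordinate-wise. Making this rigorous requires controlling how often the same noisy row is reused across distinct linear combinations, which in turn is governed by the gossip pull schedule and the expected number of times each master-node row appears in the $B$ mixtures built over the course of the algorithm.
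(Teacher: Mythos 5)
Your proposal takes essentially the same route as the paper: the paper's Eq.~(\ref{eq:distances}) is exactly your mixing-matrix identity written as a per-round recursion (using $dist(a\oplus c,\,b\oplus d)=dist(a,b)\oplus dist(c,d)$ for bits), and the paper controls the initial noise contribution with the same Chernoff-type bound on the number of flips that you propose to invoke. The obstacle you flag at the end --- that the Hamming weight of $\mathbf{A}\otimes\boldsymbol{\eta}$ need not be bounded by that of $\boldsymbol{\eta}$ when a noisy row is reused across several linear combinations --- is a genuine subtlety, but the paper's own proof does not resolve it either: it simply asserts that the final distance between $\mathbf{g}$ and $\mathbf{W}_{:,k}$ stays at most $\epsilon$.
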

\begin{proof}
We first analyze the case that leads to a decoding failure for $(K,\epsilon)$-disjunct matrices following a development similar to \cite{Cheraghchi:11}. We prove further that linear combinations of rows in such matrices preserve vector distances between the outcome vector and the column of $\mathbf{W}$ that corresponds to the defective sensor. 

A decoding failure with a distance decoder occurs in a $(K,\epsilon)$-disjunct matrix when the number of flips of column elements of $\mathbf{C}$ is higher than $\epsilon$. The probability of  occurrence of a single flip is equal to $\mu_3=q(1-p)$. Let $F$ denotes the number of flips in the columns of the matrix. Hence, the expected number of flips per column is given in Eq. (\ref{eq:mu}). We want to compute the lower bounds for the event that more than $(1+\delta) \mu$ flips occurred in the column of the matrix, where $\delta>0$. Applying the Markov inequality:
\begin{equation}
P(F\ge (1+\delta)\mu ) \le \inf_{d>0} \frac{\prod_{i=1}^{S}E[e^{dF_i}]}{e^{d (1+\delta)\mu}}
\end{equation}
and plugging the probability of the single flip event:
\begin{equation}
P(F_i)=\left\{ 
\begin{array}{l l}
 1, & \mbox{with probability $(1-p)q$,}\\
 0, & \mbox{with probability $1-(1-p)q$,}\\ 
 \end{array} \right.
\end{equation}
to the expectation term of the previous equation leads to: 
\begin{eqnarray}
P(F\ge (1+\delta)\mu ) & \le & \inf_{t>0} \frac{\prod_{i=1}^{m_i} [(1-p)q e^{d}+(1-(1-p)q)]}{e^{d (1+\delta)\mu}} \nonumber \\ & = & \inf_{t>0} \frac{\prod_{i=1}^{m_i} [(1-p)q (e^{d}-1) +1]}{e^{d (1+\delta)\mu}}.
\end{eqnarray}
If we set $(1-p)q (e^{d}-1) = x$ and plug the inequality $1+x< e^x$, we obtain:
\begin{eqnarray}
P(F\ge (1+\delta)\mu ) & \le & \inf_{d>0} \frac{\prod_{i=1}^{m_i} e^{(1-p)q (e^{d}-1)}}{e^{d (1+\delta)\mu}} \nonumber \\ & = &\inf_{t>0} \frac{e^{(1-p)q m_i (e^{d}-1)}}{e^{d (1+\delta)\mu}} \nonumber \\ & = & \inf_{t>0} \frac{e^{\mu(e^{d}-1)}}{e^{d(1+\delta)\mu}}.
\end{eqnarray}
For the constant $d=log(1+\delta)$, we finally obtain:
\begin{equation} \label{eq:prev_app}
P(F\ge (1+\delta)\mu ) \le ( \frac{ e^{ ^\delta } }{ (1+\delta)^{(1+\delta)} })^{\mu} =e^{ \mu\delta - \mu(1+\delta) \log(1+\delta) }.
\end{equation}
Observing that $log(1+\delta) > \frac{2\delta}{2+\delta}$, the Eq. (\ref{eq:prev_app}) becomes:
\begin{equation} \label{eq:cond1} 
P(F\ge (1+\delta)\mu ) \le e^{\frac{-\mu \delta^2}{2+\delta}}.
\end{equation}
 
The outcome value $\mathbf{g}$ depends on the presence of a defective sensor $s_k$ in the test. We prove here that the distance between $\bf{g}$ and the $k$-th column $\mathbf{W}_{:,k}$ does not increase more than $\epsilon$ during $t_{II}$, while this is not true for the rest of the columns. When sensor $j$ sends its message to sensor $i$ during the round $t$, we have:
{\small
\begin{eqnarray}\label{eq:distances}
&&\hspace{-10mm}dist\Big(g_i{(t)},W_{i,k}{(t)}\Big) \nonumber \\ 
&&\hspace{-8mm}=dist\Big(g_i{(t^-)}\oplus g_j{(t-1)},W_{i,k}{(t^-)} \oplus W_{j,k}{(t-1)}\Big) \nonumber\\ 
&&\hspace{-8mm}= dist\Big(g_i{(t^-)},W_{i,k}{(t^-)}\Big) \oplus dist\Big(g_j{(t-1)},W_{j,k}{(t-1)}\Big),
\end{eqnarray} }
where the first equality results from Eq. (\ref{eq:diss}). The second equality follows directly from the fact that the values of $\mathbf{g}(t^-)$ and the columns $\mathbf{W}_{:,k}(t^-)$ are identical for the defective sensor due to Eq. (\ref{eq:prob}). Since these two columns can initially differ at $\epsilon$ positions due to noise flips, the overall distance between the vectors $\mathbf{g}(t^-)$ and $\mathbf{W}_{:,k}(t^-)$ is at maximum $\epsilon$ given in Eq.(\ref{eq:epsilon}).
\end{proof}
We consider now networks with $L$ master sensors and an hypothetical centralized data collection. We assume that $L$ master nodes cluster the sensor network in disjoint subsets, where every sensor belongs to exactly one cluster. The master nodes perform message design over the rounds $t_I$ as proposed by our algorithm. We show now that the tests gathered from the $L$ different clusters build a disjunct matrix, where each cluster relates a $(K,\epsilon)$-disjunct matrix. 
\begin{prop}
The diagonal matrix $\mathbf{C}=diag(\mathbf{C}_1,\dots,\mathbf{C}_L)$ obtained from $(K,\epsilon)$-disjunct matrices $\mathcal{C}=\{\mathbf{C}_i\}_{i=1}^L$  is at least $(K,\epsilon)$-disjunct.
\label{prop:three}
\end{prop}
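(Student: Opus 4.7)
The plan is to exploit the block-diagonal structure of $\mathbf{C}=\mathrm{diag}(\mathbf{C}_1,\ldots,\mathbf{C}_L)$ and reduce the disjunctness check on $\mathbf{C}$ to the already known disjunctness of a single block $\mathbf{C}_\ell$. The key observation is that every column of $\mathbf{C}$ lives entirely inside the row-block of exactly one $\mathbf{C}_\ell$, so columns coming from different blocks have pairwise disjoint supports.

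First, I would fix an arbitrary subset $T$ of columns of $\mathbf{C}$ with $|T|\le K$ and an arbitrary index $i$, and let $\ell$ be the unique block index such that $\mathbf{C}_{:,i}$ corresponds to a column of $\mathbf{C}_\ell$. Partition $T\setminus\{i\}$ into $T_\ell:=T\cap\{\text{columns of block }\ell\}\setminus\{i\}$ and $T_{\bar\ell}:=(T\setminus\{i\})\setminus T_\ell$. For every $j\in T_{\bar\ell}$ the support of $\mathbf{C}_{:,j}$ lies in rows belonging to a block $\ell'\neq\ell$, hence $\mathrm{supp}(\mathbf{C}_{:,j})\cap\mathrm{supp}(\mathbf{C}_{:,i})=\emptyset$. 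Consequently these columns contribute nothing to the set difference, and
\begin{equation*}
\Bigl|\mathrm{supp}(\mathbf{C}_{:,i})\setminus\bigcup_{j\in T\setminus\{i\}}\mathrm{supp}(\mathbf{C}_{:,j})\Bigr|
=\Bigl|\mathrm{supp}(\mathbf{C}_{:,i})\setminus\bigcup_{j\in T_\ell}\mathrm{supp}(\mathbf{C}_{:,j})\Bigr|.
\end{equation*}

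Next I would apply the $(K,\epsilon)$-disjunctness hypothesis on the block $\mathbf{C}_\ell$. Since $|T_\ell|\le|T|\le K$, Definition \ref{def:disjunct} applied inside $\mathbf{C}_\ell$ yields that the right-hand side above is strictly greater than $\epsilon$. As $i$ and $T$ were arbitrary, this establishes that $\mathbf{C}$ itself satisfies the $(K,\epsilon)$-disjunctness property, which is exactly the claim. The qualifier ``at least'' in the statement follows because whenever $T$ contains columns from two or more blocks, the bound $|T_\ell|<|T|$ is strict, so the same argument in fact verifies the disjunctness condition for a subset of size less than $K$ of block $\ell$, which is a stronger requirement than needed.

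There is no real obstacle here: the argument is essentially bookkeeping on supports and a direct invocation of the per-block disjunctness. The only subtlety worth stating carefully is that the columns of a block-diagonal matrix are indexed by the column indices of the constituent blocks (and zeros elsewhere), so the partition of $T$ induced by the block structure is well defined and unambiguous.
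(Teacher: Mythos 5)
Your proof is correct and is essentially the paper's argument: the paper's proof simply states that the claim follows directly from Definition \ref{def:disjunct} and the disjunctness of the matrices in $\mathcal{C}$, and your write-up supplies exactly the support-bookkeeping (columns from distinct blocks have disjoint supports, so the set difference reduces to a single block, where the per-block disjunctness applies) that this one-line proof leaves implicit.
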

\begin{proof}
Proof follows directly from the Definition \ref{def:disjunct} and the disjunctness property of the matrices in $\mathcal{C}$. 
\end{proof}
We consider now the gathering of messages that are linearly combined over successive rounds of our detection algorithm. Uniform gathering of linearly combined messages at $L$ clusters by a hypothetical centralized decoder results in detection of the defective sensor with high probability when the number of received messages is sufficient. 
\begin{prop}
When the $(K,\epsilon_i)$-disjunct matrices $\mathcal{C}=\{\mathbf{C}_i\}_{i=1}^L$ are linearly combined as in Eq. (\ref{eq:diss}), where $\epsilon=\sum_{i=1}^{L}\epsilon_i $ and $q=\sum_{i=1}^{L}q_i$, the resulting test matrix permits detection by a distance decoder with high probability as long as it contains in total $B \ge \mathcal{O}({K \log (S) /p^3})$ messages collected from clusters chosen at random.
\label{prop:four}
\end{prop}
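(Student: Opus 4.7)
The plan is to assemble Propositions \ref{prop:one}--\ref{prop:three} into a single argument and then extract the $B=\mathcal{O}(K\log(S)/p^{3})$ bound by a Chernoff-plus-union-bound calculation analogous to the one used in Proposition \ref{prop:one}. First, I would invoke Proposition \ref{prop:three} to view the tests gathered from the $L$ clusters during phase $t_I$ as rows of a single block-diagonal matrix $\mathbf{C}=\mathrm{diag}(\mathbf{C}_1,\dots,\mathbf{C}_L)$ which is $(K,\epsilon)$-disjunct with $\epsilon=\sum_i\epsilon_i$. Because, for a sensor assigned to cluster $\ell$, the row participation rate of $\mathbf{C}$ restricted to that cluster's columns is $q_\ell$, the aggregated effective participation probability over the whole column space is $q=\sum_\ell q_\ell$, which is the quantity appearing in the statement.

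Next I would use Proposition \ref{prop:two} applied cluster-wise: the linear combinations performed in phase $t_{II}$ consist of bitwise XOR of rows of $\mathbf{C}$, so restricted to the support of any one cluster they behave exactly as in the single-cluster analysis and therefore preserve the Hamming distance between the outcome vector and the column of the defective sensor up to the cluster-local threshold $\epsilon_\ell$. Summing over the (at most $L$) clusters that contribute to any combined row, the overall distance between $\mathbf{g}$ and the column $\mathbf{W}_{:,k}$ of the defective sensor stays bounded by $\epsilon=\sum_\ell\epsilon_\ell$, while for every non-defective column the disjunctness of $\mathbf{C}$ guarantees a strictly larger distance with high probability. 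This is the property the distance decoder of Eq.~(\ref{eq:distdec}) needs to identify $s_k$.

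To turn the ``high probability'' into the quantitative bound $B=\mathcal{O}(K\log(S)/p^{3})$, I would rerun the Chernoff estimate of Proposition \ref{prop:one} on the aggregated matrix. The probability that a fixed non-defective column fails the disjunctness margin is at most $\exp(-\mathcal{O}(Bp^{3}/K))$, combining Eqs.~(\ref{eq:cond2}) and (\ref{eq:cond1}) with $q=\sum_\ell q_\ell$ and $\mu=(1-p)qB$. A union bound over the $S-1$ non-defective columns gives a total failure probability bounded by $S\cdot\exp(-\mathcal{O}(Bp^{3}/K))$. Forcing this quantity to be $o(1)$ yields $B=\Omega(K\log(S)/p^{3})$, which matches the claimed rate.

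The main obstacle is making precise the fact that messages ``collected from clusters chosen at random'' split in a balanced way among the $L$ clusters, so that each individual block $\mathbf{C}_\ell$ receives enough rows for its local $(K,\epsilon_\ell)$-disjunctness to hold simultaneously. I would handle this with a separate Chernoff concentration on the multinomial counts of cluster contributions: with $B$ rows drawn uniformly over clusters, each cluster gets at least a constant fraction of $B/L$ rows except on an event of probability $L\exp(-\Omega(B/L))$, which is absorbed into the overall failure term without changing the $\mathcal{O}(K\log(S)/p^{3})$ order (the factor $L$ only enters the constant). A secondary subtlety is to verify that cross-cluster XORs in phase $t_{II}$ do not inflate the distance on the defective column beyond $\epsilon=\sum_\ell\epsilon_\ell$; this is exactly what the additive split in Eq.~(\ref{eq:distances}) gives when summed over the clusters touched by a given combined row.
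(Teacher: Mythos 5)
Your plan follows essentially the same route as the paper's proof: assemble Proposition \ref{prop:three} (block-diagonal disjunctness with $\epsilon=\sum_i\epsilon_i$), Proposition \ref{prop:two} (distance preservation under the XOR combinations of phase $t_{II}$), and then a union bound over the $S$ columns of the Chernoff estimates from Eqs.~(\ref{eq:cond2}) and (\ref{eq:cond1}) to force $B=\mathcal{O}(K\log(S)/p^{3})$. The only substantive difference is your added multinomial concentration argument for the balanced allocation of rows across clusters, which the paper handles by assumption (it simply posits $B/L$ messages per cluster in Proposition \ref{th:mytheoremII}) rather than by proof; your version is a harmless strengthening that does not change the order of $B$.
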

\begin{proof}
We first show that a diagonal matrix constructed from $(K,\epsilon_i)$-disjunct matrices of the set $\mathcal{C}$ is $(K,\epsilon)$-disjunct. Next, we recall the Proposition \ref{prop:two} and finally, we show that the $B$ measurements assure a good disjunct property of cluster matrices. Let the number of rows for all matrices be $B=\mathcal{O}(K\log(S) /p^3)$. The parameters $\epsilon$  and $\epsilon_i$ are defined in Eq. (\ref{eq:epsilon}) and $\epsilon=\sum_{i=1}^{L}\epsilon_i=(1+\delta)(1-p)B q$, the diagonal matrix of $(K,\epsilon_i)$ matrices is $(K,\epsilon)$ disjunct. 
The next part of the proof follows from the Proposition \ref{prop:two} which states that a matrix whose rows are formed by linear combinations of rows of $(K,\epsilon)$-disjunct matrix permits detection with a distance decoder. Finally, we need to prove that for a given $\mathbf{C}_i$ the disjunct property holds given that at least $B$ messages are available. For this purpose, we follow a development similar to \cite{Cheraghchi:11} and consider maximum number of sensors in clusters is $S_{max}=S$. The probability bound given in Proposition \ref{prop:one} should hold for all possible choices of a fixed set of $T$ out of $S$ columns: $\cup_{T} P(G\le\epsilon) \le S e^{-B q \gamma}$. This probability can be arbitrary small, e.g., in case $B \ge \frac{K \log S}{\alpha \gamma}= \mathcal{O}(K \log S/p^3)$. Further on, the condition in Eq. (\ref{eq:cond1}), which gives the probability bound that the number of flips in any $K$ out of $T$ columns exceeds a threshold value $\epsilon$ is also bounded. It reads $$\cup_{K}P(F\ge (1+\delta)\mu)\le K e^{\frac{- \delta^2}{2+\delta} \mu} = K e^{\frac{- \delta^2}{(2+\delta)p^3} (1-p)q K log(S) },$$ where the last equality is obtained by using Eq. (\ref{eq:mu}). This probability is small for the sufficiently large value of $B=\mathcal{O}(Klog(S)/p^3)$. 
\end{proof}

We now analyze the proposed distributed algorithm and consider the detection requirements for every sensor in the network. We show that the test messages collected by the sensors during the transmission rounds enable failure detection by the distance decoder with high probability if the number of messages is sufficient, where the decoder operations are performed locally at sensors.
\begin{prop}
We assume that $L$ master sensors partition the sensor network in disjunct parts. Test realizations within a cluster form test vectors. Over the rounds, these vectors create $(K,\epsilon)$-disjunct matrices $\mathcal{C}=\{\mathbf{C}_i\}_{i=1}^{L}$: 
\vspace{-1mm}
\begin{equation}
\mathbf{C}_i=\left\{
\begin{array}{l l}
 1, & \mbox{with probability $q_i={\alpha_i}$,}\\
 0, & \mbox{otherwise,}\\
 \end{array} \right.
\end{equation}
where $q=\sum_{i=1}^{L}q_i$. Messages $(\mathbf{g}_i,\mathbf{W}_{i,:})$ arrive at all the sensors in the network in our proposed algorithm, as described in the previous section. If the above assumptions hold and if the number of linearly independent messages received per cluster at every sensor in the network is at least $B/L$, where $B\hspace{-1mm}\ge\hspace{-1mm} \mathcal{O}(K \log(S)/p^3)$, the probability that sensors fail to detect the defective sensor by the distance decoder tends to zero as $S\rightarrow \infty$.
\label{th:mytheoremII}
\end{prop}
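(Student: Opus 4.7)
My plan is to combine Propositions \ref{prop:one}--\ref{prop:four} so that the detection guarantee of the centralized scenario transfers to every sensor in the network once it has gathered the prescribed number of linearly independent messages per cluster. First I would apply Proposition \ref{prop:one} cluster by cluster: the probabilistic test design in phase $t_I$ makes each $\mathbf{C}_i$ a $(K,\epsilon_i)$-disjunct matrix with overwhelming probability, with a failure bound of order $e^{-\mathcal{O}(Bp^3/K)}$ per cluster. Invoking Proposition \ref{prop:three} then guarantees that the block-diagonal concatenation $\mathbf{C}=\mathrm{diag}(\mathbf{C}_1,\ldots,\mathbf{C}_L)$ is $(K,\epsilon)$-disjunct with $\epsilon=\sum_{i=1}^{L} \epsilon_i$, providing the reference matrix that underpins distance decoding.

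Second, I would observe that the local matrix $\mathbf{W}$ held by any sensor is obtained by successive XOR combinations of the rows of $\mathbf{C}$, as dictated by Eq. (\ref{eq:diss}). Proposition \ref{prop:two} extends directly to this multi-cluster setting because the distance-preservation argument depends only on the linearity of XOR over $\mathbb{F}_2$ and not on any per-row independence assumption. Consequently, at every sensor the Hamming distance between the locally assembled outcome vector $\mathbf{g}$ and the column of $\mathbf{W}$ indexing the defective sensor $s_k$ remains bounded above by $\epsilon$, while the $(K,\epsilon)$-disjunctness of $\mathbf{C}$ keeps the distance above $\epsilon$ for every other column, provided the sensor has collected at least $B/L$ linearly independent rows from each cluster so that no cluster is underrepresented in the local view.

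Third, I would assemble everything with a union bound. The per-sensor failure event decomposes into the disjunctness failure of some $\mathbf{C}_i$ (bounded via Eq. (\ref{eq:cond2})) and the noise-flip overflow in some column (bounded via Eq. (\ref{eq:cond1})). Summing these probabilities over the $L$ clusters, the $S$ candidate columns, and finally the $S$ sensors of the network, the choice $B = \mathcal{O}(K\log(S)/p^3)$ makes the exponential decay dominate the polynomial union-bound overhead, so the total failure probability vanishes as $S\to\infty$.

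The main obstacle is justifying that the distributed, gossip-driven construction of local matrices --- where rows are correlated across sensors and across rounds --- still delivers the same asymptotic guarantee as the oracle-style centralized collection of Proposition \ref{prop:four}. The crucial observation is that the distance decoder's correctness at any individual sensor depends only on (i) the underlying disjunctness of $\mathbf{C}$, which is a property of the phase $t_I$ construction independent of the gossip schedule, and (ii) the number of linearly independent rows per cluster reaching that sensor, which is precisely the hypothesis of the proposition. Correlations in the gossip paths across sensors therefore do not compromise per-sensor correctness, and a single union bound over sensors suffices to cover the whole network simultaneously.
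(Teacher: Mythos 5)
Your proposal is correct and follows essentially the same route as the paper, whose own proof is a two-line reduction to Proposition \ref{prop:four}: it observes that the gossip-based collection method does not affect the decoder once each sensor holds $B/L$ linearly independent messages per cluster, and defers everything else to that proposition. You simply unpack the chain of Propositions \ref{prop:one}--\ref{prop:four} explicitly and add a union bound over sensors, which the paper leaves implicit.
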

\begin{proof}
The message collection method does not influence the decoder performance, since the number of per-cluster measurements is sufficient for decoding with high probability. Therefore, the proof follows from the proof of Proposition \ref{prop:four}.
\end{proof}

\subsection{Detection of multiple defective sensors in the network}\label{sec:multipledef}
We analyze now the distributed detection of multiple defective sensors, where the number of defective sensors is much smaller than the total number of sensors. We propose here to slightly modify our distributed algorithm and to limit the decoder search space to be able to apply the Hamming distance decoder. The protocol modification and the adaptation of the distance decoder are described below. We assume that sensors completely differentiate between sensors in the network that belong to particular clusters and that at most one defective sensor is located in a given cluster. This knowledge limits the size of the decoder search space. 

The proposed protocol is first modified as follows to deal with multiple defective sensors. A decoder error occurs when two or more messages with positive test outcomes are combined together during the phase $t_{II}$, since the distance preserving property defined in Eq. (\ref{eq:distances}) is not guaranteed in this case. Since the number of defective sensors is very small compared to the total number of sensors, this event however occurs rarely. We explain the protocol modification with a simple example. Let the sensor $i$ pull the message from the sensor $j$, where both sensor test outcomes have nonzero values. Instead of combining the messages as in Eq. (\ref{eq:diss}), we simply buffer the new message of sensor $i$ and consider the message from sensor $j$ at previous round as the final outcome of the phase $t$:
\begin{eqnarray}\label{eq:diss_modified}
g_i{(t)} =  g_{j}{(t-1)},\nonumber\\
\mathbf{W}_{i,:}(t)= \mathbf{W}_{j,:}{(t-1)}.
\end{eqnarray}
At the first subsequent round $\tau\ge t+1$ of our distributed algorithm where both messages $g_i{(\tau)}$ and $g_{j}{(\tau-1)}$ have non-zero values as test outcomes, $g_i{(\tau)}$ is replaced by the message buffered in node $i$. The rest of the protocol remains unchanged. 

Then the decoding proceeds in two main steps. First, the appropriate unions of test matrix columns are created to form a search set space and second, the Hamming distance between the test outcome vector and the vectors of the search set are computed. The minimum Hamming distance indicate the solution of the detection problem. The outcomes $\mathbf{g}=[ \mathbf{g}_0 \,\mathbf{g}_1]^T$ collected at some sensor are divided into two sets, i.e., the negative and positive outcome vectors $\mathbf{g}_0$ and $\mathbf{g}_1$, respectively. Subsequently, the rows of the test matrix $\mathbf{W}$ form two sub-matrices $\mathbf{W}_0$ and $\mathbf{W}_1$ and Eq. (\ref{eq:prob}) is rewritten as:
\begin{equation}
 \left[ 
\begin{array}{c}
\mathbf{g}_0\\
\mathbf{g}_1 \\
\end{array} 
\right] = 
 \left[ 
\begin{array}{cc}
\mathbf{W}_0 & 0\\
0& \mathbf{W}_1\\
\end{array} 
\right]
 \left[ 
\begin{array}{c}
\mathbf{f}_0\\
\mathbf{f}_1 \\
\end{array} 
\right].
\end{equation}
We eliminate non-defective sensors from $\mathbf{W}_1$ using the knowledge from $\mathbf{W}_0$ and obtain $\mathbf{W}_1^{'}$. The columns of interest are those columns of $\mathbf{W}_1^{'}$ which contain at least one non-zero value. These columns are classified in sets $\mathcal{H}$, whose size depends on the complete or partial sensor knowledge about cluster affiliation of other sensors in the network. Columns belonging to the same cluster are grouped together in a set $\mathcal{H}_i$, where $i\in\{1,\dots,L\}$ and $L$ is the number of clusters. The search space $\mathcal{U}$ consists of vectors that are obtained from unions of up to $K$ columns, where each column is picked from a different set $\mathcal{H}_i$. We choose up to $K$ columns, since the number of defective elements can be smaller than $K$ by the problem definition, while the selection of at most one column from a particular $\mathcal{H}_i$ comes from the assumption that at most one defective sensor exists in each cluster. For instance, let the number of defective sensors and clusters be $(K,L)=(2,2)$. Let $\mathcal{H}_1$ contain $h_1$ and $\mathcal{H}_2$ contain $h_2$ columns. Then the search space size has in total $h_1h_2+h_1+h_2$ elements, where $h_1 h_2 = \binom{h_1}{1} \cdot {\binom{h_2}{1}}$ denotes the number of unions of $K=2$ columns and single column subsets are chosen in $h_1+h_2$ ways. Distance decoding is performed between $\mathbf{g}_1$ and elements of the set $\mathcal{U}$, starting from the vectors that are created as unions of $K$ columns towards the smaller number of column unions. If no solution exists for a particular value of $K$, we perform the decoding for vectors built from $K-1$ column unions of $\mathcal{H}_i$. If no unique solution is found, we encounter a decoding failure. 

Now that the decoder has been described, we analyze in details the number of required messages that are necessary for detection of multiple defective sensors with high probability.
\begin{prop}\label{prop:eight}
Under the assumption that at most one defective sensor is present in the cluster, that the number of available linearly independent messages at all sensors is at least $B/L$ per cluster, where $B\hspace{-1mm}\ge\hspace{-1mm} \mathcal{O}(K \log(S)/p^3)$ and that sensors know membership identifiers of all the clusters in the network, the distance decoder detects defective sensors at all sensors in the network with high probability.
\end{prop}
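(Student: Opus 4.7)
The plan is to reduce this proposition to Proposition \ref{th:mytheoremII} by carefully accounting for the modifications the multiple-defective case introduces, namely the buffering rule in Eq. (\ref{eq:diss_modified}) and the restricted search space $\mathcal{U}$. First I would verify that the modified dissemination rule preserves the key Hamming-distance property established in Proposition \ref{prop:two} for every defective sensor's column. The buffering rule avoids exactly the case where XOR-ing two positive-outcome messages would scramble the per-defective contribution; by the single-defective-per-cluster hypothesis, each positive-outcome message can be traced to a unique defective sensor $s_k$, and the derivation in Eq. (\ref{eq:distances}) applies column-by-column, so the distance between the accumulated outcome $\mathbf{g}$ and each column $\mathbf{W}_{:,k}$ corresponding to a defective sensor stays bounded by $\epsilon$ as defined in Eq. (\ref{eq:epsilon}).

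Next I would invoke Proposition \ref{prop:four}. The hypothesis that at least $B/L$ linearly independent messages per cluster are collected at every sensor, with $B \ge \mathcal{O}(K\log(S)/p^3)$, guarantees that the per-cluster blocks $\mathbf{C}_i$ satisfy the $(K,\epsilon_i)$-disjunct property with high probability, and hence that their block-diagonal concatenation $\mathbf{C} = \mathrm{diag}(\mathbf{C}_1,\dots,\mathbf{C}_L)$ is $(K,\epsilon)$-disjunct by Proposition \ref{prop:three}. Because each cluster contains at most one defective sensor, the true support of $\mathbf{f}$ consists of at most $K$ columns drawn from distinct cluster blocks, which is precisely the structure exploited by the search space $\mathcal{U}$ constructed from the sets $\{\mathcal{H}_i\}$.

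To conclude correctness of the modified distance decoder, I would bound the probability of a decoding error by a union bound over $\mathcal{U}$. For any candidate union $\mathbf{u} \in \mathcal{U}$ that differs from the true defective support, the $(K,\epsilon)$-disjunctness of $\mathbf{C}$ ensures that $\mathbf{u}$ and the noise-free outcome disagree in more than $\epsilon$ positions; applying the Chernoff bound of Eq. (\ref{eq:cond1}) to the noise bit-flips shows that the probability that $\mathbf{u}$ falls within distance $\epsilon$ of the observed $\mathbf{g}_1$ is $\exp(-\Omega(Bp^3/K))$. Since the cardinality of $\mathcal{U}$ is at most $\binom{S}{K} = \mathcal{O}(S^K)$, the union bound incurs a $K\log S$ factor that is absorbed by taking $B = \mathcal{O}(K\log(S)/p^3)$, and the overall failure probability vanishes as $S \to \infty$. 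Meanwhile, the true support lies in $\mathcal{U}$ and, by the distance-preservation argument of the first paragraph, lies within $\epsilon$ of $\mathbf{g}_1$, so the decoder returns it.

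The principal obstacle I expect is the bookkeeping required to justify rigorously that the buffering modification in Eq. (\ref{eq:diss_modified}) preserves the invariant $\mathbf{g} = \mathbf{W}\otimes\mathbf{f}$ at every sensor over arbitrarily many rounds, despite some messages being deferred. One needs to argue that each deferred message is eventually integrated without violating the per-defective-sensor distance bound, and that this deferral does not spoil the linear independence count $B/L$ per cluster assumed in the statement. Once this accounting is in place, the remainder of the argument is a direct combination of Proposition \ref{th:mytheoremII} with the union bound over $\mathcal{U}$.
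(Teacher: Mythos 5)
Your proposal is correct and follows the same route as the paper: reduce to the single-defective-per-cluster setting of Proposition \ref{th:mytheoremII} by noting that the buffering rule of Eq. (\ref{eq:diss_modified}) prevents two positive-outcome messages from being combined, so the distance preservation of Proposition \ref{prop:two} applies to each defective sensor's column separately and the restricted search space $\mathcal{U}$ contains the unique solution. The paper's own proof is in fact far terser---it simply asserts that the procedure is identical to per-cluster decoding and invokes Proposition \ref{th:mytheoremII}---so your explicit union bound over $\mathcal{U}$ and the bookkeeping concern you raise about the deferred messages supply details the paper leaves implicit rather than diverging from its argument.
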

\begin{proof}
To recall, the transmission protocol ensures that the assumptions imposed by Proposition \ref{th:mytheoremII} hold for one defective sensor. Then, due to the assumption that at most one defective sensor is present in one cluster and that there is at most one defective sensor active in the test, we can form the set of solutions for the multiple defective case, which has a unique solution. Distance decoder between the outcome vector and a limited set of vectors that form a full search space can therefore find the appropriate solution. In other words, this procedure is identical to per-cluster decoding, where each cluster has at most one defective element, so the Proposition \ref{th:mytheoremII} can be applied. 
\end{proof}
\begin{prop}\label{prop:nine}
Under the assumption that one defective sensor at most is present in the cluster, that the number of available linearly independent messages at all sensors in the network is at least $B/L$ per cluster, where $B\hspace{-1mm}\ge\hspace{-1mm} \mathcal{O}(K \log(S)/p^3)$ and sensors know the partial set of identifiers of  the clusters in the network, the distance decoder detects defective sensors at all sensors in the network with high probability. 
\end{prop}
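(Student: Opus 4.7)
The plan is to show that partial knowledge of cluster identifiers only enlarges the decoder's search space, while leaving all of the underlying disjunctness and distance-preservation arguments of Propositions \ref{prop:two}--\ref{prop:four} intact. I would first partition the $L$ clusters into the subset $\mathcal{L}_k$ whose identifiers are known to the given sensor and the complementary subset $\mathcal{L}_u$ of clusters whose identifiers are not known. For clusters in $\mathcal{L}_k$ the decoder constructs the groups $\mathcal{H}_i$ and applies the one-column-per-set rule exactly as in Proposition \ref{prop:eight}. For clusters in $\mathcal{L}_u$, the sensor cannot disambiguate membership, so the columns from all such clusters are pooled into a single aggregate set $\mathcal{H}_u$, and the decoder is forced to enumerate unions that may pick multiple columns from $\mathcal{H}_u$ (up to the global sparsity budget $K$).

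Next I would verify that the true support still lies in the resulting enlarged search space $\mathcal{U}'$: by the at-most-one-defective-per-cluster hypothesis, each defective column comes from a distinct cluster, so its membership in either a known $\mathcal{H}_i$ or in the aggregate $\mathcal{H}_u$ is consistent with a valid candidate union in $\mathcal{U}'$. Because the message construction and the modified combination rule in Eq. \eqref{eq:diss_modified} are unchanged, the distance-preservation statement of Proposition \ref{prop:two} and the block-diagonal disjunctness of Proposition \ref{prop:three} apply verbatim, yielding $\mathrm{dist}(\mathbf{g}, \bigvee_{k\in \mathcal{K}} \mathbf{W}_{:,k}) \le \epsilon$ with high probability whenever the per-cluster message budget $B/L \ge \mathcal{O}(K \log S/p^3)$ is met.

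Finally I would invoke the $(K,\epsilon)$-disjunctness of the diagonal test matrix $\mathbf{C} = \mathrm{diag}(\mathbf{C}_1,\dots,\mathbf{C}_L)$ from Proposition \ref{prop:three} to argue that every incorrect candidate in $\mathcal{U}'$---including those that inadvertently pool two columns from within a single unknown cluster---differs from $\mathbf{g}$ in strictly more than $\epsilon$ positions with overwhelming probability. Taking a union bound over $|\mathcal{U}'|$ against the Chernoff tail of Eq. \eqref{eq:cond1} then produces the claim, reducing the statement to the same regime as Proposition \ref{prop:four}.

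The main obstacle is controlling the union-bound inflation caused by the enlarged search space: $|\mathcal{U}'|$ can be substantially larger than its counterpart in Proposition \ref{prop:eight}, especially when $|\mathcal{L}_u|$ is close to $L$. The key observation that makes the argument go through is that $|\mathcal{U}'|$ grows only polynomially in $S$ (bounded by $\binom{S}{K}$), so the exponential tail $e^{-\mathcal{O}(Bp^3/K)}$ from Eq. \eqref{eq:cond2} continues to dominate provided $B = \mathcal{O}(K \log S / p^3)$, preserving the high-probability detection guarantee regardless of how the unknown clusters are distributed.
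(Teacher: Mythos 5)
Your proposal is correct and follows essentially the same route as the paper, whose own proof consists of the single observation that the search space is larger but still contains the solution, after which it defers to Proposition \ref{prop:eight}. Your additional care---checking that the true support survives in the enlarged space $\mathcal{U}'$ and that the union bound over $|\mathcal{U}'|\le\binom{S}{K}$ is still dominated by the exponential tail of Eq. (\ref{eq:cond2})---fills in exactly the details the paper leaves implicit, rather than constituting a different argument.
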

\begin{proof}
The search space $\mathcal{U}$ created in this case is larger but it contains the solution. Now the proof is identical to that in the previous proposition.
\end{proof}

Finally, we show that the assumption of at most one defective sensor occurrence per cluster is reasonable. We here bound the probability that at least two defective sensors occur within any cluster. An erroneous message is generated in a cluster that contains more than one defective sensor when only a fraction of defective sensors participate in the test actively and we denote the probability of such an event with $P(E)$. If defective sensors participate in the test, the distance within the column that signifies these vectors and the outcome result does not change. The same occurs if none of the defective sensors participate in a test. Due to the protocol modification, only one cluster may generate the erroneous message per round. In total we assume there are $m\in \{2,\dots,K\}$, $m\le n$ defective sensors and that clusters contain $n=\frac{S}{L}$ sensors. 
Then, the probability of decoding error in one cluster $P_{cl}(E)$ is equal to:
\begin{equation}
P_{cl}(E) = \sum_{m=2}^K P(n,q|m) P(m)=\sum_{m=2}^KP(n|m) P(q|m) P(m),
\end{equation}
due to independence of parameters $n$ and $q$. $P(m)$ represents the probability that some cluster contains $m$ defective sensors, $P(n|m)=\binom{n}{m}$ is a probability of choosing $m$ defective sensors within a cluster with $n$ sensors and $P(q|m)$ denotes the conditional probability of the error occurrence in a cluster with $m$ defective sensors and test participation probability $q$. We assume that $m$ takes a value from the set $\{2,\dots,K\}$ with uniform distribution, so $P(m)=\frac{1}{K-1}$. Next, $P(q|m)=1-q^{m}-(1-q)^{m}$ (Appendix \ref{ssec:p(q|m)}). Total error probability for $L$ clusters is bounded by $P(E) \le L\cdot P_{cl}(E)$, so:
\begin{equation}
P(E) \le L\frac{1}{K-1}\sum_{m=2}^K \frac{1-q^{m}-(1-q)^{m}}{\binom{n}{m}}.
\end{equation}
We use the well known binomial coefficient inequality $\binom{n}{m}\ge(\frac{n}{m})^{m}$ that holds for $n,m>0$ where $m<n$ and $1-q^{m}-(1-q)^{m} \le 1$, $q\in\{0,1\}$ to bound the value:
\begin{equation}
\frac{1-q^{m}-(1-q)^{m}}{\binom{n}{m}} \le \frac{1-q^{m}-(1-q)^{m}}{(\frac{n}{m})^{m}} < \frac{1}{(\frac{n}{m})^{m}},
\end{equation}
We rewrite $(\frac{n}{m})^{m}$ by using a well known inequality as $(\frac{n}{m})^{m} = (1+\frac{n-m}{m})^m \le e^{n-m}$. Plugging these expressions to the previous expression and performing simple calculations we finally obtain:
\begin{equation}
P(E) < \frac{L}{K-1} e^{2-n}\frac{e^{K-1}-1}{e-1}.
\end{equation}
For the network values $(S,L,K) = (70,5,3)$ this probability is bounded with $P(E)<1.1 \cdot 10^{-4}$.

The distance decoder error probability due to our assumption that only one defective sensor is present in the network is small. In addition, the decoder threshold value can be updated to increase the robustness. We increase the value of threshold parameter as $\epsilon^{'}=\epsilon+\delta_{\epsilon}$, where $\delta_{\epsilon} = P(E) E(\mathbf{g_1})$ and $E(\mathbf{g_1})$ is the expected number of non-zero test outcomes. It is set to the total number of observed positive test outcomes.

\section{Performance evaluation}\label{sec:exper}
\subsection{Setup}
In this section, we investigate the performance of our distributed detection method denoted as GP in various scenarii. We first examine the influence of the different network parameters in the rate of dissemination of messages. Next, we examine the decoding probability for both single and multiple defective sensor(s) detection. The number of system rounds required to collect the necessary number of messages for the accurate decoding varies with the topology. The simulations are performed for fully connected, $k$-connected and irregular graphs. 
Finally, we discuss the number of required linearly independent measurements needed for successful detection and compare it with the theoretical one. 

We also analyze the performance of several alternative schemes, namely a Random Walk method that employs a Gossip mechanism with pull protocol (RWGP) and a classical Random Walk (RW) detection. A random walk determines the path of successive random dissemination message exchanges between neighbor sensors. In the RWGP method, the random walk is initiated at $L$ sensors (equivalent to the master sensors in the GP method) and terminates after a pre-determined number of rounds. The sensors create messages from the sensor measurements collected along the random walk path. These messages are transmitted with the gossip algorithm that uses a pull protocol. Note that, for identical choice of the sensors over rounds, RWGP and GP are identical. The RW method initiates the raw (uncompressed) measurements collection in $L$ random sensors and completes it in a given number of rounds. Every sensor that lays along the random walk path stores the values of all sensors along the transmission path. When all the sensors receive all the data, the process terminates.

The GT algorithm is also compared with a Store-and-Forward (SF) and a Greedy Store-and-Forward (GSF) method that employs pull protocol. Both algorithms disseminate raw sensor measurements. For the SF method, upon receiving a message request, a node responds by forwarding randomly chosen messages from the available set of messages. In GSF, each sensor randomly requests the innovative measurements in a greedy manner from its randomly chosen neighbor sensor. This procedure involves additional message exchange among sensors in every round. 

We analyze the performance of these algorithms in fully connected, k-regular graphs and irregular networks. For irregular sensor networks construction, we place sensors randomly in a unit square area. Sensors that lay within a certain radius can communicate and exchange messages directly. In each case, we build $10$ different network realizations and for each such realization we perform $100$ independent simulations. The results are averaged over all simulations. 

\subsection{Influence of the master node selection process}
First, we study the influence of networks' capability to generate innovative messages on the decoder performance. We consider two different methods for selecting master sensors: random master sensor selection (RM) and deterministic master sensor (DM) selection. Fig. \ref{fig:20sens_randclusterno1} illustrates the detection probability and the achieved average rank with respect to the number of message dissemination rounds, for fully connected graphs with $S=20$ sensors and one ($K=1$) defective sensor. We observe that the performance depends on $L$ and ${\alpha}=qK$ for both RM and DM. These values should be selected properly in order to maximize the information diversity in the network. Specifically, we observe that RM achieves the maximum message diversity for $\alpha=1$ (maximum value) since the diversity of messages in this case is maximized by construction in Fig. \ref{fig:20sens_randclusterno1}. We can also note that the number of clusters does not affect significantly the detection performance of RM. On the contrary, for DM both parameters $L$ and $\alpha$ are important. Small values of $\alpha$ guarantee high message diversity. This is due to the fact that DM requires more rounds to receive enough messages for detection. In the following, we focus on RM selection where possible (that is, for $K=1$), as it provides higher probability of creating innovative messages. 
\begin{figure*}[thb]
\begin{center}
\begin{tabular}{cc}
~\includegraphics[width=  7cm]{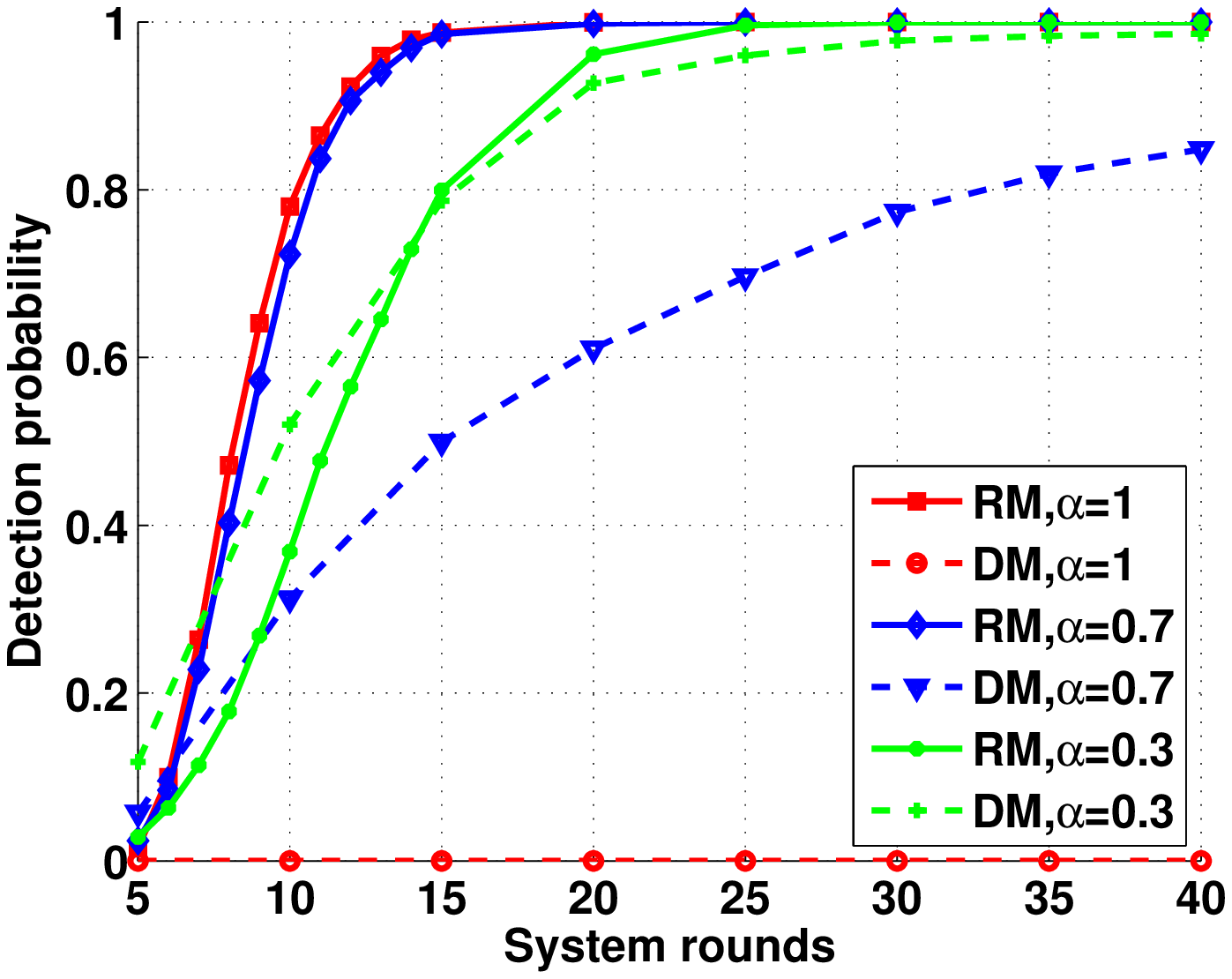}~&
~\includegraphics[width=  7cm]{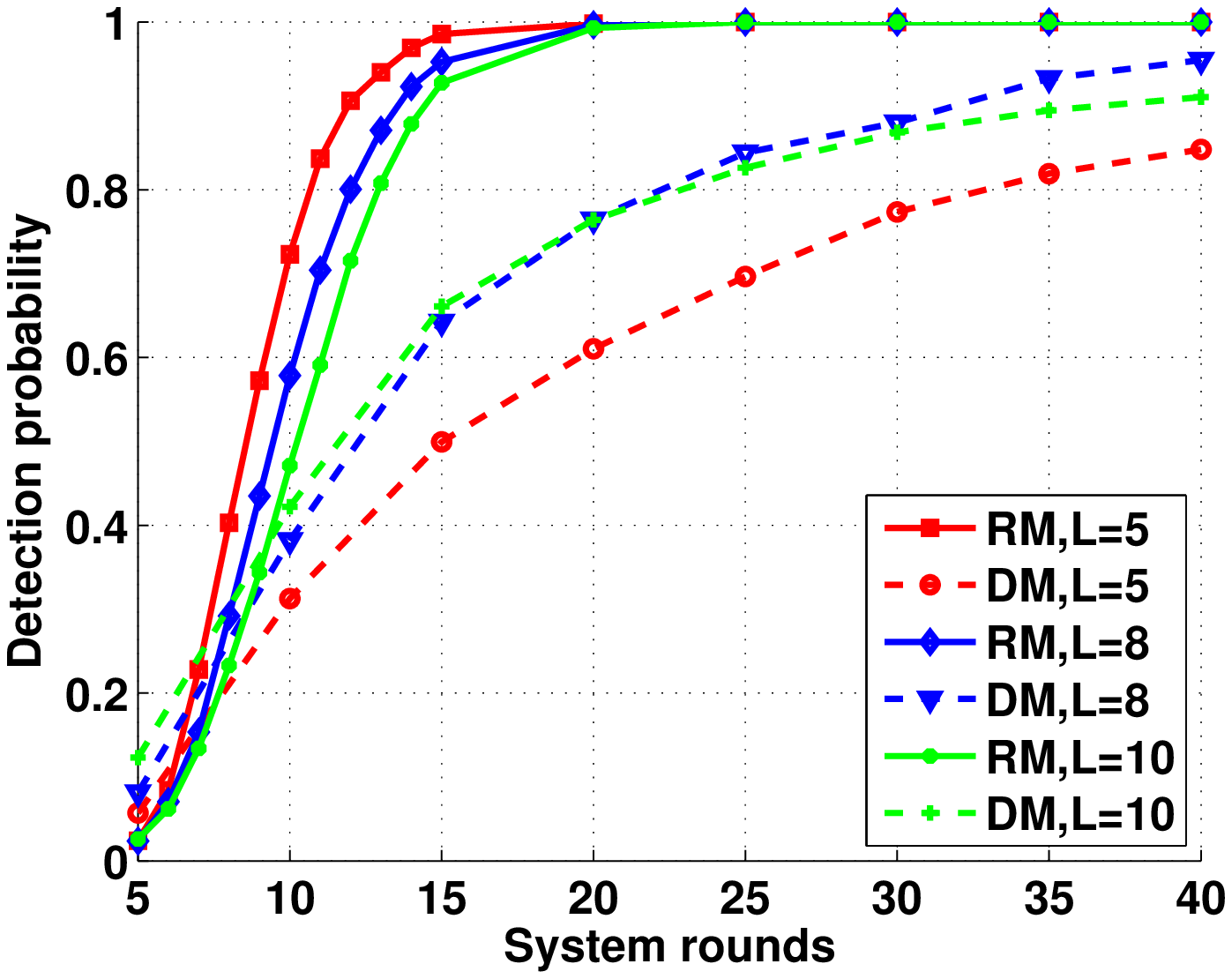}~\\
~\includegraphics[width=  7cm]{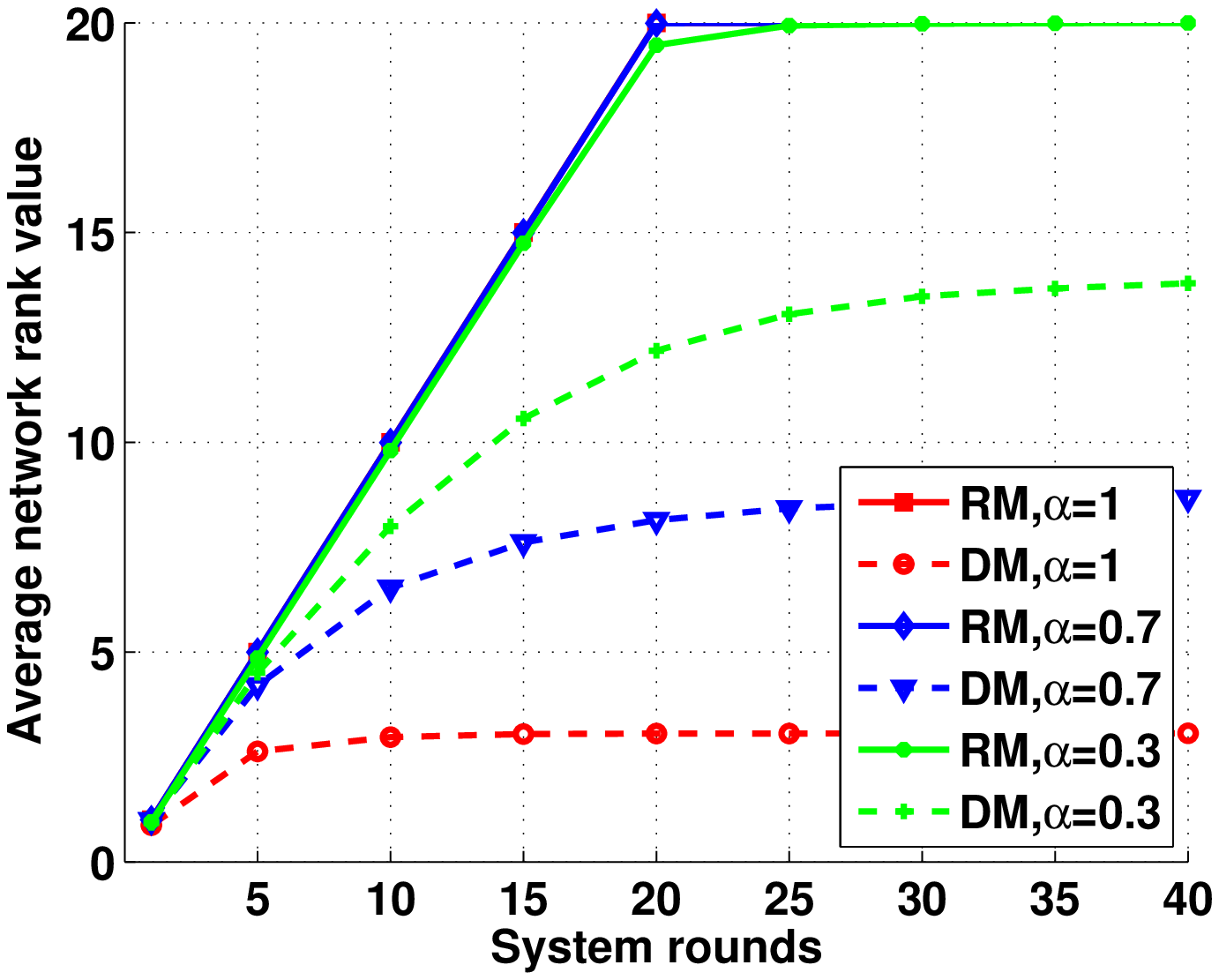}~&
~\includegraphics[width=  7cm]{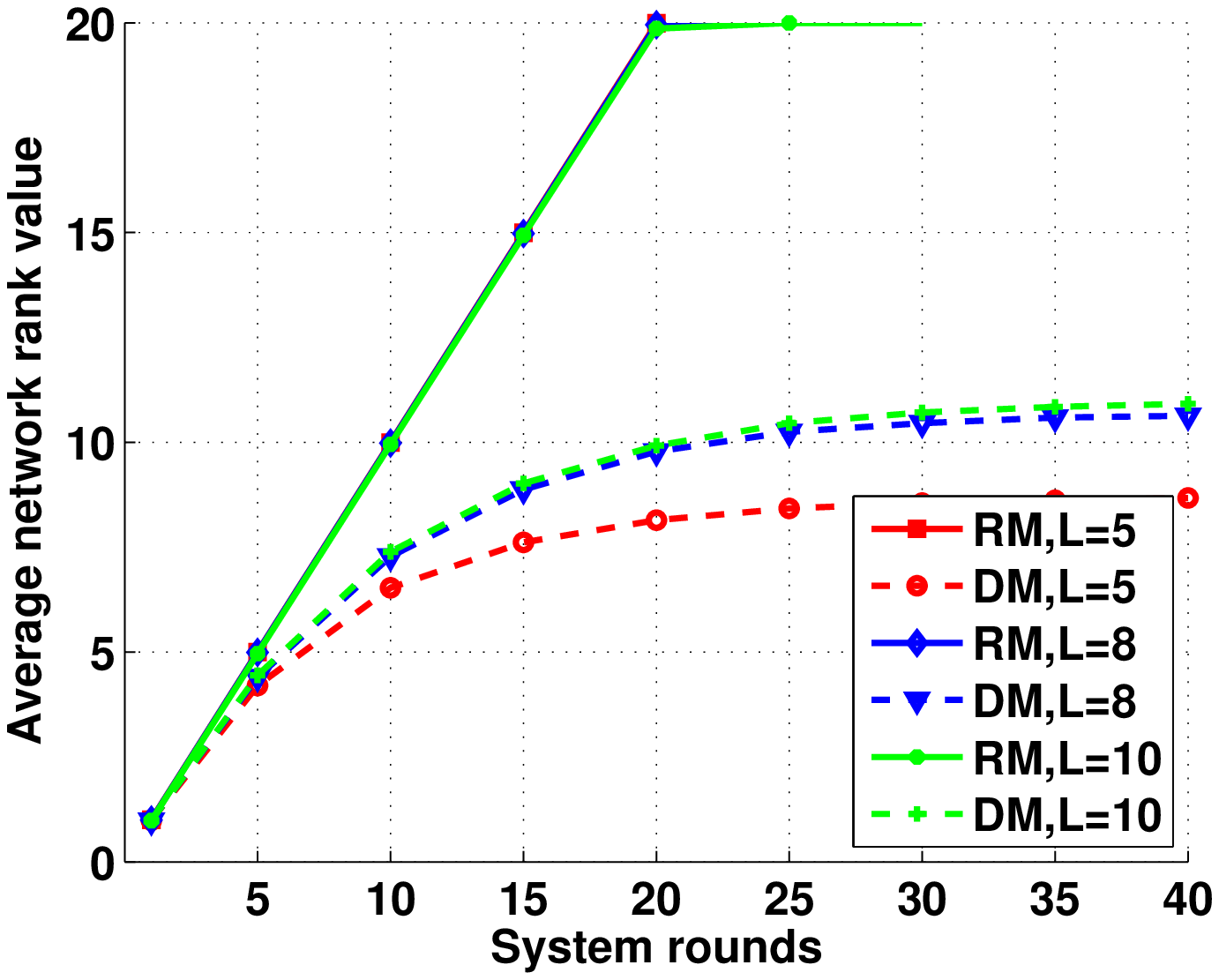}~\\
~(a) $L=5$~&~(b) $\alpha=0.7$~\\ 
\end{tabular}
\end{center}
\caption{Simulation results for fully connected graphs with $S=20$ sensors, $K=1$, where RM and DM denote the random and deterministic selection mode of master sensors, respectively. Top row: Probability of defective sensor detection. Bottom row: Average rank of messages received per sensor. Column (a): fixed values of the master sensors ($L=5$). Column (b): fixed values of the sensor participation constant ($\alpha = qK= 0.7$).}
\label{fig:20sens_randclusterno1}
\end{figure*}
\subsection{Detection performance}
We first consider the case of a single defective sensor ($K=1$). The detection probability and the average rank evolution over rounds are examined for fully connected (FG) and $k$-connected regular networks (RG) with sensors degree $k\in\{6,16\}$. For all cases, the network consists of $S=20$ sensors. From Fig. \ref{fig:20sens_k6_conn} we see that networks with higher number of connections achieve faster dissemination of innovative messages. We also note that high connectivity value $k$ is beneficial, but it cannot drive by itself the performance of our detection scheme. It should be combined with appropriate choice of network parameters, as discussed earlier. For example, RM master sensor selection for $k=16$ achieves better detection performance, compared to that of fully connected graphs.

\begin{figure*}[thb]
\begin{center}
\begin{tabular}{cc}
~\includegraphics[width=  7cm]{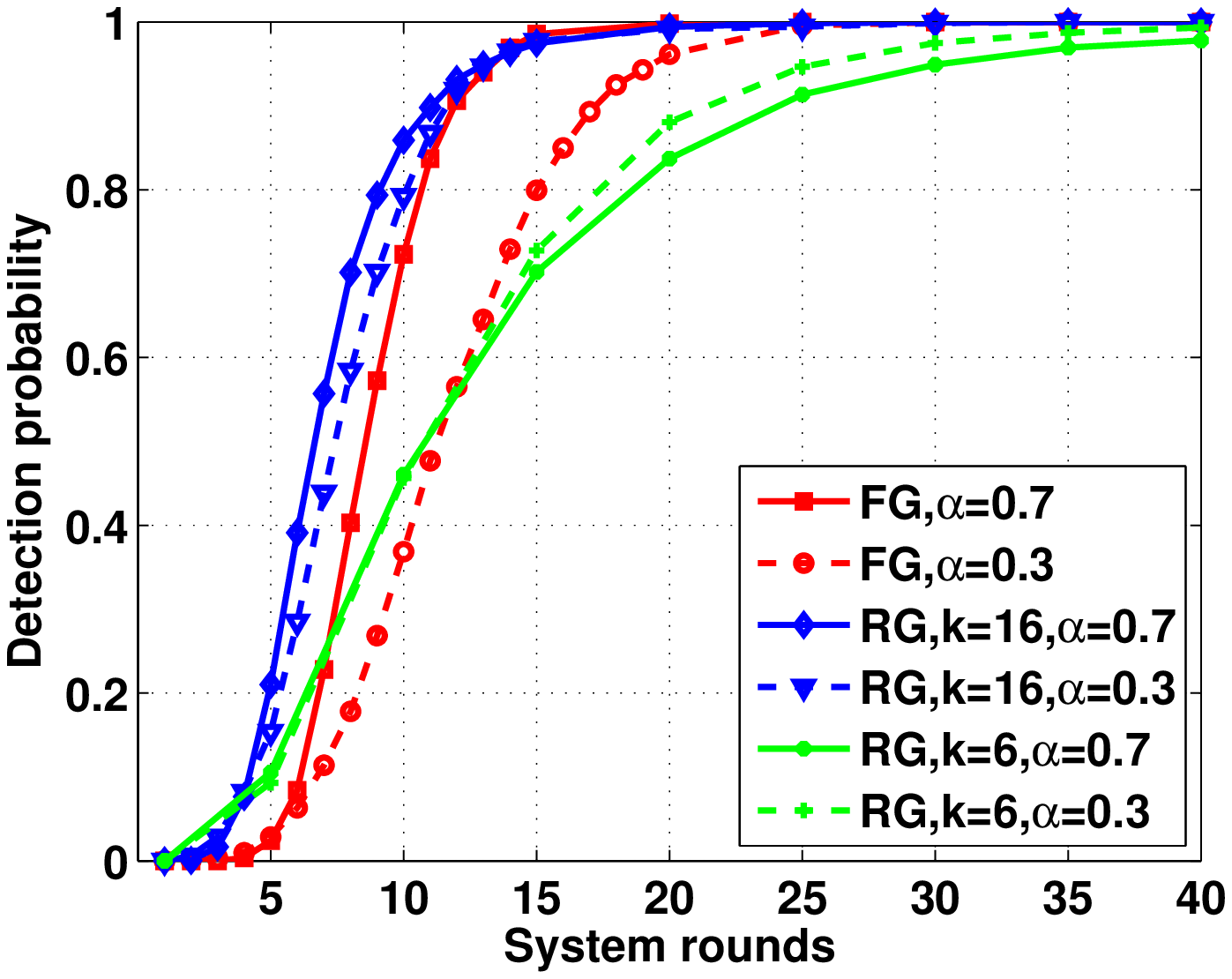}~&
~\includegraphics[width=  7cm]{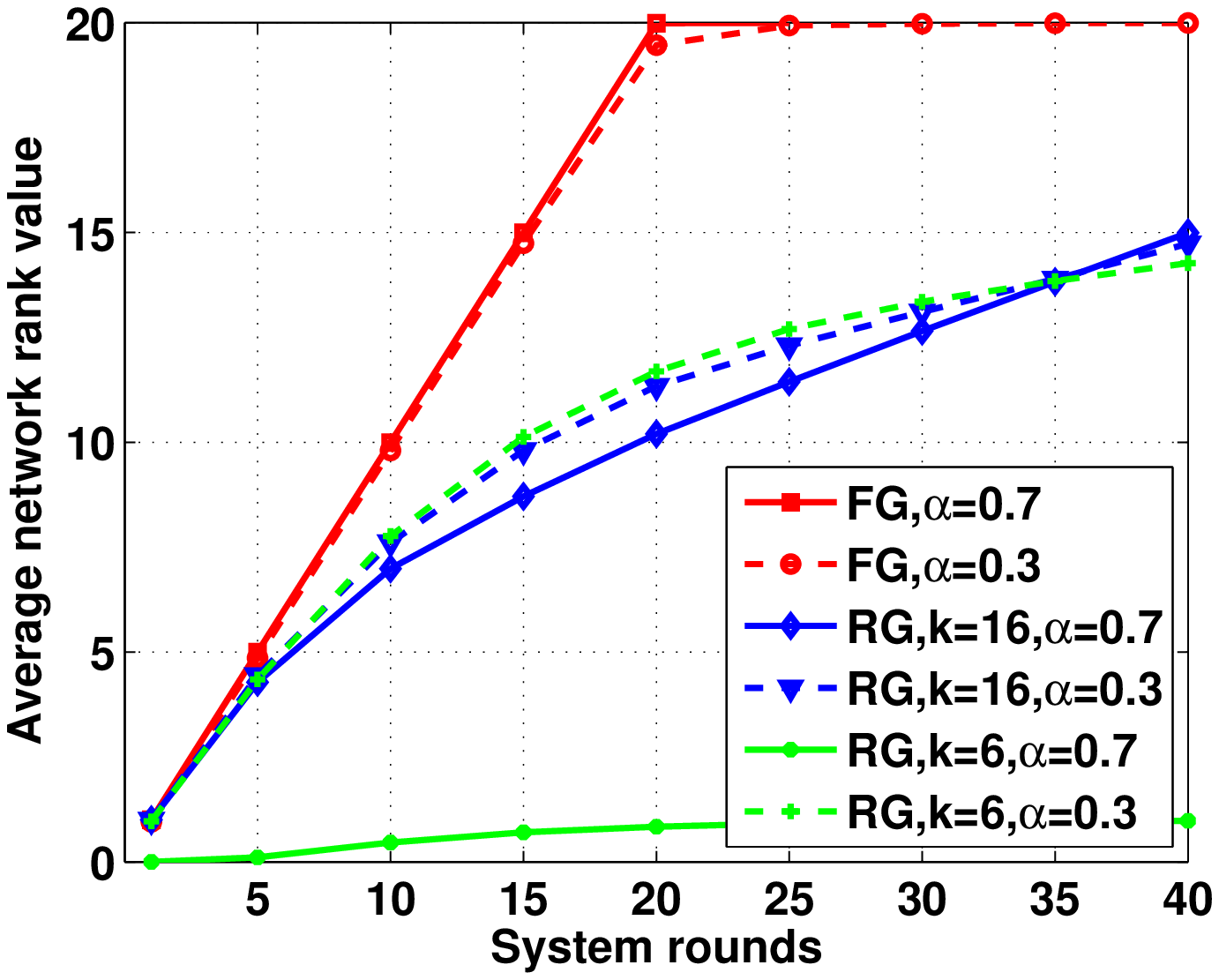}~\\
~(a)~&~(b) \\
\end{tabular}
\end{center}
\caption{Simulation results for fully connected (FG), $k=16$-regular connected (RG, $k=16$) and $k=6$-connected graphs (RG, $k=6$) with $S=20$ sensors, $K=1$ and a random selection (RM) of $L=5$ master sensors: (a) Probability of defective sensor detection; (b) Average rank of messages received per sensor.}
\label{fig:20sens_k6_conn}
\end{figure*}

In Fig. \ref{fig:20sens_irregular1_alpha0_7}, we illustrate the detection probability for random graphs ($100$ simulations per different graph) with $S=20$, $K=1$ defective sensor, $L=5$ random clusters and minimum sensors' degree $k\ge3$. We observe that random graphs require more rounds in average for successful detection, as expected. Also, we observe that the detection performance decreases because of the limited message diversity (smaller probability of receiving innovative messages) and the low connectivity. Similarly, Fig. \ref{fig:70sens_irregular1_alpha0_7} presents results for larger networks which are in accordance with the above.

\begin{figure*}[thb]
\begin{center}
\begin{tabular}{cc}
~\includegraphics[width=  7cm]{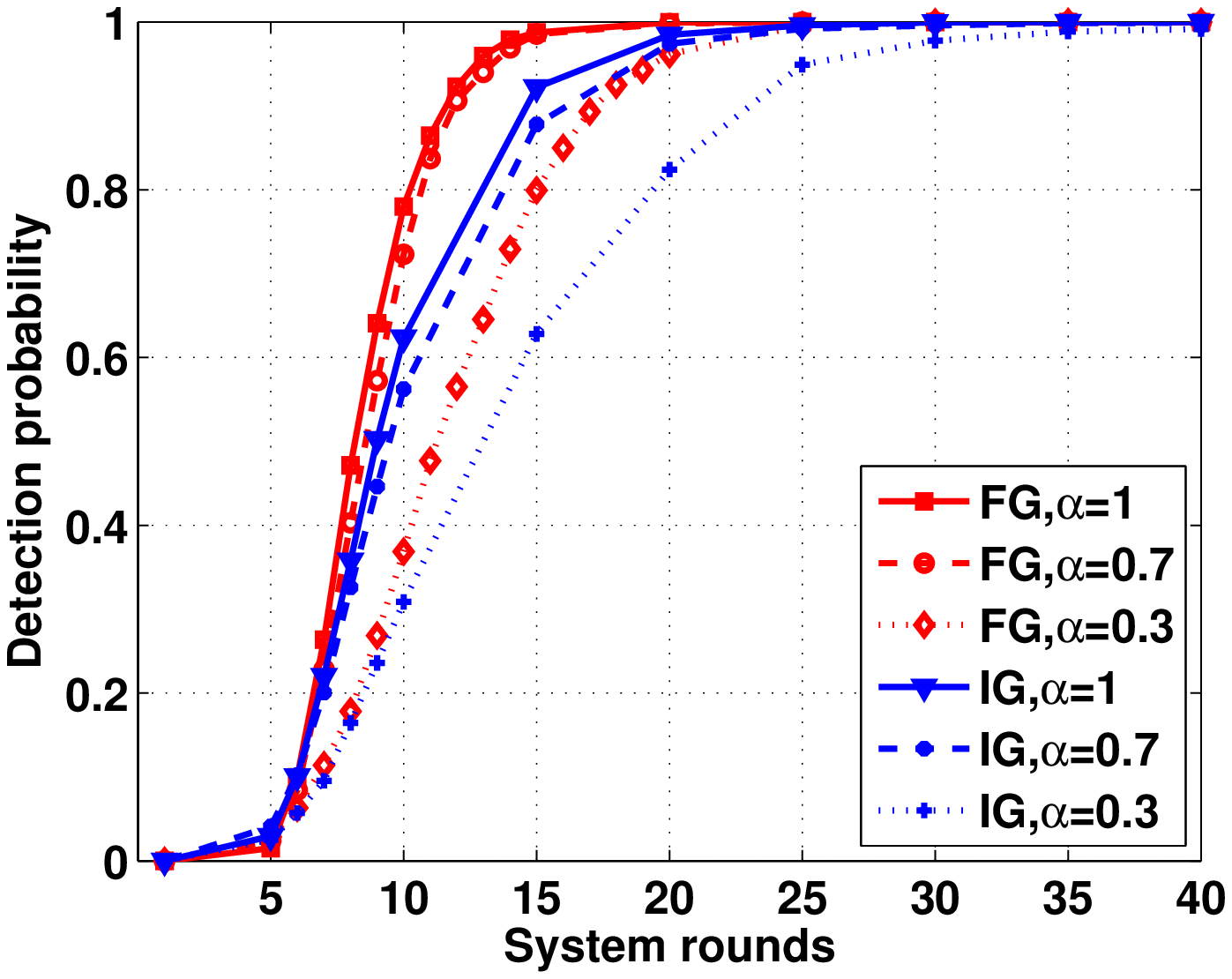}~&
~\includegraphics[width=  7cm]{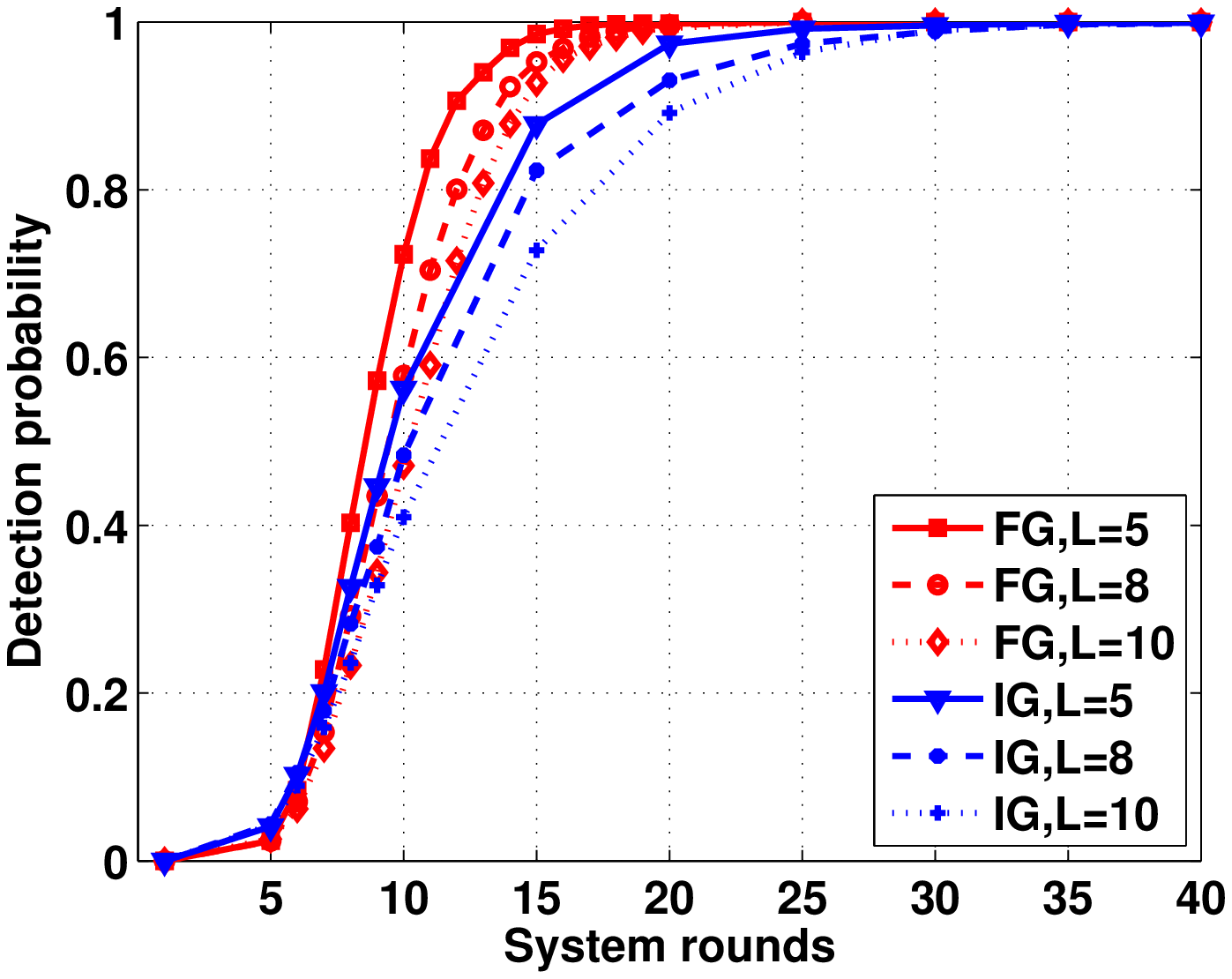}~\\
~(a) $L=5$~&~(b)$\alpha=0.7$~\\
\end{tabular}
\end{center}
\caption{Probability of defective sensor detection; Simulation results for irregular graphs ($k>3$) and random selection (RM) of $S=20$ sensors, $K=1$. (a) $L=5$ master sensors; (b) sensor participation constant $\alpha=qK=0.7$.}
\label{fig:20sens_irregular1_alpha0_7}
\end{figure*}
\begin{figure*}[htb]
\begin{center}
\begin{tabular}{cc}
~\includegraphics[width=  7cm]{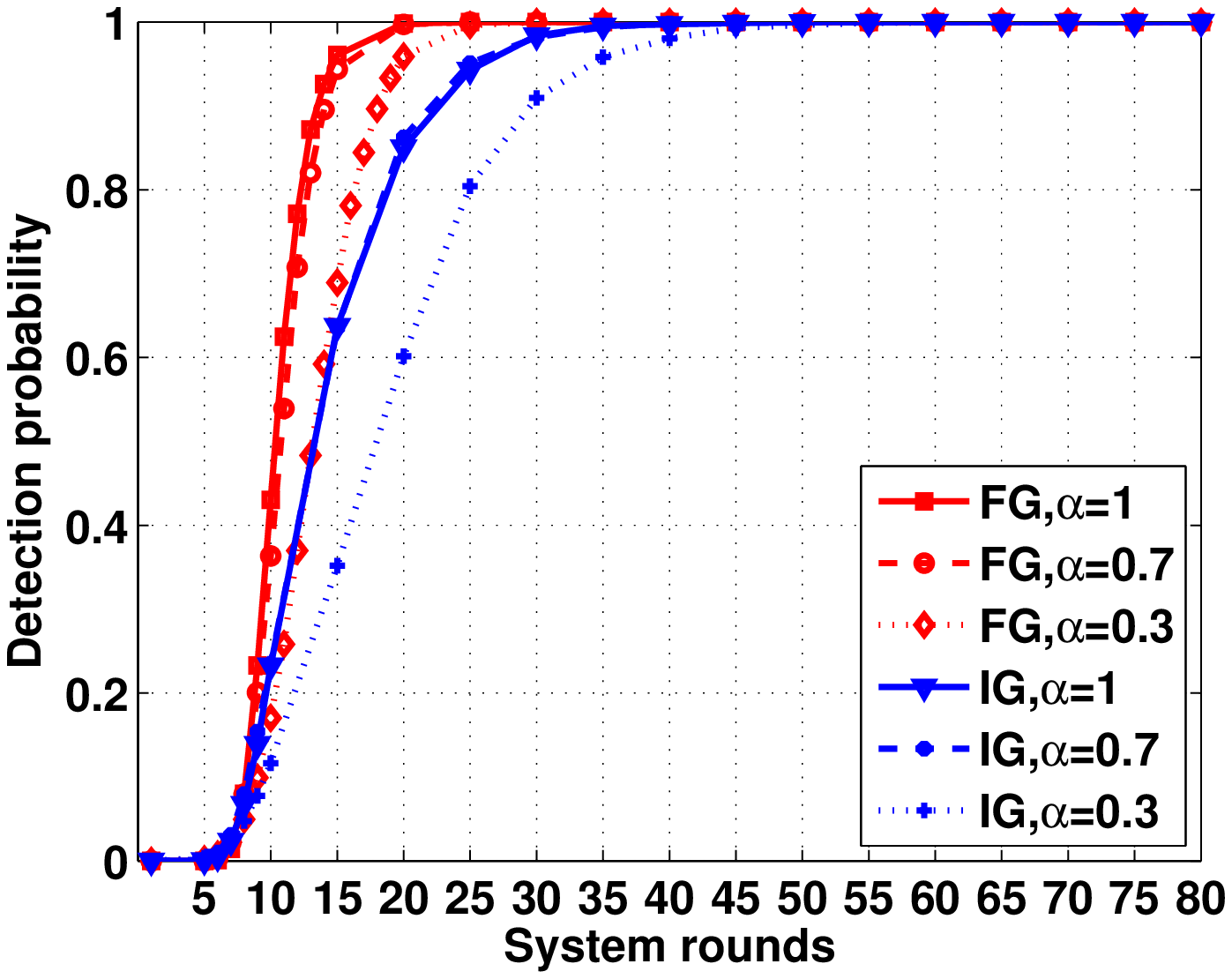}~&
~\includegraphics[width=  7cm]{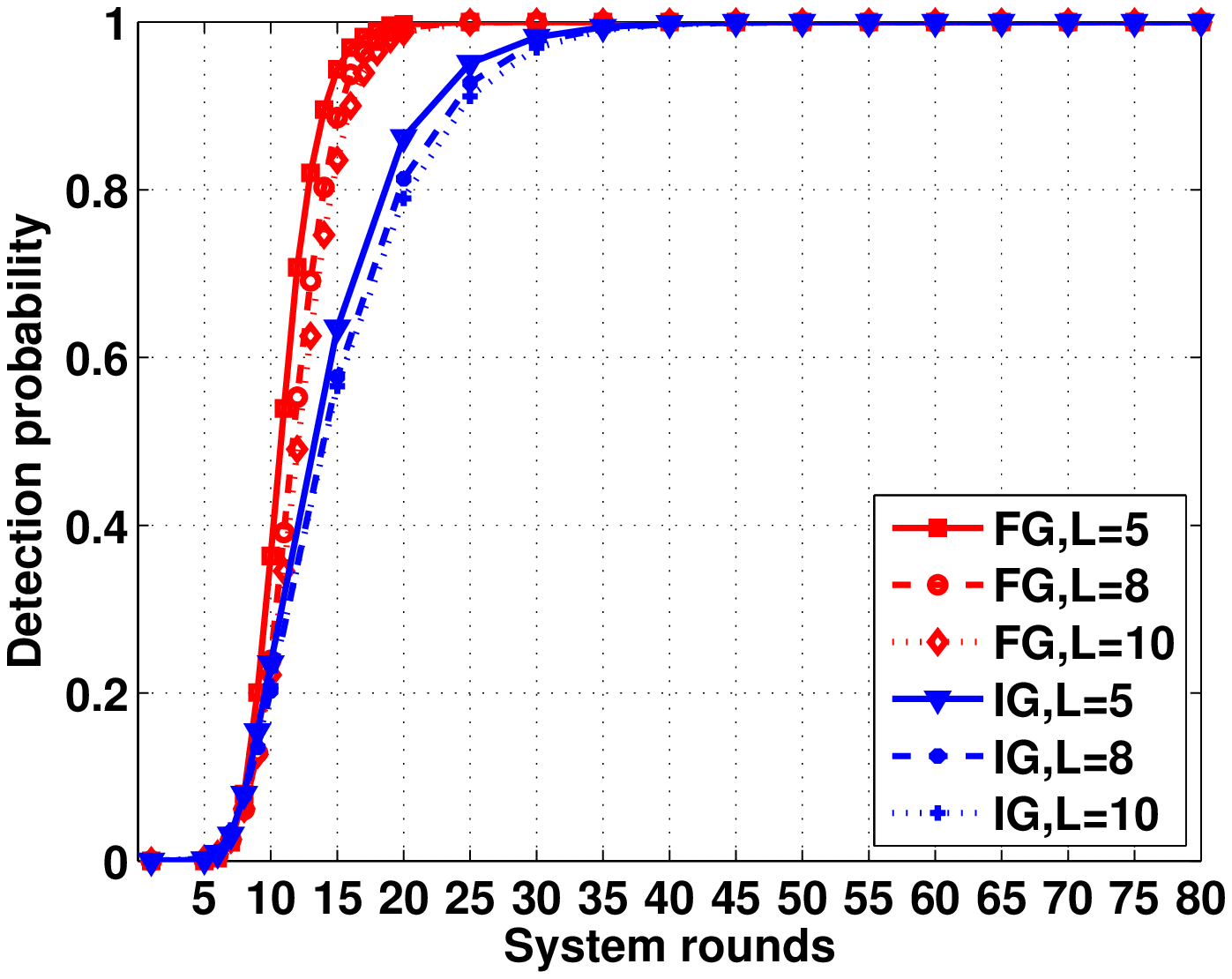}~\\
~(a)$L=5$ ~&~(b)$\alpha=0.7$~\\
\end{tabular}
\end{center}
\caption{Probability of defective sensor detection; Simulation results for irregular graphs ($k>3$) and random selection (RM) of $S=70$ sensors, $K=1$. (a) $L=5$ master sensors; (b) sensor participation constant $\alpha=qK=0.7$.}
\label{fig:70sens_irregular1_alpha0_7}
\end{figure*}
We then consider the case of multiple defective sensors. In Figs. \ref{fig:20sens_irregular_L10_K2} and \ref{fig:20sens_irregular1_alpha0_7_K2} we present results for the cases with two defective sensors ($K=2$) in networks of $20$ sensors. The results are given in terms of the average detection probability over dissemination rounds, for both fully and irregularly connected graphs. The master sensors are selected deterministically (DM) due to decoder design for multiple defective sensors identification. Note that this example violates the condition $K\ll S$ and the performance of the detection algorithm is pretty poor. In addition, results for $S=70$ and $K=2$ are depicted in Figs. \ref{fig:70sens_irregular_L10_K2} and \ref{fig:70sens_irregular1_alpha0_3_K2}. We focus on the evolution of the decoding probability and the average number of messages collected over rounds. From the evaluation it is clear that the detection performance is reasonable when the selected parameters value $(L,\alpha)$ favor diverse message generation. 
\begin{figure*}[htb]
\begin{center}
\begin{tabular}{cc}
~\includegraphics[width=  7cm]{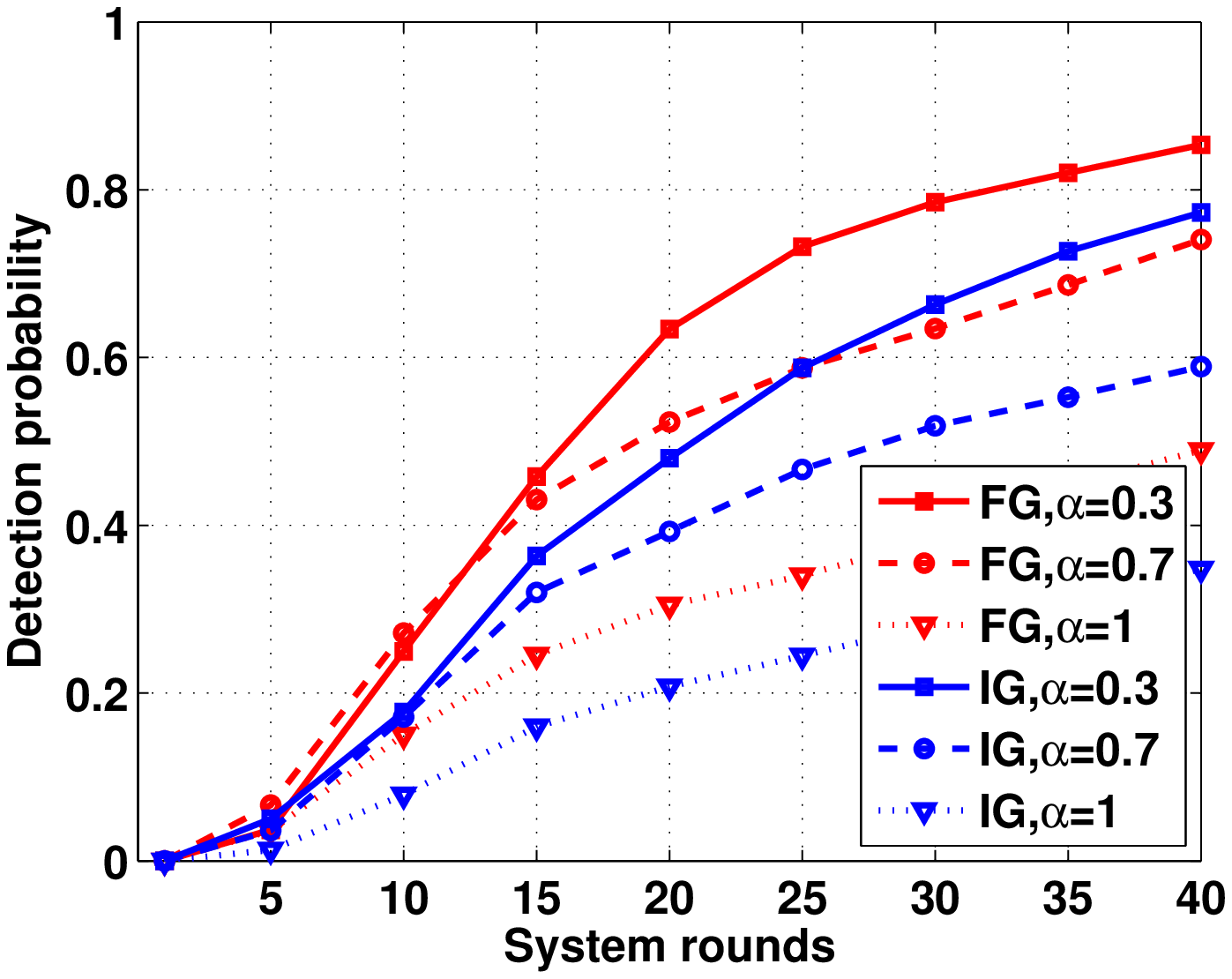}~&
~\includegraphics[width=  7cm]{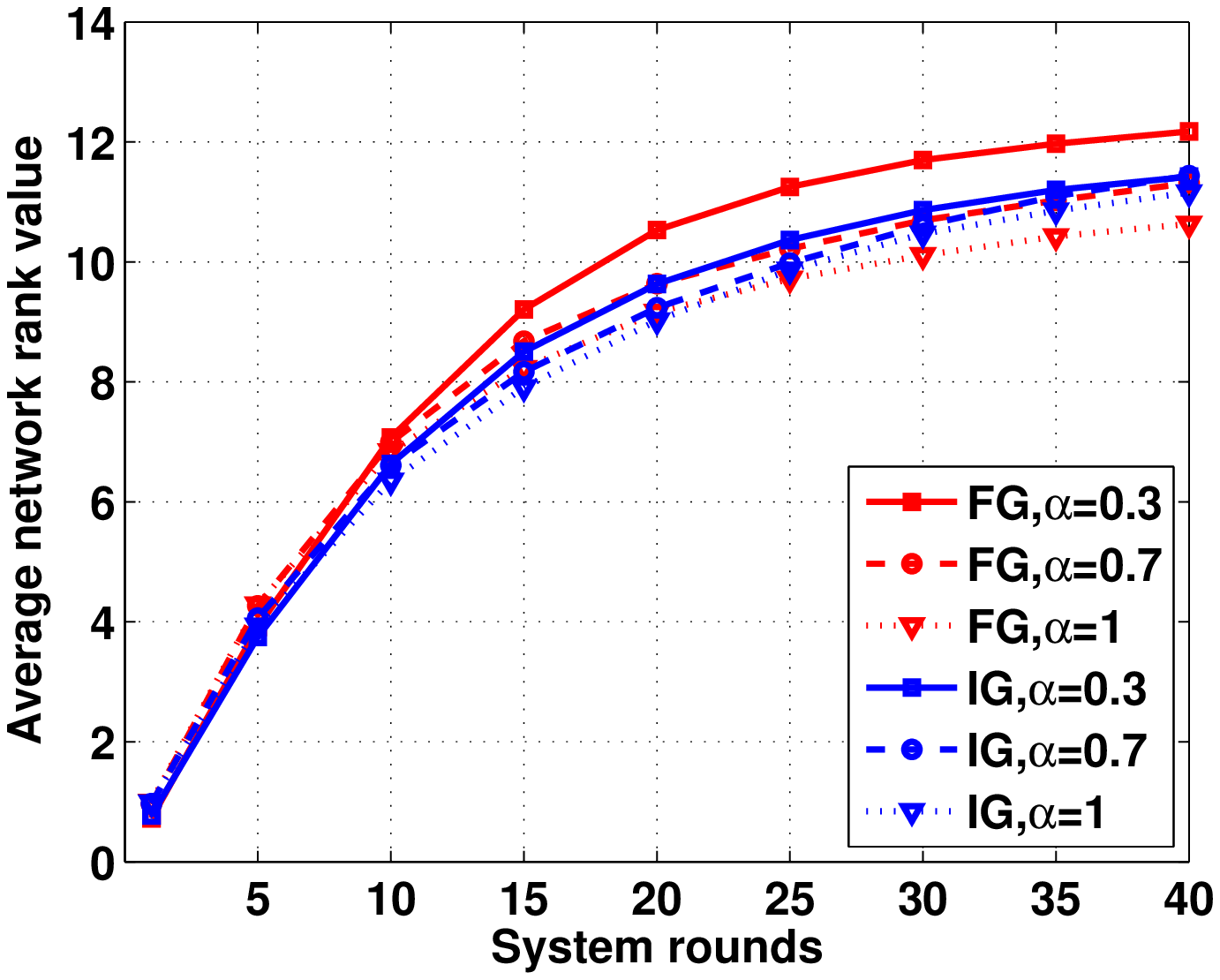}~\\
~(a)~&~(b)~\\\end{tabular}
\end{center}
\caption{Simulation results for fully connected (FG) and irregular graphs (IG), $d>3$ with $S=20$ sensors, $K=2$ and deterministic selection (DM) of $L=5$ master sensors: (a) Probability of defective sensor detection; (b) Average rank value.}
\label{fig:20sens_irregular_L10_K2}
\end{figure*}
\begin{figure*}[htb]
\begin{center}
\begin{tabular}{cc}
~\includegraphics[width=  7cm]{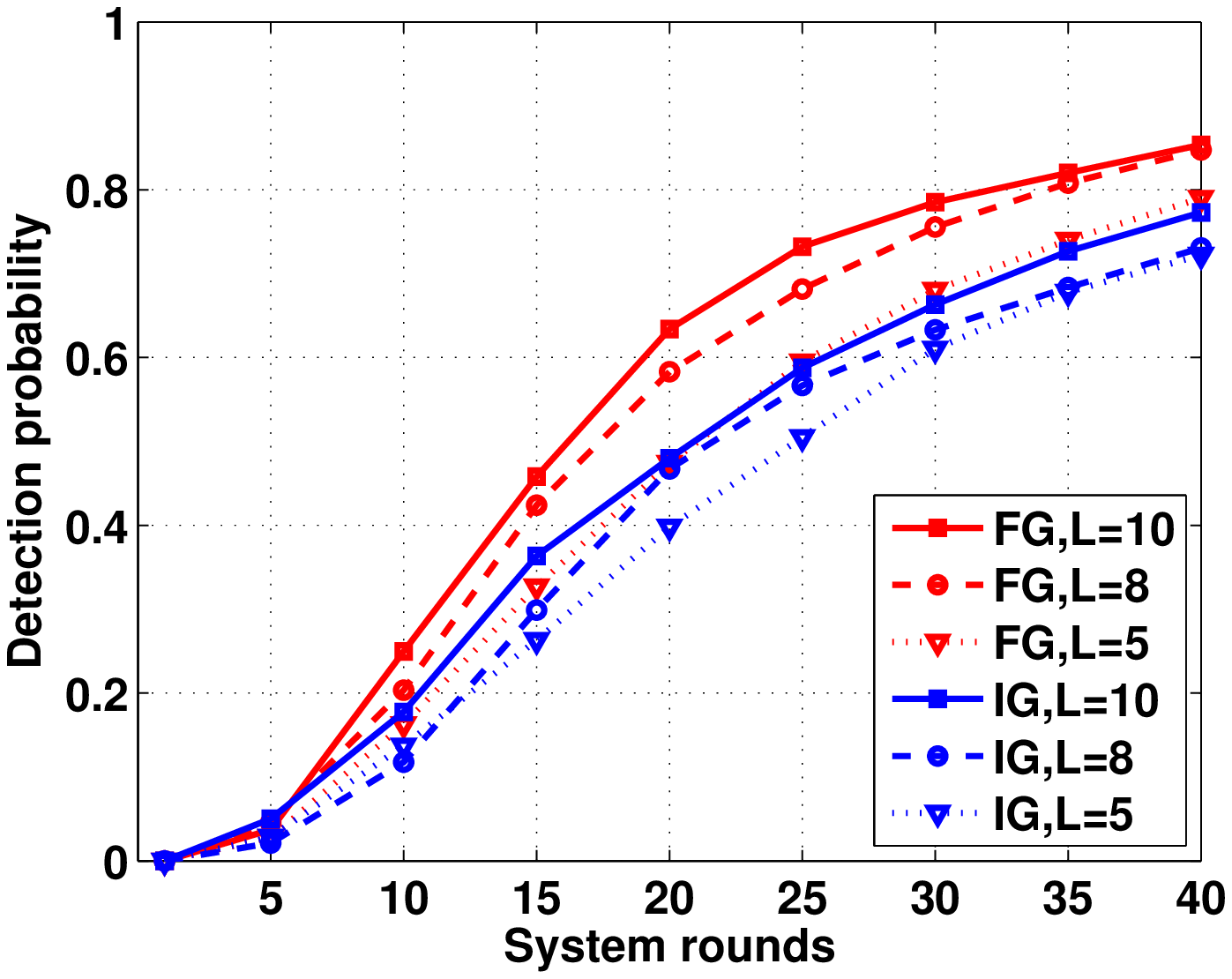}~&
~\includegraphics[width=  7cm]{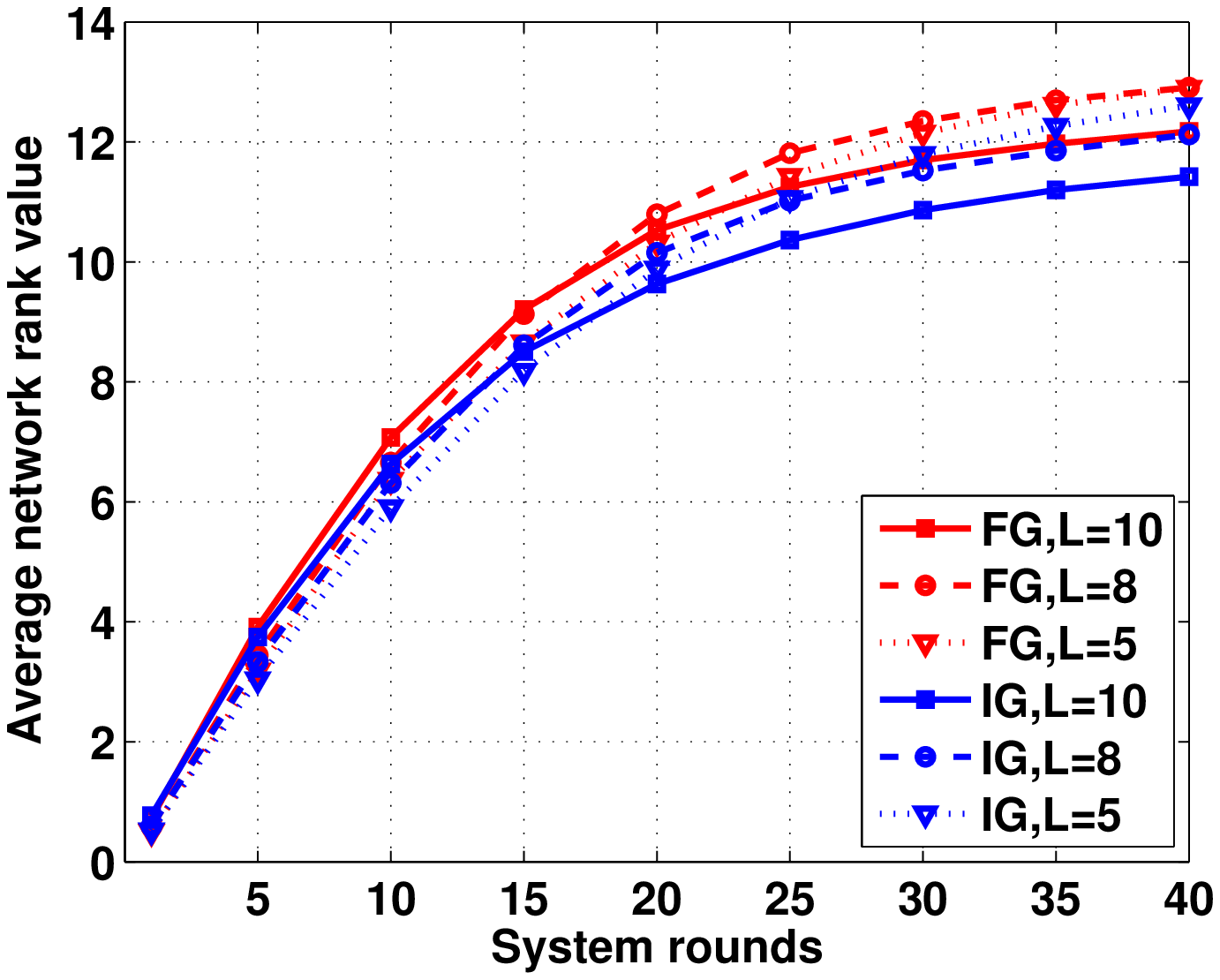}~\\
~(a)~&~(b)~\\\end{tabular}
\end{center}
\caption{Simulation results for fully connected (FG) and irregular graphs (IG), $d>3$ with $S=20$ sensors, $K=2$ and deterministic selection (DM) of master sensors, $\alpha=0.3$: (a) Probability of defective sensor detection; (b) Average rank value.}
\label{fig:20sens_irregular1_alpha0_7_K2}
\end{figure*}
\begin{figure*}[htb]
\begin{center}
\begin{tabular}{cc}
~\includegraphics[width=  7cm]{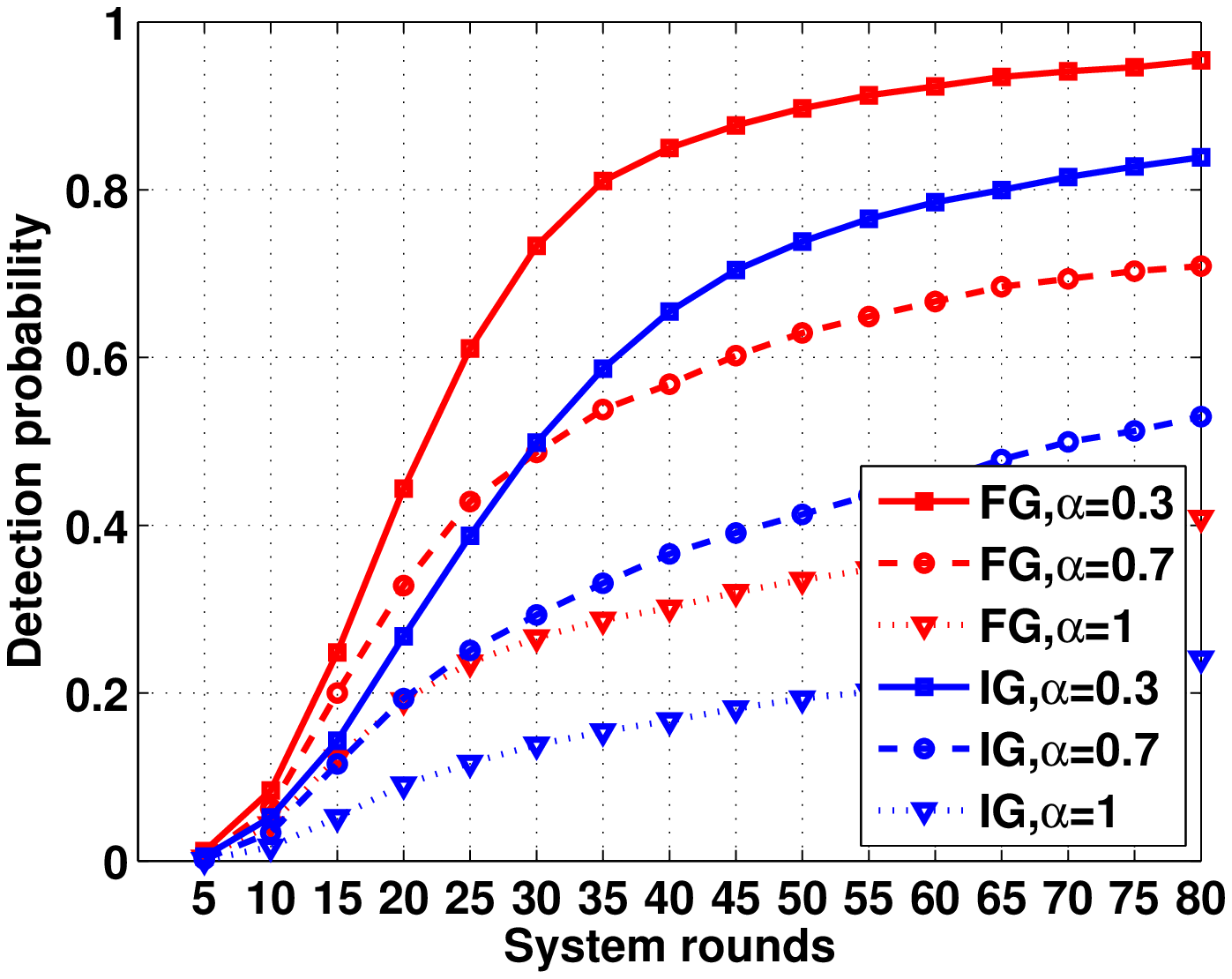}~&
~\includegraphics[width=  7cm]{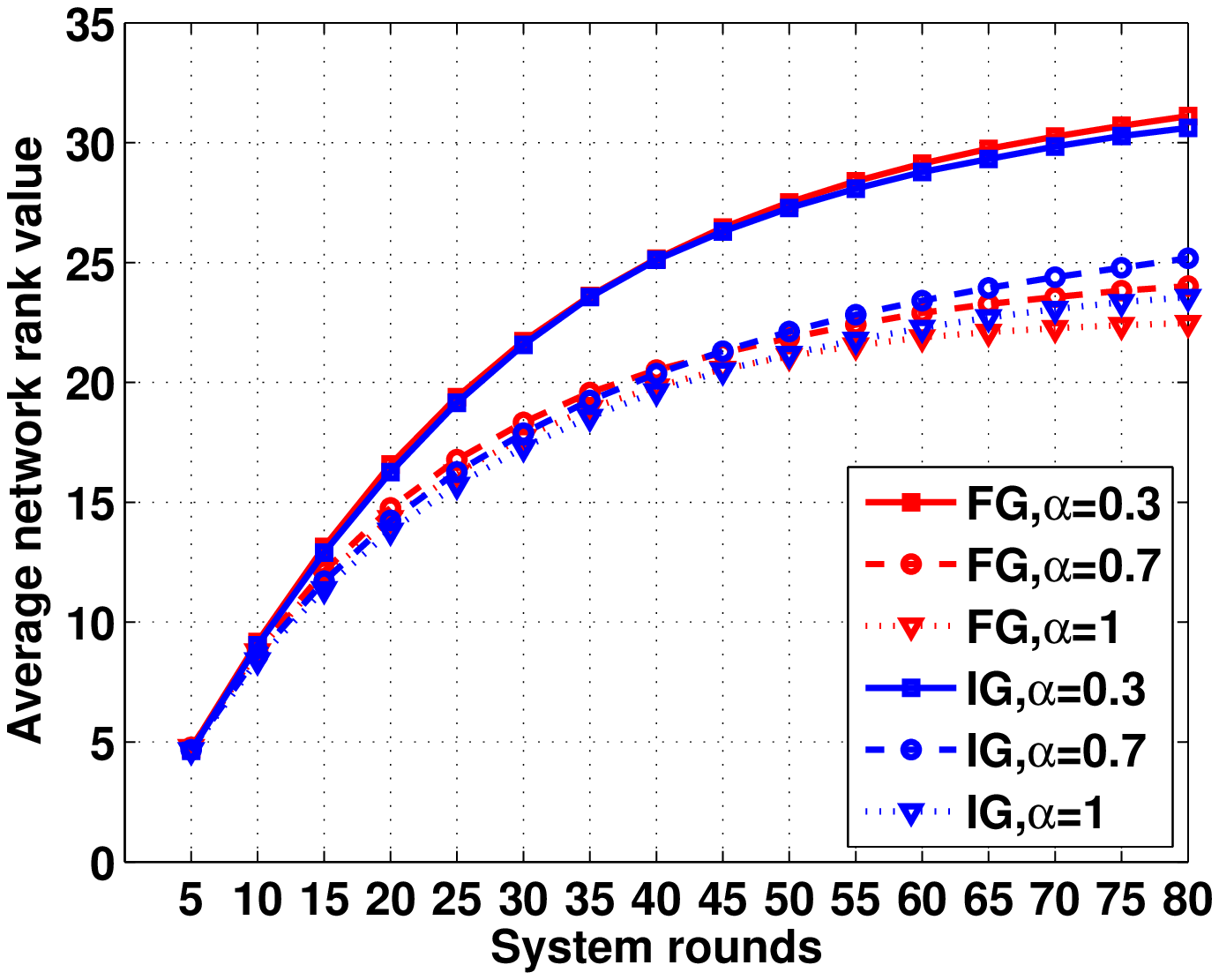}~\\
~(a)~&~(b)~\\\end{tabular}
\end{center}
\caption{Simulation results for fully connected (FG) and irregular graphs (IG), $d>3$ with $S=70$ sensors, $K=2$ and deterministic selection (DM) of $L=10$ master sensors: (a) Probability of defective sensor detection (b) Average rank value.}
\label{fig:70sens_irregular_L10_K2}
\end{figure*}
\begin{figure*}[htb]
\begin{center}
\begin{tabular}{cc}
~\includegraphics[width=  7cm]{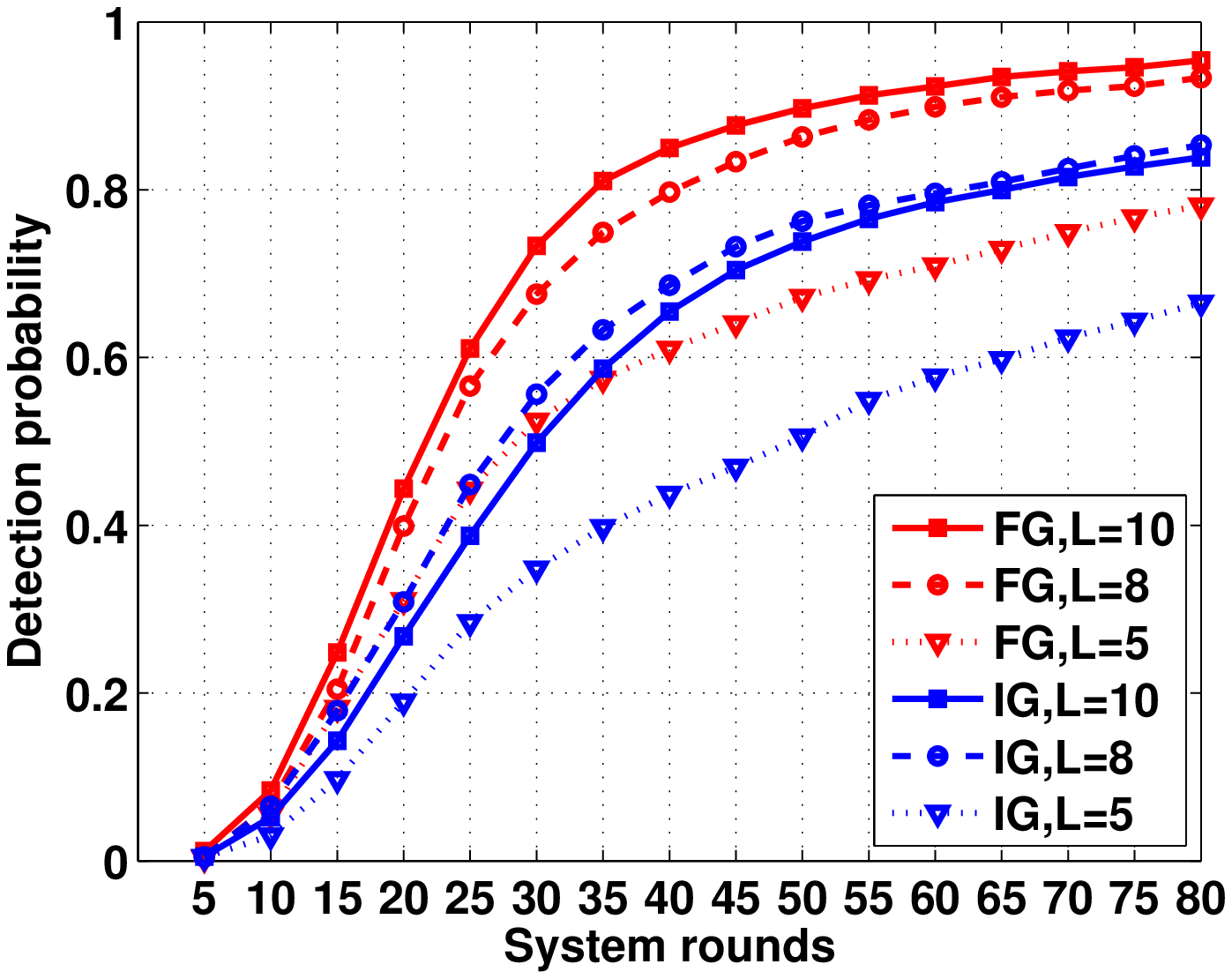}~&
~\includegraphics[width=  7cm]{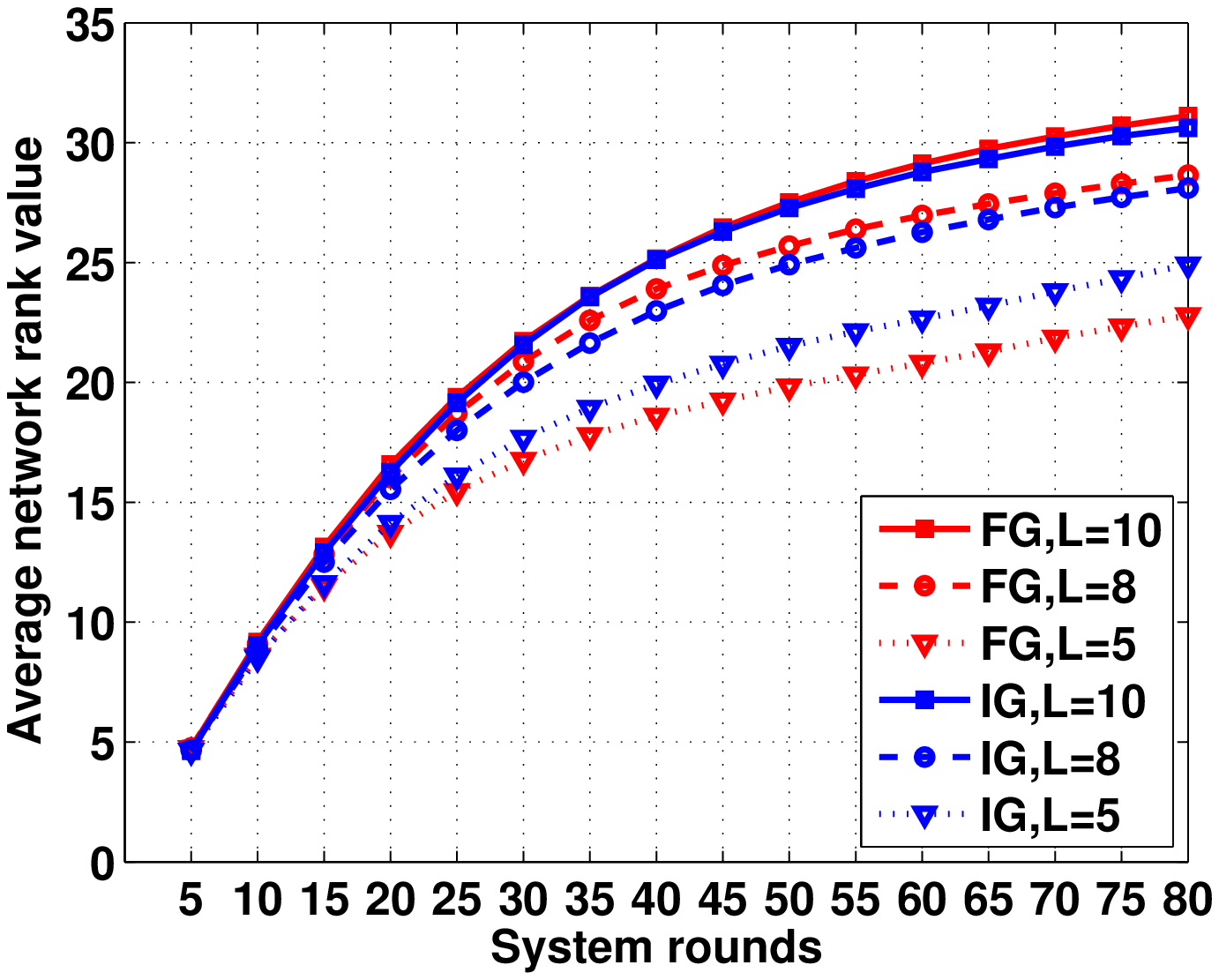}~\\
~(a)~&~(b)~\\\end{tabular}
\end{center}
\caption{Simulation results for fully connected (FG) and irregular graphs (IG), $d>3$ with $S=70$ sensors, $K=2$ and deterministic selection (DM) of master sensors, $\alpha=0.3$: (a) Probability of defective sensor detection; (b) Average rank value.}
\label{fig:70sens_irregular1_alpha0_3_K2}
\end{figure*}

In \cite{Cheraghchi:11}, a centralized system has been proposed, which can be considered as dual to fully connected networks with centralized tests (single master sensor that covers all the network). For comparison reasons, we compute the required number of measurements for networks with: $(S=20,K\in\{1,2\},p\in(0.9-1), q\in(0.15-0.3),pf_1=0.01,pf_2=0.01)$ and $(S=70,K\in\{1,2\},p\in(0.9-1), q\in(0.15-0.3),pf_1=0.01,pf_2=0.01)$. The results are reported in Table \ref{tab:theory_requir}. We observe that the worst case analysis leads to higher number of dissemination rounds than the real ones. However, these values decrease relatively to the growth of number of sensors in the network. Simulations show that in practice the required measurements are significantly fewer.
\begin{table}
\begin{center}
\caption{The theoretical measurement requirements for networks with $S$ sensors.}
\begin{tabular}{|c|c|c|c|c|}
\hline
&\multicolumn{2}{|c|}{S=20} &\multicolumn{2}{|c|}{S=70}\\
&$K=1$&$K=2$&$K=1$&$K=2$\\
\hline
$p\in(0.9-1)$&~130  & (115-244)& (174-217)&(125-284)\\
\hline
\end{tabular}\label{tab:theory_requir}
\end{center}
\end{table}

Detection probability comparison of the proposed method with several detection methods are illustrated in Figs. \ref{fig:20sens_dissemination} and \ref{fig:70sens_dissemination}, for $20$ and $70$ sensors respectively. The proposed scheme outperforms all other methods. Note that the number of necessary rounds in RWGP scheme is large compared to the other schemes, while RW needs higher communication overhead for dissemination due to the transmission of raw sensor measurements. Average rank values over the network rounds are illustrated in Fig. \ref{fig:sens_dissemination_rank}.
We observe that for the fixed detection probability $p=0.9$ for the network with $S\in \{20,70\}$ sensors the average number of system rounds required for the proposed method is approximately $\{13,17\}$ and $\{15,20\}$, respectively. The number of system rounds required by the other algorithms to reach the same probability of performance is higher, especially for the network with $70$ sensors. 

\begin{figure*}[htb]
\begin{center}
\begin{tabular}{cc}
~\includegraphics[width=  7cm]{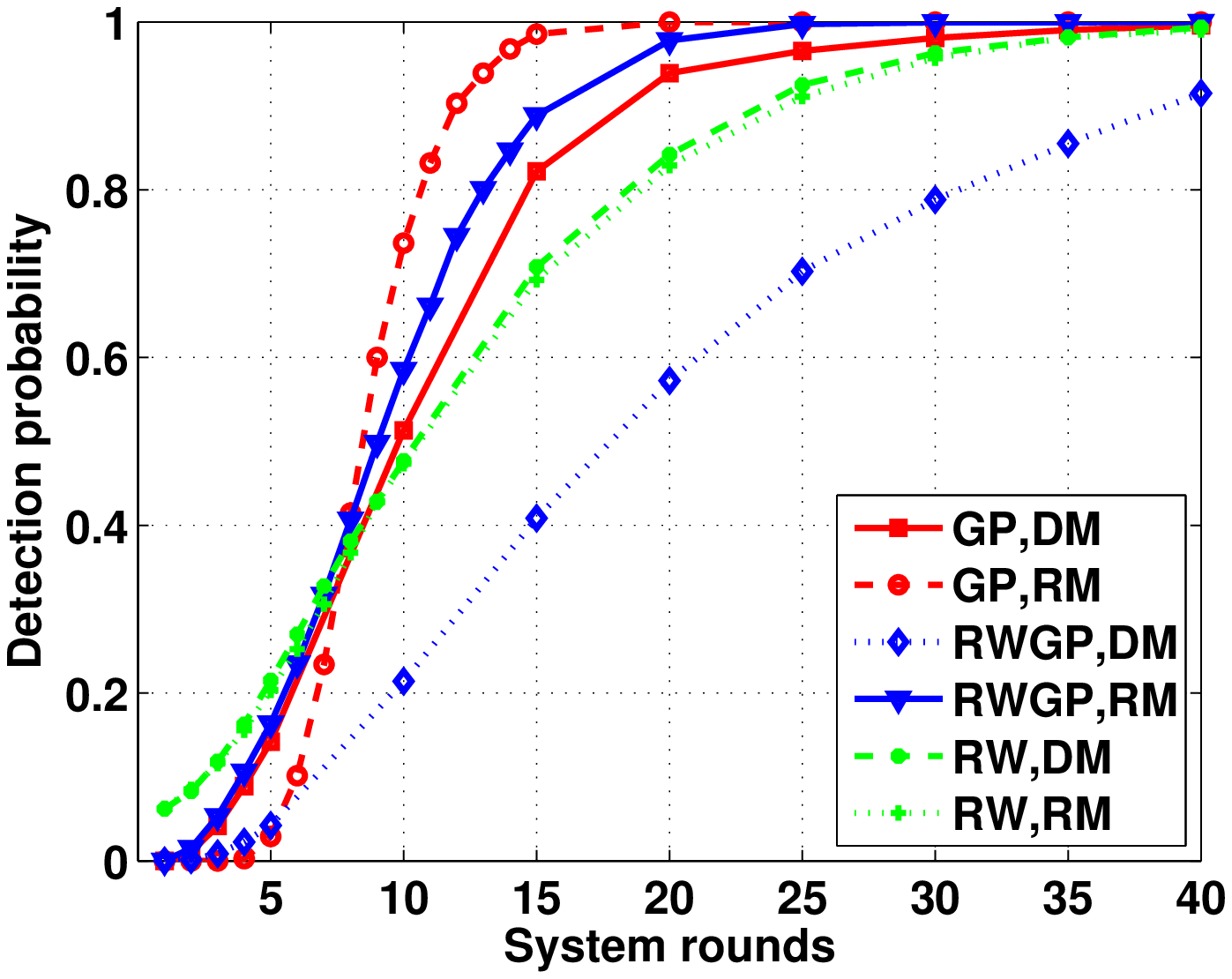}~&
~\includegraphics[width=  7cm]{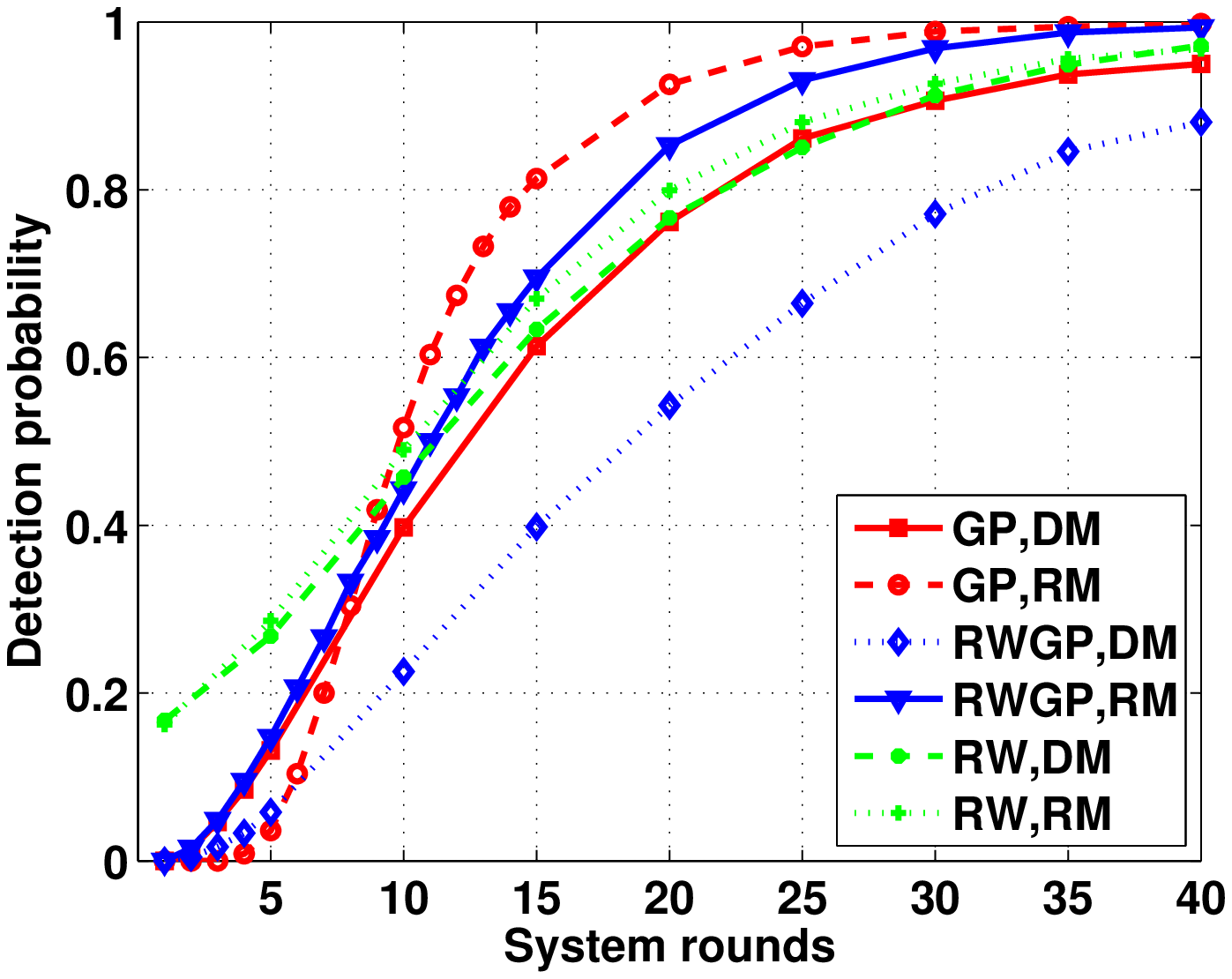}~\\
~(a)~&~(b)~\\\end{tabular}
\end{center}
\caption{Comparison in terms of detection performance for networks with $S=20$ sensors and $L=5$ master sensors. Abbrevations: GP: Proposed method, RWGP: Random Walk rounds with the gossip algorithm with pull protocol dissemination, RW: Random Walk in the network initiated at $L$ sensors. (a) fully connected sensor network; (b) irregular sensor network.}
\label{fig:20sens_dissemination}
\end{figure*}
\begin{figure*}[htb]
\begin{center}
\begin{tabular}{cc}
~\includegraphics[width=  7cm]{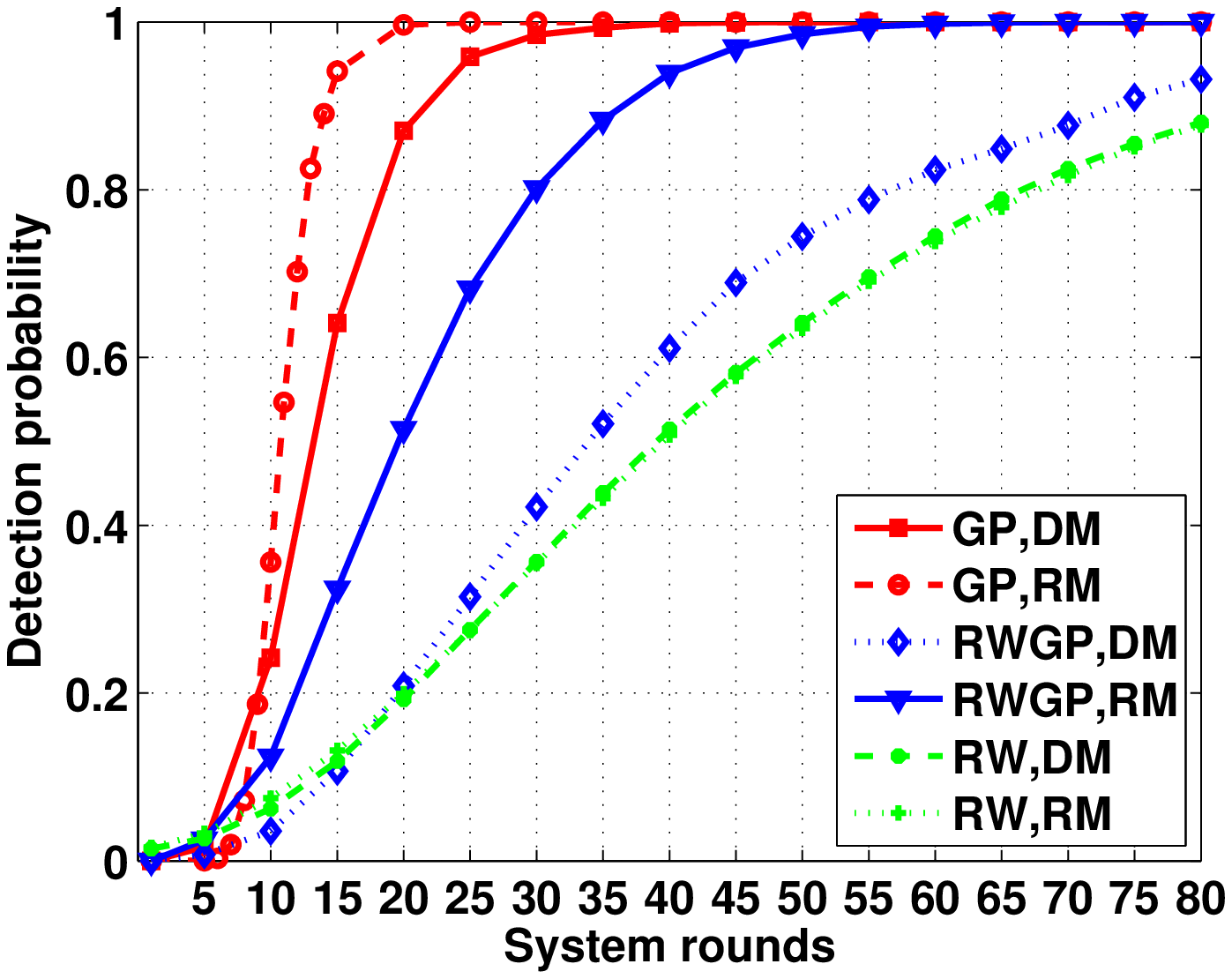}~&
~\includegraphics[width=  7cm]{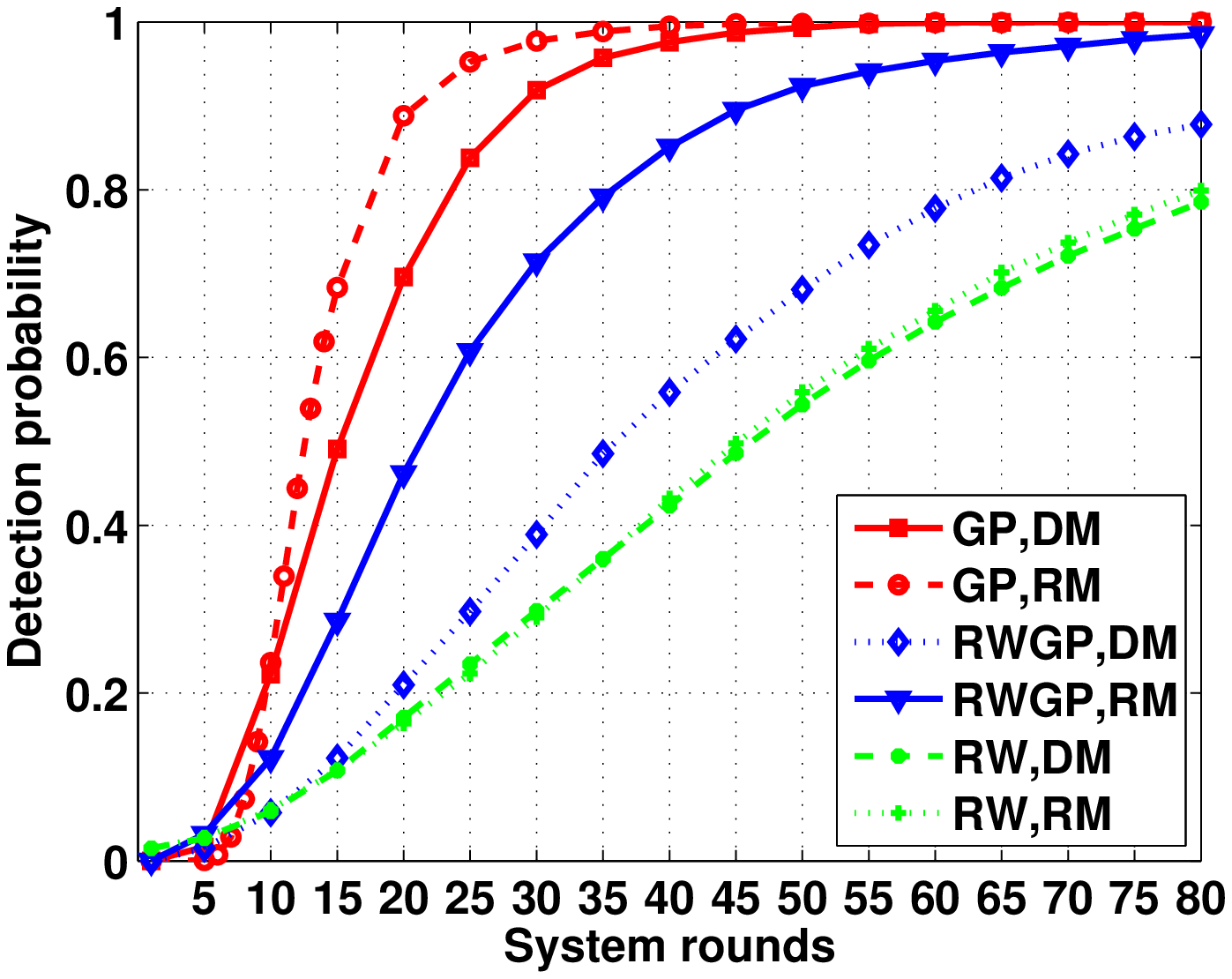}~\\
~(a)~&~(b)~\\\end{tabular}
\end{center}
\caption{Comparison in terms of detection performance for networks with $S=70$ sensors and $L=5$ master sensors. Abbrevations: GP: Proposed method, RWGP: Random Walk rounds with the gossip algorithm with pull protocol dissemination, RW: Random Walk in the network initiated at $L$ sensors. (a) fully connected sensor network; (b) irregular sensor network.}
\label{fig:70sens_dissemination}
\end{figure*}

\begin{figure*}[htb]
\begin{center}
\begin{tabular}{cc}
~\includegraphics[width=  7cm]{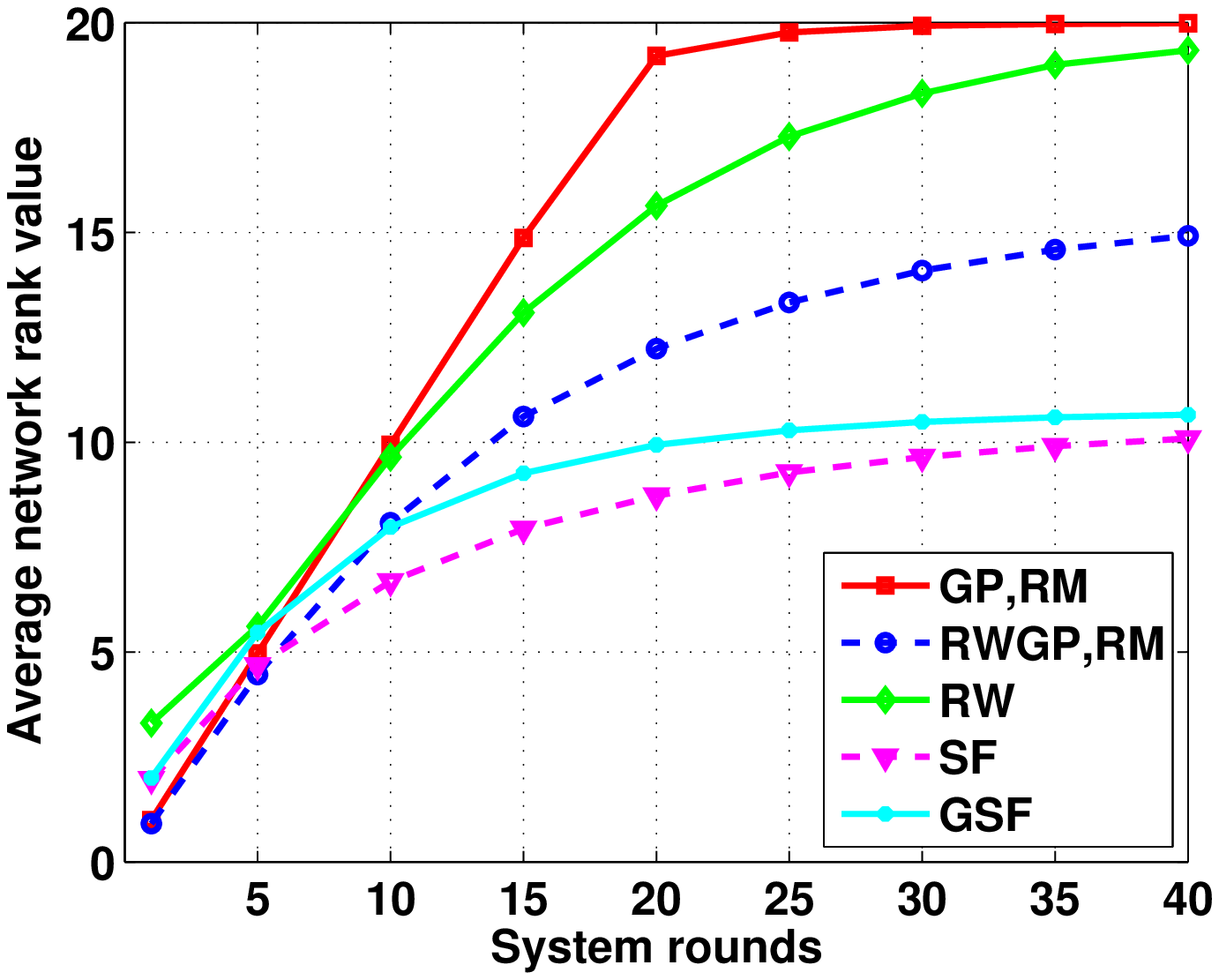}~&
~\includegraphics[width=  7cm]{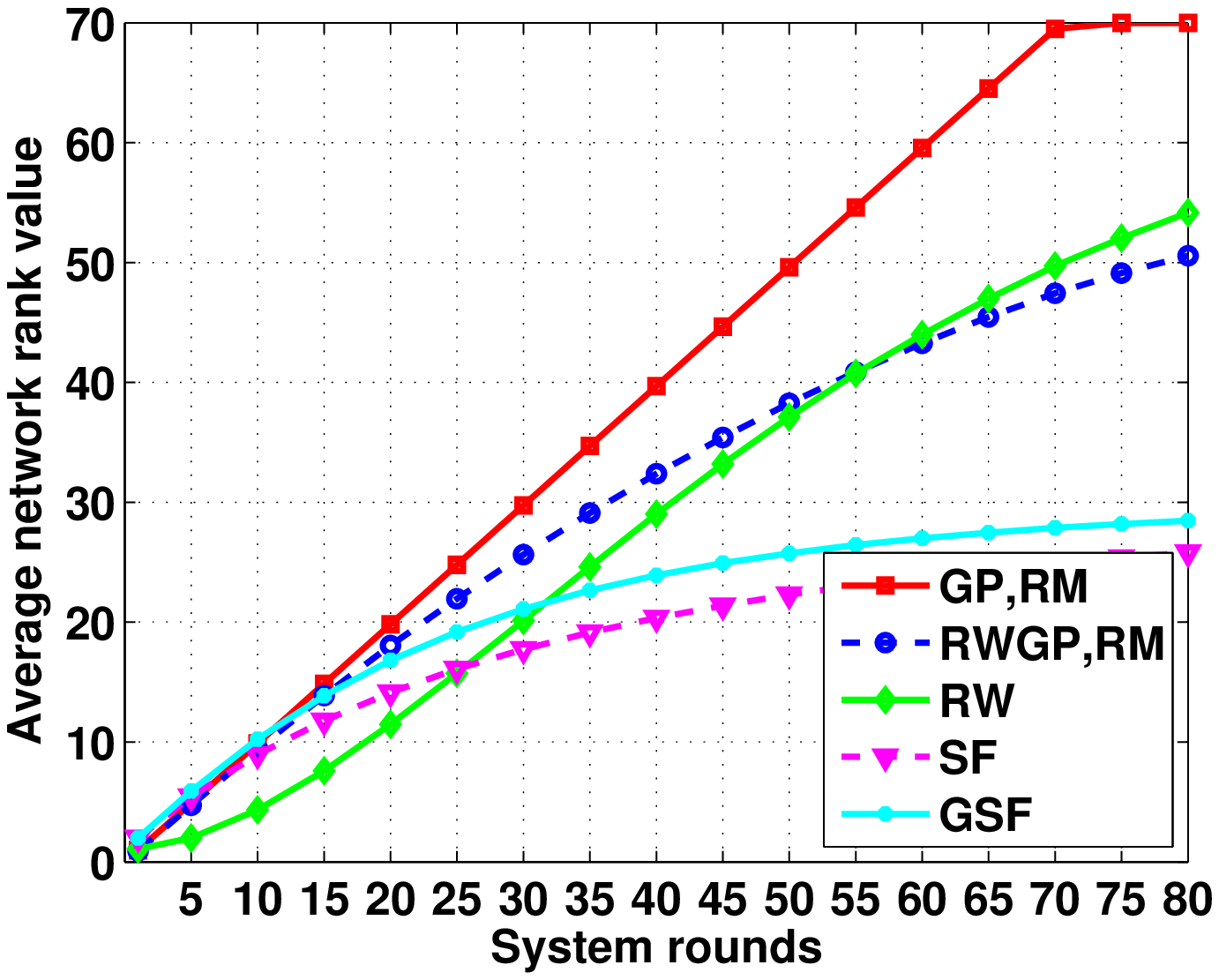}~\\
\hspace{-0.8cm}~(a)~&~(b)~\\
\end{tabular}
\end{center}
\caption{Average rank value for irregular sensor networks with $L=5$ master sensors: (a) $S=20$ sensors (b) $S=70$ sensors. Abbrevations: GP: Proposed method, RWGP: Random Walk rounds with the gossip algorithm with pull protocol dissemination, RW: Random Walk in the network initiated at $L$ sensors, SF: pull store-and-forward algorithm with a random choice of transmission message available at sensor, GSF: pull store-and-forward algorithm with a greedy choice of a transmission message available at sensor.}
\label{fig:sens_dissemination_rank}
\end{figure*}

\subsection{Communication overhead}\label{ssec:phaseI}
For the sake of completeness, we analyze the communication costs of the proposed gossiping protocol and compare it with all other schemes under comparison. Let $R_d$ and $I_d$ denote the number of bits needed for raw measurements transmission and sensor identifier, respectively. Recall that the tuple $(S, L, L_n, n, \tau)$ stands for the number of sensors in the network, the number of master sensors (clusters), the number of neighbors that each master is connected with, the average number of sensors per cluster ($n=S/L$) and the total number of transmission rounds.
 
During the first phase of GP, the master sensors receive raw measurements from their neighbors. Thus, $L_n\cdot R_d$ bits are used for communicating these values. Further, the master sensors create binary messages and send them to their neighbors. Every neighbor requires knowledge about the identifier of sensors that participate in a test, thus the cost is $I_d\cdot \lceil q(L+L_n)\rceil$ bits, plus an additional bit in each message for sending the outcome result. Hence, the overall bit consumption is $L_n R_d +L_n( I_d \lceil q(L+L_n)\rceil) +1)$. In the message exchange phase $S(1+S)$ bits are required, from which $S+1$ bits are reserved for the test outcome and the test matrix row $\mathbf{W}$. Note that this analysis includes the full vector size and it can be further compressed. The overall number of transmitted bits over $\tau$ rounds is given by:
\begin{equation}
n^b_{GP} =\tau \left[ L_n \{R_d+I_d \lceil q(L+L_n)\rceil +1\} + S(1+S)\right].
\end{equation}

We compared the communication costs of GP with the one of RWGP that takes place also in two phases. The first phase represents the random walk message collection, while the second is equivalent to the GP algorithm. Note that in the special case when RWGP and GP collect exactly the same data, they have identical decoding performance. However, if RWGP visits some sensors several times (more probable in irregular networks with a smaller connectivity degree), it performs worse than GP. In typical simulations, a random walk of RWGP terminates after $n^{th}$ transmission round, where $n$ is the number of elements per cluster in GP. RWGP transmits raw measurements, which results in $R_d+2R_d+\dots +nR_d=\frac{(1+n)R_d}{2}$ bits. Therefore, the communication cost for RWGP is given by:
\begin{equation}
n^b_{RWGP} =\tau \left[ \frac{(n+1) R_d}{2} L + S(1+S)\right].
\end{equation}

The bit transmission requirements for the $RW$ algorithm is equivalent to that of the first step of RWGP, since it transmits also raw data. The detection is performed at nodes by comparison of known sensor values at that moment, without message design step. The number of transmitted bits is equal to: $n^b_{RW} =\tau \frac{(n+1) R_d}{2} L$. Recall that for transmission of a message to all the nodes in a fully connected graph, one requires $\log{S}$ transmissions. Therefore, the SF algorithm requires in total $n^b_{SF}=\tau R_d \log S$ bits.

The comparison between the proposed method and all other  schemes regarding the bits spent for communication is illustrated in Fig. \ref{fig:comm_overhead} for a fully connected graph. Note that the proposed algorithm in this setup requires only $t=15$ rounds for efficient detection (Fig. \ref{fig:20sens_randclusterno1}), but it consumes approximately three times more communication overhead compared to that of RWGP algorithm. However, due to the specific collection approach (hops), the duration of one transmission round of RWGP lasts ten times longer than that of the proposed algorithm. From the figure we can observe that the RW algorithm has very small communication overhead. However, it requires significantly higher number of rounds ($S\log S\approx130$ rounds) compared to the detection time of the proposed GP algorithm. Overall, the proposed GP scheme is able to compete with the other schemes in terms of bits used untill detection. 

\begin{figure*}[htb]
\begin{center}
\begin{tabular}{cc}
~\includegraphics[width=  7cm]{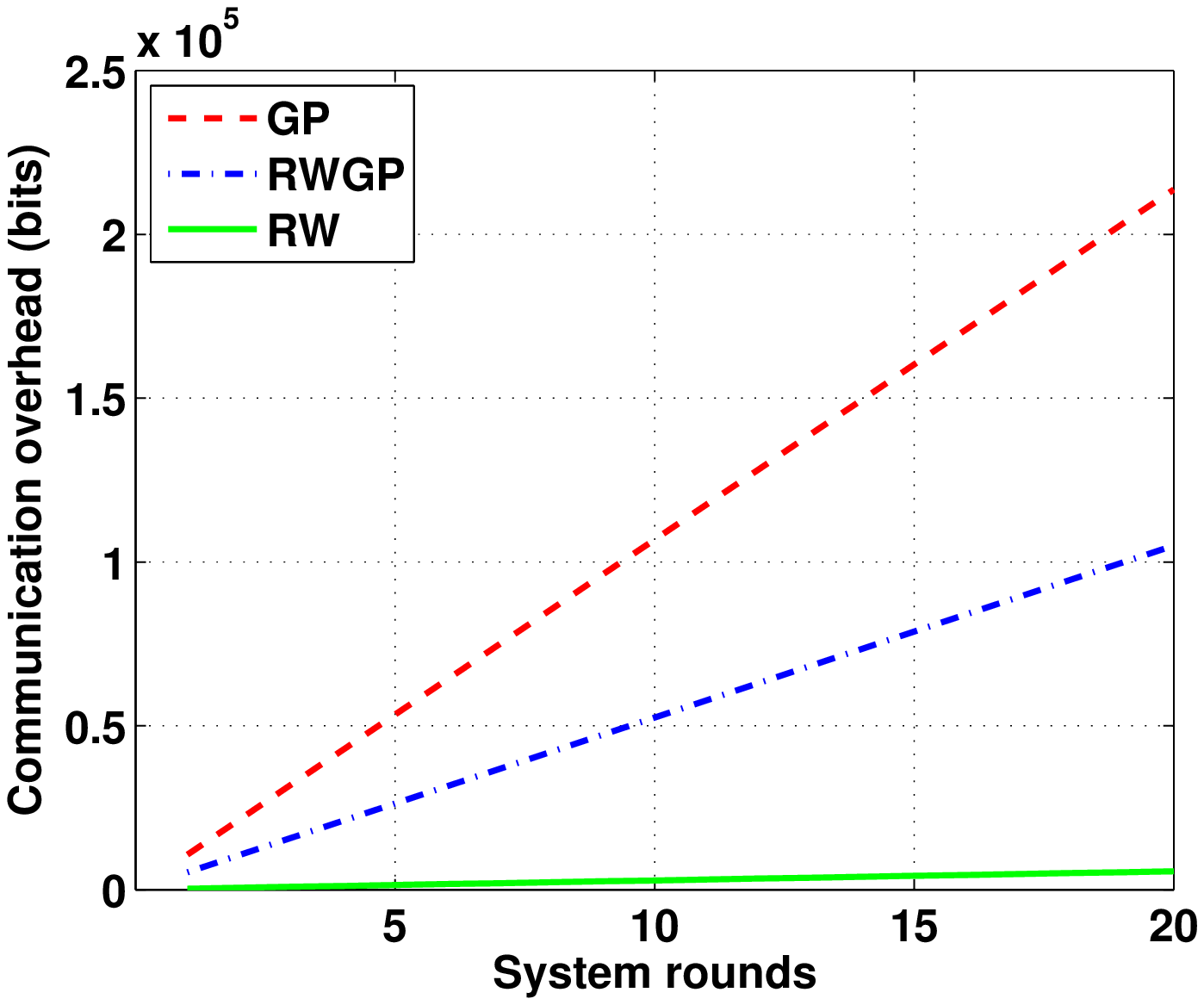}~&
~\includegraphics[width= 7cm]{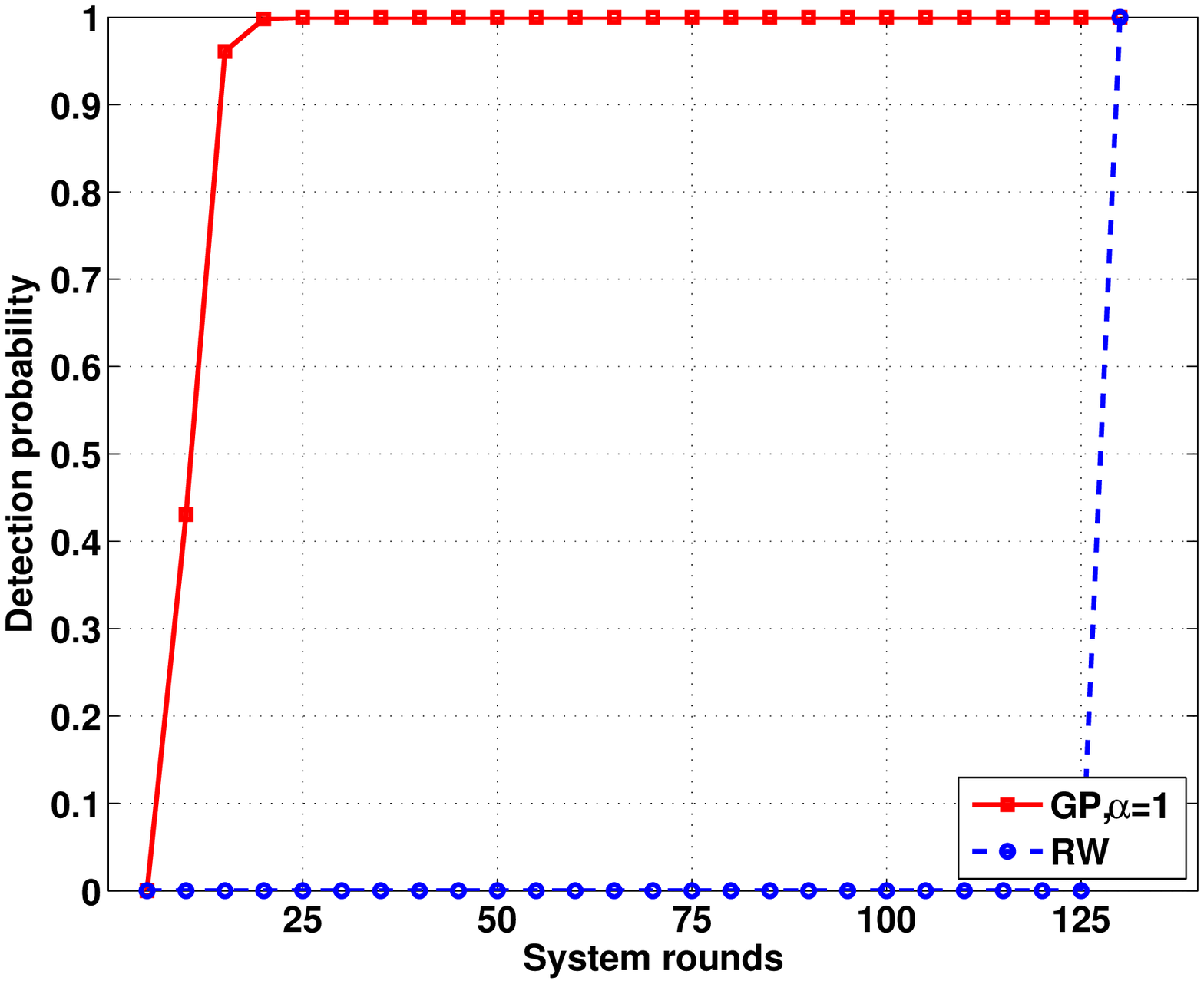}~\\
~(a)~&~(b)~\\
\end{tabular}
\end{center}
\caption{(a)Comparison of the communication overhead for several algorithms, for the following parameter values: $(S,L,L_n,\alpha,R_d,I_d,\tau)=(70,5,50,0.7,7,7,80)$. Graph is fully connected. Abbreviations: GP: Proposed method, RWGP: Random Walk rounds with gossip algorithm and pull protocol dissemination, RW: Random Walk in the network initiated at $L$ sensors. (b) Comparison of detection vs. number of rounds of the distributed detection scheme.}
\label{fig:comm_overhead}
\end{figure*}

\section{Conclusion}\label{sec:discuss}
In this work, we have addressed the problem of distributed failure detection in sensor networks. We have proposed a novel distributed algorithm that is able to detect a small number of defective sensors in a networks. We have designed a probabilistic message propagation algorithm that allows the use of a simple and efficient distance decoder at sensors. The transmitted messages are formed from local sensor observations and they are communicated using a gossip algorithm. We have derived for the worst case scenario the lower bound on the required number of linearly independent messages that sensors need to collect per cluster to ensure detection of one defective sensor with high probability. We have shown experimentally that this number is quite smaller in practice, even for the small size networks, which confirms the validity of the theoretical bound. The experimental results have shown that the proposed method outperforms other detection schemes in terms of successful detection probability. The convergence rate is very fast, which largely compensates for the higher communication overhead.

\bibliographystyle{IEEEbib}
\bibliography{refs}

%
\appendix \section{Appendix}\label{sec:appendix}
\subsection{Model for probability $P(q|m)$}\label{ssec:p(q|m)}

$P(q|m)$ models the probability of event that multiple defective sensors are present in the same cluster but only a subset of defective sensors participates in the test. This event introduces errors while detection of defective sensors. Recall that sensors participate in the test with the probability $q$. For $m$ defective sensors possible message realizations are given with elements of the polynomial $(q+(1-q))^m$. This polynomial represents the the binomial expansion of the form $(x+y)^m$, with $x=q$ and $y=(1-q)$. Polynomial expansion is equal to $(x+y)^m=c_0 x^m+c_1 xy^{m-1}+\dots+c_m y^m$ and the coefficients $c_i=\binom{m}{i}$ represent the numbers of $i$-th row of Pascal's triangle. Messages that do not cause decoding error are the messages of all zeros and of all ones. These messages occur with probabilities $q^m$ and $(1-q)^m$, respectively and they have coefficients equal to $1$. Note that $(q+(1-q))^m=1^m=1$ and that probability of error event is therefore equal to:
\begin{equation}
P(q|m) =\frac{1-q^m-(1-q)^m}{(q+(1-q))^m} = 1-q^m-(1-q)^m.
\end{equation}

\end{document}